\setlist{nosep}
\newtheorem*{mdresult}{Result}
\title{\textmd{\bf Learning Optimal Posted Prices for a Unit-Demand Buyer
  }}
\date{\today}
\author{Yifeng Teng\footnote{Google Research. Email: yifengt@google.com.}\and  Yifan Wang\footnote{School of Computer Science, Georgia Institute of Technology, Atlanta, GA, USA. Email: ywang3782@gatech.edu. Supported in part by NSF awards CCF-2327010 and CCF-2440113.}}
\newtheorem{Theorem}{Theorem}[section]
\newtheorem{Lemma}[Theorem]{Lemma}
\newtheorem{Definition}[Theorem]{Definition}
\newtheorem{Claim}[Theorem]{Claim}
\newcommand{\parta}{(\text{\uppercase\expandafter{\romannumeral1}})}
\newcommand{\partb}{(\text{\uppercase\expandafter{\romannumeral2}})}
\newcommand{\partc}{(\text{\uppercase\expandafter{\romannumeral3}})}
\newcommand{\rev}{\mathsf{Rev}}
\newcommand{\E}{\mathbb{E}}
\newcommand{\OTild}{\widetilde{O}}
\newcommand{\one}{\mathbf{1}}%
\renewcommand{\Pr}[2][]{\mbox{\rm\bf Pr}_{#1}\left[#2\right]}%
\newcommand{\pr}{\mathbf{Pr}} 
\newcommand{\ignore}[1]{{}}
\newcommand{\bD}{\boldsymbol{\mathrm{D}}}
\newcommand{\tbD}{\widetilde{\boldsymbol{\mathrm{D}}}}
\newcommand{\tbG}{\widetilde{\boldsymbol{\mathrm{G}}}}
\newcommand{\tbH}{\widetilde{\boldsymbol{\mathrm{H}}}}
\newcommand{\tD}{\widetilde{D}}
\newcommand{\tH}{\widetilde{H}}
\newcommand{\bG}{\boldsymbol{\mathrm{G}}}
\newcommand{\bH}{\boldsymbol{\mathrm{H}}}
\newcommand{\bbD}{\breve{\boldsymbol{\mathrm{D}}}}
\newcommand{\ttheta}{\tilde{\theta}}
\newcommand{\bE}{\boldsymbol{\mathrm{E}}}
\newcommand{\SUDPP}{BUPP~}
\newcommand{\bp}{\boldsymbol{p}}
\newcommand{\poly}{\mathsf{poly}}
\newcommand{\bv}{\boldsymbol{v}}
\newcommand{\eps}{\epsilon}
\newcommand{\erev}{\mathsf{ExAnteRev}}
\newcommand{\IGNORE}[1]{}
\begin{document}
%\begin{spacing}{1.1}

\setlength{\abovedisplayskip}{2pt}
\setlength{\belowdisplayskip}{2pt}

\maketitle \thispagestyle{empty}

\begin{abstract}

We study the problem of learning the optimal item pricing for a unit-demand buyer with independent item values, and the learner has query access to the buyer's value distributions. We consider two common query models in the literature: the sample access model where the learner can obtain a sample of each item value, and the pricing query model where the learner can set a price for an item and obtain a binary signal on whether the sampled value of the item is greater than our proposed price. In this work, we give nearly tight sample complexity and pricing query complexity of the unit-demand pricing problem.

\end{abstract}

\newpage

\section{Introduction}

Multi-item mechanism design has been studied extensively in the literature of economics and computation in the past two decades. Revenue optimal mechanisms are known to be extremely complicated even in the single-buyer setting, with many unnatural properties such as randomness \cite{thanassoulis2004haggling}, requiring an infinite number of menus to describe \cite{briest2015pricing,hart2019selling,daskalakis2013mechanism}, hardness to compute \cite{daskalakis2014complexity,chen2015complexity}, revenue non-monotonicity \cite{hart2015maximal}, and revenue discontinuity \cite{psomas2019smoothed, chawla2020menu}. As a result of such complexity in the analysis and implementation of the optimal mechanisms, many recent works in the multi-dimensional mechanism design setting have been focused on approximating the optimal revenue via simple mechanisms. One particularly important simple mechanism that is both studied in the literature and implemented in real-world scenarios is the item pricing mechanism. 

In this paper, we study the item pricing mechanisms for a single unit-demand buyer who wants to obtain at most one item. A monopoly seller has $n$ items to sell to a single buyer with quasi-linear utility. In an item pricing mechanism, the seller posts a price $p_i$ for each item $i\in[n]$ to the buyer. The buyer has bounded value $v_i\in[0,1]$ for each item $i$, and given the posted prices, the buyer chooses the item $i\in[n]$ with the highest non-negative utility $v_i-p_i$. We assume that the buyer has the item values $\bv=(v_1,\cdots,v_n)$ drawn from a product distribution $\bD=D_1\times D_2\times\cdots\times D_n$. Compared to the optimal revenue-maximizing mechanism, which may require displaying a menu of options of even infinite size, the item pricing mechanism is considerably simpler to implement in practice. Furthermore, when selling to a unit-demand buyer, it's known that item pricing achieves at least $1/3$ of the optimal revenue \cite{JL-FOCS24}. This simplicity and guaranteed performance make it a compelling choice.

Traditional research on Bayesian revenue maximization usually assumes that the seller has full information on the buyer's underlying value distribution. Recent literature has been studying more realistic models about how the seller gets access to the buyer's value information. One particularly important model introduced by \cite{cole2014sample} is the \textit{sample access} model: instead of directly observing buyer's value distribution $\bD$, the seller can get i.i.d. samples $\bv_1,\bv_2,\cdots,\bv_N\sim \bD$ from buyer's underlying value distribution. The \textit{sample complexity} of the revenue maximization problem is defined by the number of samples $m$ needed for the seller to learn a mechanism, such that under such a mechanism, with high probability the expected payment collected from the buyer is at most $\eps$ away from the optimal revenue with accurate prior information. The sample complexity has been studied in both single-parameter (i.e. single-item or multi-unit auctions) settings and multi-parameter settings. While the sample complexity of single-parameter auctions has been well-studied (Eg. \cite{guo2019settling}), only polynomial upper bounds on the sample complexity of multi-parameter revenue maximization settings have been known.

Another important model that has attracted much attention recently is the \textit{pricing query access} model \cite{leme2023pricing}. The model is defined for a single-dimensional setting where the buyer has a value distribution of $D$ for a single item. In each round of query interaction, an independent sample $v\sim D$ is drawn from the value distribution, with the value unknown to the learner. The learner proposes a price $p$ and observes a binary outcome $\one[v\geq p]$ on whether the buyer with value $v$ can afford to purchase the item at price $p$. The \textit{pricing query complexity} of the revenue maximization problem is defined by the number of queries needed for the learner to learn a mechanism with revenue at most $\eps$ away from the optimal revenue of a mechanism with the exact value distribution. Such a query model has been studied for learning the statistics of the underlying distribution under different names, such as \textit{pricing queries} \cite{leme2023pricing}, \textit{threshold queries} \cite{okoroafor2023non}, and \textit{comparison feedback} \cite{meister2021learning}, but the natural extension to multi-parameter settings have never been done before. In this paper, for consistency with different literature and for simplicity of presentation, we use ``\textit{query}'' and ``\textit{query complexity}'' interchangeably with ``\textit{pricing query}'' and ``\textit{pricing query complexity}'' when the context is clear.

\subsection{Our Results}

In this work, we study the sample complexity and the pricing query complexity of learning the optimal posted prices for a unit-demand buyer with independent item values. As the first main result, we give an asymptotically tight sample complexity result (up to a polylogarithmic factor in $n$ and $\eps$) for learning the optimal item pricing mechanism for a unit-demand buyer. The result improves over a previous $\tilde{O}\left(\frac{n^2}{\eps^2}\right)$ bound via pseudo-dimension \cite{morgenstern2016learning,cai2017learning}, and an $\tilde{O}\left(\frac{n}{\eps^4}\right)$ bound that also applies to optimal randomized mechanisms \cite{GHTZ-COLT21}.
As far as we know this is the first tight sample complexity result for a revenue maximization problem in the multi-parameter setting.

\begin{Theorem}
\label{thm:main-sample}
For any distributions $\bD=D_1\times D_2\times\cdots\times D_n$ on $[0,1]^{n}$, $\tilde{\Theta}\left(\frac{n}{\eps^2}\right)$ samples over $\bD$ are necessary and sufficient to learn a posted pricing mechanism $\bp$ with revenue at most $\eps$ less than the optimal item pricing. 
\end{Theorem}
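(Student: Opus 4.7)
My plan is to prove the upper and lower bounds separately.

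\textbf{Lower bound $\tilde\Omega(n/\eps^2)$.} I would assemble $n$ independent copies of the classical single-item hard instance. For each coordinate $i$, pick a pair of two-point distributions $D_i^+, D_i^-$ whose optimal posted prices differ and whose induced item-$i$ revenues differ by $\Theta(\eps/\sqrt{n})$ at the two prices. To make the coordinates decouple under the unit-demand objective, I would place the supports of distinct items on widely separated value scales, so that at the optimal pricing the buyer's choice is essentially determined coordinatewise and a mistake on item $i$ cannot be masked by re-routing the allocation to another item. An Assouad-style argument then reduces $\eps$-approximate pricing to recovering a constant fraction of a hidden bit-string $b\in\{+,-\}^n$; because the per-coordinate KL divergence between $D_i^+$ and $D_i^-$ is $O(\eps^2/n)$, any learner using $o(n/\eps^2)$ samples must err on $\Omega(n)$ coordinates in expectation, losing $\Omega(\eps)$ revenue.

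\textbf{Upper bound $\tilde O(n/\eps^2)$.} A direct pseudo-dimension bound on item-pricing revenue functions yields only $\tilde O(n^2/\eps^2)$ (the earlier bound of \cite{morgenstern2016learning,cai2017learning}), so shaving the extra factor of $n$ requires an argument that exploits the product structure of $\bD$. My plan is a learn-then-price scheme: use $N=\tilde O(n/\eps^2)$ samples to build an empirical product distribution $\tbD$ whose coordinate CDFs are each estimated within Kolmogorov error $\tilde O(\eps/\sqrt{n})$ (by DKW applied to each marginal, using that all $N$ samples inform every coordinate), and output the optimal item pricing $\widetilde{\bp}$ for $\tbD$. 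The key analytic step is a distributional stability lemma of the form $|\rev(\bp,\bD)-\rev(\bp,\tbD)|\lesssim\sqrt{\sum_i \eps_i^2}=\eps$, where $\eps_i$ is the marginal estimation error for item $i$ in an appropriate utility metric; the $\ell_2$ aggregation (rather than the naive $\ell_\infty$ sum $\sum_i\eps_i=\sqrt n\,\eps$) should follow from the unit-demand constraint---at most one item is sold per realization, so the per-item revenue errors are not all simultaneously active---together with a Bernstein-type concentration on the per-item contributions.

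\textbf{Main obstacle.} The hardest part will be this stability lemma. When we substitute $\tbD$ for $\bD$, both the optimal prices and the buyer's allocation rule shift simultaneously: at the new prices $\widetilde{\bp}$ the buyer's choice can re-route through any of the other $n-1$ items, coupling the per-item errors in a way that is not obviously controlled by an $\ell_2$ norm. My plan for handling this is to pass through an intermediate mechanism that uses the true $\bD$ together with the learned prices $\widetilde{\bp}$, bound the two transitions separately, and use a smoothness argument---the optimal revenue is Lipschitz in the marginal tail probabilities over a suitable discretized price grid---to show that the second transition contributes only a lower-order term in the marginal error. Combined with a union bound over a polynomial-size discretization of the price space and a coupling between $\bD$ and $\tbD$ that exploits independence across coordinates, this should give the matching $\tilde O(n/\eps^2)$ upper bound.
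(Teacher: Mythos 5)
Your high-level architecture matches the paper's (an Assouad-type lower bound with per-coordinate divergence $O(\eps^2/n)$, and an empirical-distribution-plus-revenue-stability upper bound), but two of your concrete steps would fail as described.

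\textbf{Lower bound.} Decoupling the coordinates by placing items on ``widely separated value scales'' is incompatible with an additive $\eps$ benchmark: if item $i$'s support is geometrically below item $i-1$'s (which is what is needed for the buyer's choice to be determined coordinatewise), then all but the top $O(\log(n/\eps))$ items contribute less than $\eps/n$ revenue each, so mispricing them costs nothing and the Assouad argument collapses. The paper instead keeps all $n$ items on the \emph{same} scale (support $\{0,\tfrac12,1\}$ with masses $\Theta(1/n)$) and confronts the unit-demand coupling directly: its Lemma on mispriced pairs uses a coupling/swap argument to show that each wrongly priced pair still costs $\Omega(\eps/n)$ revenue even though the items compete for the single unit of demand. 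Your parameters are also inconsistent: with per-coordinate Hellinger$^2$ of $\Theta(\eps^2/n)$ and $\Omega(n)$ forced errors, the per-coordinate revenue gap must be $\Theta(\eps/n)$, not $\Theta(\eps/\sqrt{n})$ (the latter would yield an impossible $n^2/\eps^2$ bound).

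\textbf{Upper bound.} The stability lemma as you state it---per-coordinate Kolmogorov error $\tilde O(\eps/\sqrt n)$ from DKW implies $|\rev_{\bD}(\bp)-\rev_{\tbD}(\bp)|\lesssim\eps$---is false. Take each $D_i$ to put mass $1/n$ at value $1$ and price every item at $1$; shifting each mass up by $\eps/\sqrt n$ (Kolmogorov distance exactly $\eps/\sqrt n$) changes the revenue $1-(1-1/n)^n$ by $\Theta(\eps\sqrt n)$. What the paper actually uses is the variance-sensitive Bernstein bound $|F_{D_i}(v)-F_{E_i}(v)|\le\sqrt{F_{D_i}(v)(1-F_{D_i}(v))\cdot 2\Gamma}+\Gamma$ with $\Gamma=\tilde O(\eps^2/n)$, which forces error $\tilde O(\eps/n)$ at quantiles near $0$ or $1$---precisely where the counterexample lives---and only $\eps/\sqrt n$ at constant quantiles. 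Moreover the paper does not prove a symmetric Lipschitz bound: it proves a one-sided \emph{approximate strong monotonicity} statement (if $\bG$ is dominated by $\bH$ but coordinatewise close in the variance-weighted sense, then $\rev_{\bH}(\bp)\ge\rev_{\bG}(\bp)-\tilde O(\sqrt{n\Gamma})$), constructs an explicit dominating distribution $\bbD$ that is TV-close to $\bD$, and applies the lemma in both directions. The $\ell_2$-type aggregation you hope for is realized there by Cauchy--Schwarz over coordinates combined with a bound showing that the total ``blocking mass'' exceeds $\log\Gamma^{-1}$ only on a small region; your intuition that unit-demand prevents all per-item errors from being simultaneously active is the right one, but your plan as written (DKW plus a symmetric Lipschitz lemma in Kolmogorov distance) cannot be carried out.
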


Our second main result is about the pricing query complexity of the unit-demand pricing problem. We first need to define the query model in the multi-parameter setting. In our proposed query model, in each query, the learner can propose a pair $(i,p)$ including the item ID and the proposed price. A value $v_i\sim D_i$ is drawn from item $i$ and a binary signal $\one[v_i\geq p]$ is released to the learner. We provide the first study of the pricing query complexity in the context of multi-parameter revenue maximization problems and obtain a strong query complexity upper bound for the unit-demand pricing problem. Notice that we are assuming the learner can only make a threshold query to a single-dimensional distribution in each round.

\begin{Theorem}
\label{thm:main-query}
For any distributions $\bD=D_1\times D_2\times\cdots\times D_n$ on $[0,1]^{n}$, $\tilde{O}\left(\frac{n^2}{\eps^3}\right)$ pricing queries over $\bD$ are sufficient to learn a posted pricing mechanism $\bp$ with revenue at most $\eps$ less than the optimal item pricing. 
\end{Theorem}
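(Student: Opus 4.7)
The natural plan is to prove Theorem~\ref{thm:main-query} by reducing to Theorem~\ref{thm:main-sample}: use pricing queries to construct a proxy product distribution $\tbD=\tilde D_1\times\cdots\times\tilde D_n$ that is close to $\bD$ in the unit-demand-revenue sense, and then run the sample-based algorithm on virtual samples drawn from $\tbD$. To discretize the price space, a uniform-shift argument shows that shifting every $p_i$ down by $\Delta$ leaves the buyer's $\arg\max$ over surpluses unchanged and lowers revenue by at most $\Delta$, so it suffices to search over prices on a grid of spacing $\Delta=\Theta(\eps)$ in $[0,1]$, and the optimal grid-restricted revenue is within $O(\eps)$ of the continuous optimum. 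Write $G=O(1/\eps)$ for the grid size per item.

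For the query phase, for each $(i,p_g)$ run $T$ independent pricing queries at $(i,p_g)$ and set $\tilde F_i(p_g)$ to the empirical fraction of $1$'s; a Chernoff bound gives $|\tilde F_i(p_g)-F_i(p_g)|\le \tilde O(1/\sqrt T)$ simultaneously for all $nG$ pairs with high probability. Extend each $\tilde F_i$ to a full CDF by piecewise-constant interpolation, yielding $\tilde D_i$ and $\tbD=\prod_i\tilde D_i$. Setting $T=\tilde\Theta(n/\eps^2)$ gives per-quantile precision $\tilde O(\eps/\sqrt n)$ and total query budget $nGT=\tilde O(n^2/\eps^3)$, matching the theorem. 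Since $\tbD$ is known to the learner after this query phase, we can then draw $\tilde O(n/\eps^2)$ virtual samples from $\tbD$ at no additional query cost and feed them to the algorithm underlying Theorem~\ref{thm:main-sample}, obtaining a grid price vector $\hat\bp$ with $\rev_{\tbD}(\hat\bp)\ge\opt_{\tbD}-O(\eps)$.

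The crux is to transfer this guarantee from $\tbD$ back to $\bD$ by showing $|\rev_\bD(\bp)-\rev_{\tbD}(\bp)|\le O(\eps)$ uniformly over grid price vectors $\bp$. The main obstacle lies exactly here: a term-by-term worst-case bound that propagates the $\tilde O(\eps/\sqrt n)$ per-quantile error through the product $\prod_{j\ne i}\tilde F_j(\cdot)$ appearing inside $\Pr[i\text{ chosen}]$ only yields $\Omega(\eps\sqrt n)$, too large by a $\sqrt n$ factor. To recover the claimed bound I would exploit the mutual independence of the $nG$ empirical estimators $\tilde F_i(p_g)$ together with the fact that under unit demand each value profile contributes to at most one item's revenue, and apply a Bernstein-type variance inequality so that the $\sqrt n$ loss cancels down to a constant. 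The uniform convergence over the $G^n$ grid price vectors would be handled via a Lipschitz-in-$\bp$ covering argument rather than a crude union bound, and carefully absorbing the resulting $\poly\log(n/\eps)$ factors into the $\tilde O$ is the bookkeeping cost of this approach.
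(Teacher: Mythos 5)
Your plan has two genuine gaps, and they sit exactly where the difficulty of the theorem lies. First, the discretization: a uniform downward shift of all prices by $\Delta$ indeed preserves the buyer's $\arg\max$, but it does not place an arbitrary price vector on a grid --- for that you must round each $p_i$ individually, which is a non-uniform shift, and then the buyer can switch from an expensive item to a much cheaper one whose price happened to drop by more. The standard fix (Nisan's argument, which the paper uses as \Cref{lma:Nisan}) scales prices by $(1-\eps)$ and requires a grid of spacing $\eps^{2}$, not $\eps$. With $G=\Theta(1/\eps^{2})$ grid points per item, your uniform budget $nGT$ becomes $\tilde O(n^{2}/\eps^{4})$ --- which is precisely the paper's warm-up bound, not the claimed $\tilde O(n^{2}/\eps^{3})$. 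Your accounting only closes if the $\eps$-spaced grid loses $O(\eps)$ revenue, and the argument you give does not establish that.

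Second, the crux you flag is not repaired by your proposed fix. With a uniform additive per-quantile error of $\tilde O(\eps/\sqrt n)$, the independence-plus-Bernstein cancellation controls $\rev_{\tbD}(\bp)-\rev_{\bD}(\bp)$ only for a \emph{fixed} $\bp$; but the price vector $\hat\bp$ you output is chosen by optimizing over the same empirical CDFs, so it is adversarially correlated with the estimation errors and systematically favors prices where the noise inflates apparent revenue. To cover this you need uniformity over all grid price vectors, of which there are $G^{n}$; a union bound (or any covering of the grid, whose cardinality is still exponential in $n$) reinstates a $\sqrt{n\log(1/\eps)}$ factor on the deviation of the sum of $n$ independent errors, landing you back at $\eps\sqrt n$. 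The paper avoids this entirely by making the transfer \emph{deterministic}: it demands the one-sided, quantile-proportional guarantee $|F_{E_i}(v)-F_{\tD_i}(v)|\le\eps(1-F_{\tD_i}(v))+\eps/n$ (\Cref{lma:rev-close-query}), under which the difference of winning probabilities is bounded for every $\bp$ via the approximate strong-monotonicity machinery (\Cref{lma:approx-sm-query}). Achieving that guarantee non-adaptively at all $\eps^{-2}$ grid points would again cost $n^{2}/\eps^{4}$ queries (error $\eps q$ on a quantile $q\approx 1/n$ needs $\Theta(n/\eps^{2})$ queries per point); the paper's essential idea, absent from your proposal, is the adaptive binary search of \Cref{lma:query-alg} that identifies only $O(1/\eps)$ key thresholds per item at which heavy querying is needed.
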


We conjecture that the query complexity in Theorem~\ref{thm:main-query} is tight, by observing the following strong evidence that $\Omega\left(\frac{n^2}{\eps^3}\right)$ is the lower bound of the query complexity of the ex-ante relaxation of the problem. In an ex-ante unit-demand pricing problem, the seller wants to compute an item pricing $\bp=(p_1,\cdots,p_n)$, such that the total probability of an item being sold $\sum_i\Pr{v_i\geq p_i}\leq 1$ is at most 1 in expectation, and maximizes the ex-ante revenue $\sum_i p_i\Pr{v_i\geq p_i}$.

\begin{Theorem}
\label{thm:exante-query}
For any distributions $\bD=D_1\times D_2\times\cdots\times D_n$ on $[0,1]^{n}$, $\Omega\left(\frac{n^2}{\eps^3}\right)$ pricing queries over $\bD$ are necessary to learn a posted pricing mechanism $\bp$ with revenue at most $\eps$ less than the optimal item pricing in the ex-ante relaxation. 
\end{Theorem}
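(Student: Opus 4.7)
Plan: I will exhibit a family of hard instances indexed by a hidden $\btau \in [L]^n$ with $L = \Theta(1/\epsilon)$, and combine a revenue-gap analysis with an information-theoretic lower bound. For each item $i$, I take $D_i^{\tau_i}$ to have point masses $\beta = \Theta(\epsilon/n)$ at every grid price $x_k = k/L$, $k\in[L]$, perturbed by a \emph{symmetric swap} that moves an amount $\eta = \Theta(\beta) = \Theta(\epsilon/n)$ of mass from $x_{\tau_i - 1}$ to $x_{\tau_i}$. A telescoping calculation shows this swap preserves $\bar F_i^{\tau_i}(p)$ for every $p \notin (x_{\tau_i - 1}, x_{\tau_i}]$, so any query outside that narrow width-$1/L$ interval is completely uninformative about $\tau_i$, ruling out binary-search-style localization.

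For the revenue-gap analysis, the baseline revenue curve of each item (without the swap) has corners at $(k\beta, k(L-k+1)\beta/L)$, is strictly concave, and peaks at $q \approx 1/(2n)$ with value $\approx 1/(4n)$. Hence the bump-free ex-ante optimum places every item at $q \approx 1/(2n)$, giving total revenue $\approx 1/4$ while consuming only half of the probability budget. The swap at $\tau_i$ inserts an extra corner that offers a marginal rate of $x_{\tau_i} = \tau_i/L$ over a width-$\eta$ segment; since the $\Theta(1)$ leftover budget comfortably absorbs the aggregate extra demand $n\eta = \Theta(\epsilon)$, the bump bonuses are \emph{additive} in the ex-ante optimum. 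The true ex-ante optimum is thus $1/4 + \eta\sum_i x_{\tau_i} = 1/4 + \Theta(\epsilon)$ with high probability over uniform $\btau$, and by choosing constants so that the forced gap exceeds $\epsilon$, any $\epsilon$-optimal algorithm must place $p_i \in (x_{\tau_i - 1}, x_{\tau_i}]$ for at least a constant fraction of items (say $n/2$).

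For the information-theoretic step, a matched query at $p \in (x_{\tau_i-1}, x_{\tau_i}]$ returns $\mathrm{Ber}((L - \tau_i + 1)\beta + \eta)$ while every other query and every other hypothesis produces the baseline $\mathrm{Ber}((L - \tau_i + 1)\beta)$ with mean $\Theta(1/n)$, so the per-query KL divergence at a matched query is $\Theta(\eta^2 n)$ and is zero otherwise. Because the algorithm cannot know which prices match $\tau_i$ without first learning $\tau_i$, its expected mutual information per query is at most $\Theta(\eta^2 n / L)$, and by Fano's inequality (or an explicit Le~Cam reduction between $\tau_i$ and each neighbor $\tau_i \pm 1$) identifying the correct window with constant probability requires $\Omega(L/(\eta^2 n)) = \Omega(n/\epsilon^3)$ queries targeted at item $i$. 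Since queries on item $i$ carry no information about item $j \ne i$ by independence, a simple averaging/Markov argument then forces $\Omega(n^2/\epsilon^3)$ total queries; Yao's minimax promotes this to a worst-case randomized lower bound, and mutual-information bounds are automatically adaptive.

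The hardest step will be Step~2: I must verify that the $n$ per-item bump bonuses are truly \emph{additive} in the ex-ante optimum, i.e.\ that each bump's marginal rate $x_{\tau_i}$ strictly exceeds every base segment slope past the baseline peak, so that a Lagrangian optimum exercises every bump independently, and that missing the bump on item $i$ cannot be partially recouped by reallocating the freed probability to some other item's base envelope. The leftover-budget analysis suggests this holds, but tracking the concave envelope across $L = \Theta(1/\epsilon)$ base corners plus an injected bump corner per item, and pinning down the multiplicative constants so that the ``half-the-bumps'' count is genuinely the tight forcing threshold, will require careful piecewise bookkeeping.
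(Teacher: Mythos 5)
Your plan follows the same architecture as the paper's proof: a grid of $L=\Theta(1/\eps)$ candidate prices per item, point masses of size $\Theta(\eps/n)$, a hidden adjacent-mass swap so that the CDF is perturbed only on a single width-$1/L$ window (making all other queries uninformative), a per-query information bound of $\Theta(\eps^2/n)$ at the matched window since the quantile there is $\Theta(1/n)$, hence $\Omega(n/\eps^2)$ queries per candidate window, $\Omega(n/\eps^3)$ per item, and an averaging/pigeonhole step over the $n$ independent items. The information-theoretic half is sound (modulo the minor caveat that for $\tau_i$ near $L$ your quantile degrades to $\Theta(\eps/n)$ and the per-query KL grows to $\Theta(\eps/n)$, so you should restrict $\tau_i$ to a constant fraction of the grid where the quantile stays $\Theta(1/n)$ --- the paper does this by supporting its distribution on $[1/2,3/4]$).

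However, the step you flagged as hardest (Step 2) is not merely bookkeeping --- as specified, the revenue-gap argument fails. With uniform masses $\beta$ at every grid point, the baseline per-item revenue $R(k)=x_k(L-k+1)\beta$ is a strictly concave parabola with a unique peak at $k^*\approx (L+1)/2$, and writing $d=\tau_i-(L+1)/2$ one gets $R(k^*)-R(\tau_i)=\beta d^2/L$. The swap raises $R(\tau_i)$ by only $\eta x_{\tau_i}\le\eta=\Theta(\beta)$, so the bumped price $x_{\tau_i}$ overtakes the baseline peak only when $|d|=O(\sqrt{L})$. For all other $\tau_i$ (a $1-O(1/\sqrt{L})$ fraction), the optimal price for item $i$ is still $x_{k^*}$ and the bump bonus is \emph{not} collected by the optimum either; the budget constraint is slack, so nothing couples the items in a way that rescues this. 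Consequently the zero-query strategy that prices every item at $x_{k^*}$ has expected regret at most $n\cdot\eta\cdot O(1/\sqrt{L})=O(\eps^{3/2})=o(\eps)$, and no queries are forced. The fix is exactly the paper's design choice: take the base distribution to be \emph{equal-revenue} over the grid (e.g.\ supported on $\{0\}\cup\{\tfrac12+k\eps\}_{k}$ with $\Pr{X\ge \tfrac12+k\eps}\cdot(\tfrac12+k\eps)=\tfrac{1}{2n}$), so that all $L$ candidate prices tie at the baseline and the swap breaks the tie by $\Omega(\eps/n)$ \emph{wherever} $\tau_i$ lands; then every mislocated $\tau_i$ costs $\Omega(\eps/n)$ and the $\Omega(\eps)$ aggregate gap holds. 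With that substitution the remainder of your argument goes through.
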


Intuitively the ex-ante version is a simpler problem compared to the original unit-demand pricing problem as it does not take the preference of the agent into account. The upper bound algorithm of the original problem also works for the ex-ante version using a similar analysis, so $\tilde{\Theta}\left(\frac{n^2}{\eps^{3}}\right)$ is actually the \textit{tight} query complexity of the ex-ante unit-demand pricing problem.

\subsection{Technical Overview}

\paragraph{Sample Complexity Upper Bound} For the sample complexity upper bound, we apply the idea of \textit{strong revenue monotonicity} from \cite{GHTZ-COLT21} and \textit{approximate strong revenue monotonicity} from \cite{CHMY-EC23}. \cite{GHTZ-COLT21} develop a theory on the sample complexity of a learning problem on product distributions and prove that if the optimal objective strictly increases when the underlying distribution is replaced by a distribution that stochastically dominates it, the sample complexity of the optimization problem is $\tilde{\Theta}(n\eps^{-2})$. However, the unit-demand pricing problem is not strongly revenue monotone.  Consider a concrete example involving two items, $A$ and $B$. The buyer's valuation for item $A$ is $0.5$. For item $B$, the valuation is $1$ with probability $0.5$ and $0$ with probability $0.5$. The optimal pricing in this scenario is to set $p_A = 0.5$ and $p_B = 1$, resulting in an expected revenue of $0.75$. Next, let's substitute item $A$ with item $\tilde{A}$, for which the buyer values it at $0.6$ with probability $0.1$ and $0.5$ with probability $0.9$. While item $\tilde{A}$'s valuation distribution clearly dominates item $A$'s, the expected revenue decreases to $0.1 \cdot 0.5 + 0.9 \cdot (0.5 \cdot 1 + 0.5 \cdot 0.5) = 0.725$. This demonstrates that the problem is not strongly monotone. \footnote{Indeed, setting the price for item $A$/$\tilde{A}$ at $0.5$ and for item $B$ at $1$ gives the revenue-optimal mechanism for selling either items $A$ and $B$ or items $\tilde A$ and $B$. Consequently, even weak monotonicity fails to hold when selling multiple items to a unit-demand buyer with independent valuations, regardless of whether the optimal item pricing or the optimal truthful mechanism is applied.}

Although the strong revenue monotonicity does not hold for our problem, we are able to prove a weaker approximate strong revenue monotonicity result that for any two value distributions $\bD$ and $\bD'$ such that $D_i\succeq D_i'$ and $D_i\approx D'_i$ for every $i\in[n]$,\footnote{Here the ordering is first-order stochastic dominance, and the closeness is measured by Kolmogorov distance.} the revenue of any item pricing on $\bD$ is not much worse than $\bD'$. With the weaker revenue monotonicity result, $\tilde{O}(n\eps^{-2})$ queries are still sufficient. Our key technical contribution is to prove the approximate strong revenue monotonicity for the unit-demand pricing problem. The main idea is to carefully bound the change of the probability of each item being the utility-maximizing item, and show that when the item values slightly increase, the probability that each item wins does not drop too much.

\paragraph{Sample Complexity Lower Bound} 
For the sample complexity lower bound, we briefly describe the hard instance. One construction uses the following two single-dimensional distributions as building blocks.
\begin{align*}
    G = \begin{cases} 1 &  \text{with probability }~~ \frac{0.5}{n} \\ 0.5 & \text{with probability  }~~ \frac{1}{n} \\ 0 & \text{with probability  }~~ 1 - \frac{1.5}{n},
\end{cases}\ \ \ \ \ \ \ G^L=\begin{cases} 1 &  \text{with probability }~~ \frac{0.5 - \epsilon}{n} \\ 0.5 & \text{with probability  }~~ \frac{1 + \epsilon}{n} \\ 0 & \text{with probability  }~~ 1 - \frac{1.5}{n}.
\end{cases}~
\end{align*}
For a buyer with each item value drawn from $G$, the optimal item pricing sets the price of $qn$ items to be 0.5, and the rest of the items have price 1, where $q\approx \ln 4-1$. Suppose that the value distribution of the buyer is perturbed, such that the value distributions of $qn$ random items are switched to distribution $G^L$ that is dominated by $G$. Then the optimal item pricing will set the item prices of these dominated items to 0.5, and for each wrongly set item price, there is $\Omega(\frac{\eps}{n})$ loss in revenue. Thus, for a learning algorithm to learn a good item pricing mechanism with revenue loss $O(\eps)$, the learning algorithm needs to identify at least a constant fraction of the items with a perturbed value distribution. We then prove that the number of samples needed is at least $\Omega\left(\frac{n}{\eps^2}\right)$ by analyzing the Hellinger distance between the original value distribution and the perturbed value distribution.

\paragraph{Query Complexity Upper Bound} For the pricing query complexity upper bound of the unit-demand pricing problem, the first idea is to repeatedly query each distribution $D_i$ with price being all multiples of $\eps^2$ until the quantile at all discretized prices can be estimated accurately (with $\poly(n)/\eps^2$ queries). All item values can be rounded down to the closest multiple of $\eps^2$ with only $O(\eps)$ loss in revenue due to Nisan's $\eps$-IC to IC reduction (see Eg. \cite{BBHM-FOCS05,CHK-EC07}) for any single buyer. Therefore, $\poly(n)/\eps^4$ queries are sufficient. For further improvement, we need to run a non-uniform query algorithm after discretization. To be more specific, we prove that the accuracy we need for $D_i(v)$ is proportional to its quantile $1-F_{D_i}(v)$ for every $v\in[0,1]$. In other words, we need to make more pricing queries to an item distribution when the price is closer to 1, and fewer queries when the price is closer to 0. This way we are able to obtain a tighter dependency on $\eps$.

\paragraph{Query Complexity Lower Bound}

To prove the pricing query complexity lower bound for the ex-ante pricing problem, we use a similar idea as the sample complexity lower bound where we randomly perturb the base value distribution such that unless the learner can identify the way how most of the value distributions are perturbed, there could be a large revenue loss. Motivated by the $\Omega(\eps^{-3})$ query complexity lower bound example in the single-dimensional case \cite{leme2023pricing}, as the base distribution in our setting, we let each item have distribution $H$ with support $0,\frac{1}{2},\frac{1}{2}+\eps,\frac{1}{2}+2\eps,\cdots,\frac{3}{4}-\eps,\frac{3}{4}$ such that for every $k\in[\frac{1}{4\eps}]$, the seller obtains the same revenue $\frac{1}{2n}$ from $H$ by posting price $\frac{1}{2}+k\eps$. In other words, $H$ is an equal-revenue distribution with optimal revenue $\frac{1}{2n}$. Suppose that for each item $i$, we perturb its value distribution from $H$ as follows: select $k_i\in[\frac{1}{4\eps}-1]$ uniformly at random, and move the point mass of $H$ at $\frac{1}{2}+k_i\eps$ to $\frac{1}{2}+(k_i+1)\eps$. Then the optimal ex-ante pricing should have $p_i=\frac{1}{2}+(k_i+1)\eps$, and setting $p_i$ to any other discretized price leads to revenue loss $\Omega(\frac{\eps}{n})$. For each distribution, we can show that the query complexity to identify $k_i$ is $\Omega(\frac{n}{\eps^3})$, which means the overall query complexity lower bound for the ex-ante unit-demand pricing problem is $\Omega(\frac{n^2}{\eps^3})$. We conjecture that the lower bound example can be transformed to an $\Omega(\frac{n^2}{\eps^3})$ lower bound for the original unit-demand pricing problem.

\subsection{Discussion on the Multi-parameter Query Complexity Model}
While the economics and computation community has mostly agreed on a sample complexity model in the multi-dimensional setting, the definition of the query complexity model has never appeared in the literature. The most general query access model in a single round of interaction has three steps:
\begin{enumerate}
\item A fresh value vector $\bv\sim \bD$ is drawn from the value distribution.
\item The learner takes an action $a$ based on the past history.
\item A signal $\theta(\bv,a)$ is released to the learner.
\end{enumerate}
The \textit{sample access} model is a special case where the learner takes no action in each round, and the signal released to the learner is the sampled value vector $\bv$. In this paper, we are defining the \textit{pricing query complexity} to be the number of samples needed where the action in the second step is the \textit{pricing query access of a \textbf{single-dimensional distribution}} that has been well-motivated in the literature under different names. In fact, different action spaces and feedback signals in the query model correspond to different definitions of query complexity. For example, when we are allowed to simultaneously query $n$ item prices $(p_1,\cdots,p_n)$ and obtain a vector of feedback $(s_1,\cdots,s_n)$ where $s_i=\one[v_i\geq p_i]$, our query complexity upper bound and lower bound (for the ex-ante version) would be $\tilde{O}\left(\frac{n}{\eps^3}\right)$ and $\Omega\left(\frac{n}{\eps^3}\right)$ respectively. The general query access model has been studied extensively in the \textit{dynamic pricing} literature for regret minimization (Eg. \cite{kleinberg2003value}) under different action spaces and signaling schemes, but the query complexity problem should be equally exciting. It remains an interesting open direction what other well-motivated action and signaling schemes lead to non-trivial query complexity results.

\subsection{Related Work}
\label{subsec:related}

\paragraph{Multi-dimensional revenue maximization} There has been a growing literature on multi-parameter Bayesian revenue maximization during the last two decades. Due to the complexity of the optimal mechanisms, most work has been focused on using simple mechanisms (such as item pricing, grand bundle pricing, sequential posted pricing, two-part tariff) to approximate the revenue obtained by optimal mechanisms for buyers with special value structures such as subadditivity, with a non-exhaustive list including \cite{CHK-EC07,chawla2010multi,CMS-GEB15,babaioff2020simple,yao2014n,rubinstein2018simple,chawla2016mechanism,CDW-SICOMP21,correa2023constant,cai2023simultaneous}. The computation of the optimal multi-dimensional pricing for a unit-demand buyer has been shown to be NP-hard even with bounded support size or i.i.d. settings \cite{daskalakis2012optimal,chen2018complexity}. If we want an approximately optimal item pricing, \cite{cai2015extreme} proposes an algorithm that obtains an additive Polynomial Time Approximation Scheme, and the existence of a multiplicative PTAS remains an open question.

\paragraph{Sample complexity} As a natural generalization of Bayesian revenue maximization, revenue maximization with samples has gained increasing popularity since \cite{cole2014sample}. In single-parameter settings, literature including \cite{elkind2007designing, dhangwatnotai2010revenue, huang2015making, morgenstern2015pseudo, DHP-STOC16, hartline2019sample, gonczarowski2017efficient, guo2019settling, jin2023learning} studies how to learn the optimal (or approximately optimal) auctions via samples, with \cite{guo2019settling} getting the tight sample complexity of the optimal auctions for most important distribution classes.
In multi-dimensional settings, results are mostly obtained to prove the existence of an upper bound on the sample complexity for obtaining near-optimal mechanisms for independent (or slightly correlated) items \cite{dughmi2014sampling,balcan2016sample,morgenstern2016learning,cai2017learning,syrgkanis2017sample,balcan2023generalization,gonczarowski2021sample,brustle2020multi,GHTZ-COLT21,balcan2021much,cai2022computing,CHMY-EC23,jin2024sample}. The sample complexity result for optimal mechanisms cannot go beyond independent items since for complicated correlated value distributions, slightly perturbing the values may lead to drastic change in optimal revenue \cite{psomas2019smoothed, chawla2020menu}.

\paragraph{Pricing queries} The usage of pricing queries for revenue maximization can date back to \cite{kleinberg2003value} which studies the regret of online pricing with different distributional assumptions. While most work in the literature has been about dynamic pricing for bandit regret minimization, recently several papers started to study using these threshold queries to learn the properties of a distribution, such as estimating the mean, median, optimal monopoly price, and the entire CDF \cite{meister2021learning,leme2023pricing,okoroafor2023non,paes2023description, SW-EC24}. \cite{leme2023pricing} first studies the query complexity of the threshold query model in the single-dimensional setting, and we first use the same query model to study the query complexity in the multi-dimensional setting.

\paragraph{Revenue (non-)monotonicity}  A considerable body of work explores revenue monotonicity and non-monotonicity across various mechanisms \cite{hart2015maximal, devanur2016sample, rubinstein2018simple, Yao-SAGT18, guo2019settling, GHTZ-COLT21, CHMY-EC23, cai2023simultaneous}. Our paper's concept of approximate strong revenue monotonicity is most closely related to the "revenue Lipschitzness" variant introduced in \cite{CHMY-EC23}, which bounds revenue change by the distance between original and empirical distributions. Separately, a similar "approximate revenue monotonicity" concept, discussed in \cite{rubinstein2018simple, Yao-SAGT18, cai2023simultaneous}, establishes a different form of monotonicity for a broader class of buyers: if distribution $\bD$ stochastically dominates $\bE$, the optimal revenue under $\bD$ is guaranteed to be at least a constant fraction of the optimal revenue under $\bE$. Crucially, while the latter concept uses a multiplicative approximation, our paper's "approximate strongly monotonicity" provides an additive approximation, offering a distinct perspective.

\subsection{Paper Organization}

In \Cref{sec:prelim} we define the notations and useful concepts in our paper. In \Cref{sec:sample-upper} we give a complete proof of the $\tilde{O}(n/\eps^2)$ sample complexity upper bound of \SUDPP, with the proof of some claims in the section deferred to \Cref{sec:appendix-sample-upper}. In \Cref{sec:query-upper} we prove the $\tilde{O}(n^2/\eps^3)$ query complexity upper bound of \SUDPP, with the proof of some lemmas deferred to \Cref{sec:appendix-query-upper}. The $\Omega(n/\eps^2)$ sample complexity lower bound of \SUDPP is deferred to \Cref{sec:sample-lower}. The $\Omega(n^2/\eps^3)$ query complexity lower bound of the ex-ante version of \SUDPP is deferred to \Cref{sec:query-lower}.

\section{Preliminaries}
\label{sec:prelim}

\subsection{Notations}

We study a Bayesian multi-parameter revenue maximization problem with $n$ items and a single buyer. The buyer has independent item values with support $[0,1]$, which means that the buyer's value distribution $\bD=D_1\times D_2 \times\cdots\times D_n$ is a product distribution on $[0,1]^n$. For every single-dimensional distribution $D$, we use $F_D(\cdot)$ and $f_D(\cdot)$ to denote the CDF and the PDF of the distribution. 

For any value distribution $\bD$ and item pricing $\bp$, denote by $\rev_{\bD}(\bp)$ the expected revenue of $\bp$ when the buyer's value vector is drawn from $\bD$. We let 
\[\rev_{\bD}^*=\max_{\bp}\rev_{\bD}(\bp)\] 
denote the optimal revenue obtained by any item pricing, and 
\[\bp_{\bD}^*=\mathop{\arg\max}_{\bp}\rev_{\bD}(\bp)\]
denote the optimal posted prices. Consistent with the notation in the literature (Eg. \cite{CHK-EC07,chen2018complexity}), we use the abbreviation \SUDPP to refer to the Bayesian Unit-demand Pricing Problem that computes the revenue-optimal item pricing $\bp_{\bD}^*$ for any buyer distribution $\bD$.

While we assume that the learner does not have full knowledge about $\bD$, the learner has sample or (pricing) query access to the distributions.  The goal of the learner is to learn a near-optimal item pricing $\bp=(p_1,p_2,\cdots,p_n)$ with revenue $\rev_{\bD}(\bp)$ at least $\rev_{\bD}^*-\eps$ with probability at least $1-\delta$. 

When we study the sample complexity model, with $N$ sampled valuation vectors $\bv_1,\bv_2,\cdots,\bv_N$ we will learn an empirical product distribution $\bE=E_1\times E_2\times \cdots\times E_n$ where $E_i$ is the uniform distribution over $N$ sampled values of the $i$th coordinate $v_{1,i},v_{2,i},\cdots,v_{N,i}$.

\subsection{Query Complexity and Computational Complexity}

The sample complexity of our learning problem is the minimum number of samples $N$ such that there exists a learning algorithm that can take $N$ samples of any distribution $\bD$ to learn a near-optimal item pricing with revenue loss at most $\eps$ with probability $1-\delta$. When we study the query complexity, the learning algorithm may take adaptive queries. Notice that we do not assume that the learning algorithm needs to be efficient with running time independent of $\eps$. In fact, it is known that even when the buyer's value distribution is exactly known to the seller, computing the optimal item pricing vector $\bp=(p_1,\cdots,p_n)$ that maximizes the expected revenue is already NP-hard \cite{chen2018complexity}. In our learning algorithm, we will use an additive PTAS algorithm by \cite{cai2015extreme} as a subroutine to solve the unit-demand pricing problem with $\eps$ loss. This means that our learning algorithm will also be efficient for any fixed $\eps>0$.

\begin{Theorem}[Theorem 1 of \cite{cai2015extreme}]
\label{lma:ptas}
    For \SUDPP~problem with value distribution $\bD$, there exists an algorithm that runs in $O\Big(n^{\frac{\log^3 \epsilon^{-1}}{\epsilon^4}} \Big)$ time and outputs $\bp$, such that $\rev_{\bD}(\bp') \geq \rev^*_{\bD} - \eps$ for any $\eps>0$.
\end{Theorem}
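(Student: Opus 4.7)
The plan is to reduce the continuous pricing problem to enumeration over a structured family of candidate price vectors of cardinality $n^{\poly(1/\eps)}$, evaluating the expected revenue of each and returning the best. The first step is a discretization claim: there exists a near-optimal price vector $\bp$ whose coordinates all lie in a fixed grid $\mathcal{G}\subset[0,1]$ of size $\poly(1/\eps)$. A natural choice combines an additive grid of resolution $\eps^2$ with a geometric grid near $0$, so that both absolute and multiplicative errors in a single price are controlled. Rounding each coordinate of the true optimum $\bp^*_{\bD}$ to its nearest grid point shifts the purchase probability and the collected price of item $i$ by a small amount, and after discarding items whose individual expected revenue contribution is below $\eps/n$ one can sum these shifts across items and bound the aggregate loss by $O(\eps)$.

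The second step is an extreme-value structural lemma that shrinks the space of relevant pricings. Bucket the items by the scale of their optimal contribution---for instance by $\lceil \log_{1/2} p^*_i \rceil$ or by the magnitude of the monopoly quantile of $D_i$---and argue that only $\poly(1/\eps)$ of these buckets carry non-negligible mass in the unit-demand selection. The intuition is that items in sufficiently different scales interact weakly: whenever some large-scale item is affordable, the probability that a much smaller-scale item is the utility-maximizer is small, so prices of the smaller items contribute through aggregate statistics rather than through their specific identities. Within a single bucket, items are nearly interchangeable, and the bucket's contribution to expected revenue depends essentially on the multiset of prices assigned inside the bucket rather than on which item gets which price.

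With these two ingredients the algorithm is direct enumeration. For each of the $\poly(1/\eps)$ active buckets, guess the vector $(n_g)_{g\in\mathcal{G}}$ describing how many items in the bucket are priced at each $g\in\mathcal{G}$, giving $n^{\poly(1/\eps)}$ candidate configurations in total. For each configuration, assign specific items to price levels to maximize the expected revenue; conditional on the bucket-level counts, this is a nicely decomposable assignment problem solvable in polynomial time. Evaluate the expected revenue of the resulting $\bp$ using independence of the $D_i$: since $\rev_{\bD}(\bp)$ is a weighted sum of integrals against a product distribution, each candidate can be scored in polynomial time in $n$ and $1/\eps$. Return the best candidate.

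The main obstacle, in my view, is the extreme-value structural lemma. The buyer's $\arg\max$ selection couples all items globally, so a single price shift or a single distributional replacement can in principle cascade across the whole instance. What is needed is a quantitative bound on how the distribution of $\arg\max_i(v_i - p_i)_+$ moves under simultaneous perturbations of many prices and many single-coordinate distributions, which is exactly where classical extreme-value theory enters: one must show concentration of the maximum over a small set of ``winning'' candidates and control of the tail. Obtaining only an additive $\eps$ loss while keeping both the number of active buckets and the grid resolution at $\poly(1/\eps)$ is delicate, and this is what ultimately drives the $\log^3(1/\eps)/\eps^4$ exponent appearing in the stated running time.
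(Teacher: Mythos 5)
This statement is not proved in the paper at all: it is imported verbatim as Theorem~1 of \cite{cai2015extreme} and used only as a black-box subroutine (the additive PTAS for \SUDPP). So there is no in-paper argument to compare against, and the honest answer for the purposes of this paper would have been a one-line citation rather than a proof sketch.

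Judged on its own terms, your sketch correctly identifies the architecture of the Cai--Daskalakis algorithm --- discretize prices to a $\poly(1/\eps)$-size grid, prove a structural/extreme-value theorem that collapses the search space, then enumerate $n^{\poly(1/\eps)}$ candidates and score each exactly using independence --- but it has a genuine gap exactly where you flag one. The ``extreme-value structural lemma'' is the entire technical content of the cited result: one must show that either $\poly(1/\eps)$ items capture all but an $\eps$ fraction of the optimal revenue (so one can afford to enumerate which items they are, giving the $n^{\poly(1/\eps)}$ factor), or the maximum utility $\max_i (v_i - p_i)_+$ concentrates so sharply that the pricing problem decomposes. Asserting that ``items in sufficiently different scales interact weakly'' is not a proof of this dichotomy, and without it your enumeration is not even well-defined: naively assigning each of $n$ items one of $\poly(1/\eps)$ grid prices gives $\poly(1/\eps)^n$ configurations, which is exponential in $n$, and your appeal to within-bucket interchangeability is false in general because items in the same price-scale bucket can have arbitrarily different distributions $D_i$. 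A secondary, smaller issue: your discretization step bounds the loss by summing per-item shifts of the winning probability, but the $\arg\max$ couples all items, so a single price perturbation can reroute the winner on a non-negligible event; the correct argument is global (uniformly lower all prices by a $(1-\eps)$ factor so that no buyer switches to a much cheaper item, as in the Nisan-style reduction the paper uses elsewhere in Lemma~4.4), not a per-item union bound.
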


\subsection{Ex-ante Unit-demand Pricing}
We define an \textit{ex-ante} version of the unit-demand pricing problem as follows. For any product value distribution $\bD=D_1\times D_2\times\cdots\times D_n$ and item pricing $\bp=(p_1,p_2,\cdots,p_n)$, define the \textit{ex-ante revenue} of $\bp$ to be
\begin{eqnarray}
\erev_{\bD}(\bp)&:=&\max_{q_1,q_2,\cdots,q_n} \sum_{i=1}^{n}q_ip_i \label{prog:exante}\\
&s.t.& \sum_{i=1}^{n}q_i\leq 1; \notag\\
& & q_i\leq \Pr{v_i\geq p_i},\ \forall i\in[n]. \notag
\end{eqnarray}
In other words, for any price vector, the agent will purchase as many affordable items as possible, such that one item is sold on average. Intuitively this is a simpler problem compared to the original problem as it does not take the buyer's utility ordering of all items into account. We are going to provide a lower bound on the query complexity of this problem.

\subsection{Distribution Distance Metrics}
We need to analyze the distance between different distributions in the analysis of both upper bounds and lower bounds. In the upper bound proofs, we prove that when the empirical distribution and the original distribution are \textit{close}, their revenue under any pricing we specify is also close. The distance metric we use here is the \textbf{Kolmogorov\ distance} (although we do not specifically call its name in the proof). For any two single-dimensional distributions $D$ and $E$,
\[d_{Kolmogorov}(D,E)=\sup_{v}|F_D(v)-F_E(v)|.\]
In the lower bound proofs, we use the fact that when two distributions are ``\textit{close}'', it is hard to distinguish the two distributions with a small number of samples. The distance metric here is the \textbf{Total Variation (TV) \ distance}. For any measurable space $\Omega,\mathcal{F}$ and probability measures $P,Q$, the total variation distance is defined by
\[d_{TV}(P,Q)=\sup_{A\in \mathcal{F}}|P(A)-Q(A)|.\]
When two distributions have TV distance $\delta$, suppose that the learner is given a sample from one of the distributions but is not told the origin of the sample. The probability that the learner can correctly guess the origin of the sample is at most $\frac{1}{2}+\delta$. This means that when the TV distance between two distributions is small, it is hard to distinguish the two distributions with a single sample.

The multi-dimensional TV distance is hard to analyze. Therefore, we use the \textbf{Hellinger \ distance} to bound the TV distance in the multi-dimensional setting. The (square of the) Hellinger distance is defined by
\[d_H^2(P,Q)=\frac{1}{2}\int_{\Omega}\left(\sqrt{P(dx)}-\sqrt{Q(dx)}\right)^2.\]
A nice property of Hellinger distance (see Eg. \cite{gibbs2002choosing} for the reference of the properties of the Hellinger distance) is that when $(\Omega,\mathcal{F})$ is a product space, 
\[1-d_H^2(P,Q)=1-\prod_{i}d_H^2(P_i,Q_i).\]
The TV distance is bounded by the Hellinger distance via the following inequality:
\[d_{TV}(P,Q)\leq \sqrt{2}d_H(P,Q).\]

\subsection{Concentration Inequalities}
To prove that two distributions are close (under Kolmogorov distance), in both the sample complexity problem and the query complexity problem we need to estimate the quantile of an empirical distribution at a specific value. The major concentration inequality we use is Bernstein's Inequality, which provides a better bound compared to the Chernoff-Hoeffding bound when the quantile is close to 0 or 1. This is especially helpful for us in getting a tighter bound for the query complexity.

\begin{Theorem}[Bernstein's Inequality for Bounded Variables]
\label{thm:Bernstein-bounded}
Let $X_1, \ldots, X_N$ be independent mean-zero random variables such that $|X_i| \leq M$ for all $i$ and $\sigma^2 := \sum_{i \in [N]} \E[X^2_i]$. Then, for any $\varepsilon \geq 0$, % we have
\[
\pr \left[\sum_{i = 1}^N X_i \geq \varepsilon \right] \leq \exp\left(- \frac{\varepsilon^2/2}{\sigma^2 + M\varepsilon/3}\right) \quad \text{and} \quad \pr \left[\sum_{i = 1}^N X_i \leq -\varepsilon \right]\leq \exp\left(- \frac{\varepsilon^2/2}{\sigma^2 + M\varepsilon/3}\right) \enspace .
\]
\end{Theorem}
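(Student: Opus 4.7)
The plan is to follow the classical Chernoff--Bernstein route: bound the moment generating function (MGF) of each $X_i$, take a product using independence, multiply by $e^{-\lambda \varepsilon}$ via Markov, and optimize $\lambda$. Since the second tail inequality follows from the first by applying it to $-X_1,\ldots,-X_N$ (which are also mean-zero, bounded by $M$, with the same variance sum), I will only argue the upper-tail bound.

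The main technical step is to establish the MGF bound
\[
\E\bigl[e^{\lambda X_i}\bigr] \;\leq\; \exp\!\left(\frac{\lambda^2 \E[X_i^2]/2}{1-\lambda M/3}\right) \qquad\text{for } 0\leq \lambda < 3/M.
\]
To prove this, I expand $e^{\lambda X_i}$ as a Taylor series. Using $\E[X_i]=0$ and the bound $|X_i|^k \leq M^{k-2} X_i^2$ for every $k\geq 2$, I obtain
\[
\E[e^{\lambda X_i}] \;\leq\; 1 + \sum_{k\geq 2} \frac{\lambda^k M^{k-2} \E[X_i^2]}{k!}.
\]
Then I invoke the elementary inequality $k!\geq 2\cdot 3^{k-2}$ for $k\geq 2$ to dominate the tail by a geometric series, yielding $\E[e^{\lambda X_i}] \leq 1 + \frac{\lambda^2 \E[X_i^2]/2}{1-\lambda M/3}$. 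Applying $1+y\leq e^y$ gives the displayed MGF bound.

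Next I combine these bounds across coordinates. By independence,
\[
\E\!\left[e^{\lambda \sum_i X_i}\right] \;=\; \prod_{i=1}^{N}\E[e^{\lambda X_i}] \;\leq\; \exp\!\left(\frac{\lambda^2 \sigma^2/2}{1-\lambda M/3}\right),
\]
where $\sigma^2=\sum_i \E[X_i^2]$. Markov's inequality applied to $e^{\lambda \sum_i X_i}$ yields
\[
\Pr{\textstyle\sum_i X_i \geq \varepsilon} \;\leq\; \exp\!\left(-\lambda\varepsilon + \frac{\lambda^2\sigma^2/2}{1-\lambda M/3}\right)
\]
for every $0\leq \lambda < 3/M$.

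Finally I optimize the exponent in $\lambda$. I would set $\lambda = \varepsilon/(\sigma^2 + M\varepsilon/3)$, which automatically lies in $[0,3/M)$ and makes $1-\lambda M/3 = \sigma^2/(\sigma^2 + M\varepsilon/3)$. Plugging in and simplifying turns the exponent into $-\varepsilon^2/\bigl(2(\sigma^2 + M\varepsilon/3)\bigr)$, matching the claimed bound. The only subtlety is checking that the chosen $\lambda$ is legal (nonnegative and below $3/M$), which is immediate since $\varepsilon\geq 0$ and $M\varepsilon/3\geq 0$; I do not anticipate a serious obstacle, but the cleanest bookkeeping step is the geometric-series domination using $k!\geq 2\cdot 3^{k-2}$, which is where the constant $1/3$ in the denominator of the final bound arises.
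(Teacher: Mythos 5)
Your proof is correct: this is the classical Bernstein inequality, which the paper simply states as a known concentration bound without proof, and your argument is the standard Chernoff--MGF derivation (the moment bound $\E[X_i^k]\leq M^{k-2}\E[X_i^2]$, the inequality $k!\geq 2\cdot 3^{k-2}$, the geometric-series summation, independence, Markov, and the optimal choice $\lambda=\varepsilon/(\sigma^2+M\varepsilon/3)$ all check out, and the exponent indeed simplifies to $-\varepsilon^2/\bigl(2(\sigma^2+M\varepsilon/3)\bigr)$). The only pedantic point is the degenerate case $\sigma^2=0$ (where your $\lambda$ hits the boundary $3/M$), but there all $X_i=0$ almost surely and the bound is trivial, so nothing is missing in substance.
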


Together with the union bound, \cite{GHTZ-COLT21} derived a multi-dimensional version of Bernstein's Inequality which will be helpful in the proof of the sample complexity upper bound.

\begin{Lemma}[Lemma 5 of \cite{GHTZ-COLT21}]
\label{lma:Bernstein}
    For any value distribution $\bD$ with support $[0,1]^n$, $N$ and $\delta \in (0, 1)$, consider the empirical distribution $\bE$ from $N$ i.i.d. samples from $\bD$. Then, with probability at least $1 - \delta$, we have
    \[
    |F_{D_i}(v) - F_{E_i}(v)| ~\leq~ \sqrt{F_{D_i}(v) (1 - F_{D_i}(v)) \cdot \frac{2 \log (2nN \delta^{-1})}{N}} ~+~ \frac{\log (2nN \delta^{-1})}{N}
    \]
    for all $i \in [n]$ and $v \in [0, 1]$.
\end{Lemma}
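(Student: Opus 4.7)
The plan is to reduce Lemma~\ref{lma:Bernstein} to a pointwise application of Bernstein's inequality (Theorem~\ref{thm:Bernstein-bounded}) for a single coordinate and a single threshold, and then extend the bound to all pairs $(i,v)\in[n]\times[0,1]$ via a union bound over a finite grid together with a monotonicity argument. Fix $i\in[n]$ and $v\in[0,1]$ and set $X_j := \one[v_{j,i}\le v] - F_{D_i}(v)$ for $j=1,\ldots,N$, where $v_{j,i}$ denotes the $i$-th coordinate of the $j$-th sample. These are i.i.d.\ mean-zero variables with $|X_j|\le M=1$ and total variance $\sigma^2 = N\cdot F_{D_i}(v)(1-F_{D_i}(v))$, and they satisfy $\sum_{j} X_j = N(F_{E_i}(v) - F_{D_i}(v))$. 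Applying Theorem~\ref{thm:Bernstein-bounded} with target deviation $\varepsilon = Nt$, where $t$ denotes the right-hand side of the lemma, a short check of the exponent $\varepsilon^2/\bigl(2(\sigma^2 + M\varepsilon/3)\bigr)$ using $(a+b)^2\ge a^2+b^2$ shows that each of the two tails has probability at most $\delta/(2nN)$.

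To pass from the pointwise bound to a uniform bound over $v\in[0,1]$, I would fix for each coordinate $i$ a deterministic grid $\mathcal{G}_i = \{q_{i,0},q_{i,1},\ldots,q_{i,2N}\}\subset[0,1]$ of quantiles of $D_i$ at levels $0,\tfrac{1}{2N},\tfrac{2}{2N},\ldots,1$. Applying the pointwise bound at each $q_{i,k}$ and taking a union bound over $i\in[n]$, over the $2N+1$ grid points, and over two tails absorbs a factor of $O(nN)$, which is exactly what produces $\log(2nN\delta^{-1})$ inside $t$ (up to constants that can be folded into the additive slack term). For arbitrary $v\in[0,1]$, $v$ lies between two consecutive grid points $q_{i,k}$ and $q_{i,k+1}$ with $|F_{D_i}(q_{i,k+1})-F_{D_i}(q_{i,k})|\le 1/(2N)$; by monotonicity of both $F_{E_i}$ and $F_{D_i}$, the signed deviation $F_{E_i}(v)-F_{D_i}(v)$ is then sandwiched by the deviations at the two endpoints, up to a slack of $1/(2N)$, which is absorbed into the additive $\log(2nN\delta^{-1})/N$ term of $t$.

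The main obstacle will be preserving the \emph{variance-dependent} form $\sqrt{F_{D_i}(v)(1-F_{D_i}(v))}$ when passing from the grid to arbitrary $v$: because $F_{D_i}(v)(1-F_{D_i}(v))$ is not monotone in $v$, a naive sandwich bound at an interior $v$ could in principle inflate the variance proxy above its values at the two neighboring grid points (e.g., if $F_{D_i}$ crosses $1/2$ inside a grid interval). The cleanest resolution is that the quantile spacing $1/(2N)$ ensures $F_{D_i}(v)(1-F_{D_i}(v))$ at an interior $v$ differs from its value at the nearest grid point by $O(1/N)$, so the square-root variance term at $v$ is bounded by a constant multiple of the corresponding term at the nearest grid point plus $O(1/\sqrt{N})$ — both controlled by the additive slack already present in $t$. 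Assembling these ingredients yields the stated simultaneous bound over all $(i,v)\in[n]\times[0,1]$.
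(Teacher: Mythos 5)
The paper gives no proof of this lemma --- it is imported verbatim as Lemma~5 of \cite{GHTZ-COLT21} --- so there is no in-paper argument to compare against; your reconstruction (pointwise Bernstein for each coordinate and threshold, a union bound over an $O(N)$-point quantile grid per coordinate, which is exactly where the $N$ inside $\log(2nN\delta^{-1})$ comes from, and monotone interpolation between grid points) is the standard proof and is correct in substance. The only caveats are bookkeeping ones: with $2N+1$ grid points and per-event failure probability $\delta/(2nN)$ the union bound yields roughly $2\delta$ rather than $\delta$; the $1/(2N)$ interpolation slack and the $O(\sqrt{\Gamma/N})$ shift in the variance proxy sit \emph{on top of} the grid-point bound rather than inside it; and for distributions with atoms the sandwich argument needs the left-limits $F(q^-)$ included among the union-bounded events --- all of which are repaired by adjusting constants and none of which affects the substance of the argument.
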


For simplicity of the notation, we define 
\[
\Gamma ~:=~ \frac{\log (2n N \delta^{-1})}{N}
\]
to be the parameter for the learning error, which is further parameterized by $n, \delta, N$. Then, the bound in \Cref{lma:Bernstein} can be simplified as 
\[
|F_{D_i}(v) - F_{E_i}(v)| ~\leq~ \sqrt{F_{D_i}(v) (1 - F_{D_i}(v)) \cdot 2\Gamma} ~+~ \Gamma.
\]

\section{$n/\epsilon^2$ Upper Bound for Sample Complexity}
\label{sec:sample-upper}

In this section, we prove the sample complexity of \SUDPP problem has a $\OTild(n/\epsilon^2)$ upper bound.
To prove the upper bound, we adopt the idea of strong revenue monotonicity. Introduced by \cite{DHP-STOC16}, a stochastic revenue maximization problem is called \textit{strongly revenue monotone} if the revenue does not decrease under the same auction setting, when we replace the underlying value distribution with another distribution that strongly dominates it. \cite{GHTZ-COLT21} shows a $\OTild(\frac{n}{\epsilon^2})$ sample complexity upper bound for all strongly monotone stochastic optimization problems. However, as the strong monotonicity does not hold for our problem,  we prove a weaker ``\textit{approximately strong monotonicity}'', which was introduced by \cite{CHMY-EC23} and shown to be powerful for proving sample complexity upper bounds for single-parameter auctions. Our contribution is to prove the approximately strong monotonicity for \SUDPP, and further show that the approximately strong monotonicity guarantees the same sample complexity upper bound for strongly monotone problems. Specifically, we prove the following sample complexity:

\begin{Theorem}
\label{thm:sample-upper}
    For \SUDPP problem with $\bD$ being the product value distribution, there exists an algorithm that uses $N = O( \log^4(n/(\epsilon\delta)) \cdot \log^2(\log(n/(\epsilon\delta))) \cdot n \cdot \epsilon^{-2})$ samples from $\bD$ and outputs $\bp = (p_1, \cdots, p_n)$, such that $\rev_{\bD}(\bp) \geq \rev^*_{\bD}  - \epsilon$ holds with probability $1 - \delta$. 
    Furthermore, the algorithm runs in $O(n^{\poly(\epsilon^{-1})})$ time.
\end{Theorem}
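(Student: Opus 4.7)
The plan is to follow the template of \cite{GHTZ-COLT21}, replacing their exact strong monotonicity with an approximate (Kolmogorov--Lipschitz) version tailored to unit-demand buyers. Given a sample budget of $N=\tilde\Theta(n/\eps^2)$, my algorithm is: (a) draw $N$ samples to form the empirical product distribution $\bE$; (b) construct a corrected product distribution $\tbD=\tilde D_1\times\cdots\times\tilde D_n$ whose CDFs are obtained by shifting $F_{E_i}$ downward by (an empirical over-estimate of) the Bernstein error of \Cref{lma:Bernstein}, so that with probability $\geq 1-\delta$, $F_{\tilde D_i}(v)\leq F_{D_i}(v)$ for every $i,v$ (i.e.\ $\tbD\succeq\bD$ coordinate-wise) while maintaining the pointwise Kolmogorov control $|F_{\tilde D_i}(v)-F_{D_i}(v)|=O(\sqrt{\Gamma F_{D_i}(v)(1-F_{D_i}(v))})+O(\Gamma)$; (c) run the PTAS of \Cref{lma:ptas} on $\tbD$ to obtain $\bp$ with $\rev_{\tbD}(\bp)\geq \rev^*_{\tbD}-\eps/3$, which inherits the $O(n^{\poly(\eps^{-1})})$ running time.

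The heart of the proof is an approximate strong monotonicity lemma: for any product distributions $\bD\preceq\tbD$ on $[0,1]^n$ satisfying the above pointwise Kolmogorov control, every pricing $\bp$ satisfies $|\rev_{\tbD}(\bp)-\rev_{\bD}(\bp)|\leq \eps/3$. I would prove it by a coordinate-wise hybrid argument: define $\bD^{(k)}:=\tilde D_1\times\cdots\times\tilde D_k\times D_{k+1}\times\cdots\times D_n$, telescope the revenue difference, and at the $k$-th step couple the $k$-th coordinate via the inverse CDF so that $\tilde v_k\geq v_k$. Conditioned on the other coordinates, whether item $k$ is the unit-demand winner is governed by the threshold $\tau_k:=p_k+\bigl(\max_{\ell\neq k}(v_\ell-p_\ell)\bigr)^{+}$, so the only realizations in which the revenue changes as $v_k$ moves to $\tilde v_k$ are those with $v_k<\tau_k\leq\tilde v_k$; the conditional probability of that event is exactly $F_{D_k}(\tau_k)-F_{\tilde D_k}(\tau_k)$, bounded by Bernstein by $O(\sqrt{\Gamma F_{D_k}(\tau_k)(1-F_{D_k}(\tau_k))})+O(\Gamma)$, and the revenue change is at most $1$ in magnitude.

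Summing across $k$, the key point for avoiding a gratuitous factor of $n$ is the unit-demand budget $\sum_k \Pr{\text{item }k\text{ wins}}\leq 1$, which rewrites as $\sum_k \E[1-F_{D_k}(\tau_k)]\leq 1$ and, combined with Cauchy--Schwarz, gives $\sum_k \E\!\left[\sqrt{F_{D_k}(\tau_k)(1-F_{D_k}(\tau_k))}\right]\leq \sqrt{n}$. The total hybrid error therefore scales as $O(\sqrt{\Gamma n}+n\Gamma)$, and requiring it to be at most $\eps/3$ forces $\Gamma=\tilde O(\eps^2/n)$, i.e.\ $N=\tilde O(n/\eps^2)$. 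The wrap-up is then mechanical: applying the Lipschitz bound both at the PTAS output $\bp$ and at the unknown optimum $\bp^*_{\bD}$ yields
$$\rev_{\bD}(\bp)\geq \rev_{\tbD}(\bp)-\eps/3\geq \rev^*_{\tbD}-2\eps/3\geq \rev_{\tbD}(\bp^*_{\bD})-2\eps/3\geq \rev_{\bD}(\bp^*_{\bD})-\eps = \rev^*_{\bD}-\eps.$$

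The main obstacle is the approximate monotonicity lemma. The non-monotonicity example from the introduction shows that per-item revenue contributions need not move monotonically even for a fixed pricing, so the argument has to pin down the precise ``winner-flips-to-$k$'' event, observe that it lies in a $v_k$-slab whose probability equals exactly the Kolmogorov gap at the threshold, and exploit that the price swing on each flip is bounded by $1$; this is precisely where the unit-demand structure does real work (as opposed to a generic multi-parameter setting). Achieving the sharp $\tilde O(n/\eps^2)$ rate instead of the $\tilde O(n^2/\eps^2)$ pseudo-dimension rate of \cite{morgenstern2016learning,cai2017learning} further requires pairing the variance-aware form of Bernstein with the unit-demand budget $\sum_k \Pr{\text{item }k\text{ wins}}\leq 1$ via Cauchy--Schwarz; without the budget one would only obtain $\sqrt{\Gamma}\cdot n$ in place of $\sqrt{\Gamma n}$, losing a full factor of $\sqrt{n}$.
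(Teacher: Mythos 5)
Your proposal is correct in its overall architecture and matches the paper's high-level plan (empirical distribution, a dominating ``shifted'' distribution, an approximate strong-monotonicity lemma, and the additive PTAS of \Cref{lma:ptas} applied twice around the unknown optimum), but the proof of the central lemma is genuinely different. The paper (\Cref{lma:approx-sm}) works directly with the winning probabilities $P_{G_i}-P_{H_i}$: it integrates by parts to kill one term, and bounds the other by a case analysis on the aggregate survival mass $S_{G_i}(\theta)$ --- when $S_{G_i}(\theta)$ is large the winning probability itself is exponentially small, and when it is small the product difference is controlled by \Cref{clm:diff-small} plus Cauchy--Schwarz, with \Cref{clm:sum-integral-bound} playing the role of your budget $\sum_k \Pr{\text{item }k\text{ wins}}\leq 1$. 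Your coordinate-wise hybrid with the inverse-CDF coupling replaces all of this machinery by the single observation that the revenue can only change on the slab $\{v_k<\tau_k\leq \tilde v_k\}$, whose probability is exactly the Kolmogorov gap at the (random) threshold $\tau_k$; combined with the variance-aware Bernstein bound and the same unit-demand budget via Cauchy--Schwarz, this yields the same $O(\sqrt{\Gamma n}+\Gamma n)$ error. Your route is arguably cleaner, gives a two-sided bound in one pass (the paper's \Cref{lma:rev-close} is one-sided and must be invoked a second time with the roles of $\bD$ and $\bE$ swapped, which is why \Cref{clm:cdf-dtoe} exists), and saves some logarithmic factors; the paper's route avoids having to reason about couplings across a changing hybrid distribution.

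Two technical points you should not gloss over. First, with point masses the budget identity $\sum_k \E[1-F_{D_k}(\tau_k)]\leq 1$ is false as written: if several items tie for the maximum utility, every tied $k$ satisfies $v_k\geq\tau_k$, so the sum can be as large as $n$ (e.g.\ all values and prices deterministic and equal). You need a fixed tie-breaking rule and the corresponding left limits $F(\tau_k^-)$ versus $F(\tau_k)$, exactly as the paper does in its definitions of $Q_{G_i}$ and $S_{G_i}$. Second, at the $k$-th hybrid step the threshold $\tau_k$ is computed from a mixture of $\tilde D$- and $D$-coordinates, so the budget is not literally $\sum_k\Pr{\text{$k$ wins under }\bD}$; you must use $\tilde v_j\geq v_j$ to argue $\tau_k(\text{hybrid})\geq\tau_k(\bv_{-k})$ and hence $1-F_{D_k}(\tau_k(\text{hybrid}))\leq 1-F_{D_k}(\tau_k(\bv_{-k}))$ before invoking the budget under $\bD$. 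Both are repairable with the care the paper itself exercises, so I regard them as omissions rather than gaps.
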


\subsection{Proof Sketch for \Cref{thm:sample-upper}}

Our main idea is to show the following \Cref{alg:sample} is the desired algorithm for \Cref{thm:sample-upper}. 

\begin{algorithm}
\caption{\textsc{Learning $\bD$ via Samples}}
\label{alg:sample}
\begin{algorithmic}[1]
\State \textbf{input:} \SUDPP instance with value distribution $\bD$, error bound $\epsilon$, failure probability bound $\delta$.
\State Reveal $N = C \cdot \log^4(n/(\epsilon\delta)) \cdot \log^2(\log(n/(\epsilon\delta))) \cdot n \cdot \epsilon^{-2}$ i.i.d. samples from the product distribution $\bD$, and construct the empirical distribution $\bE$.
\State Run an Additive-PTAS algorithm for $\bE$, and get price vector $\bp' = (p'_1, \cdots, p'_n)$, such that $\rev_{\bE}(\bp') \geq \rev^*_{\bE} - \frac{\epsilon}{3}$.
\State \textbf{output:} price vector $\bp'$.
\end{algorithmic}
\end{algorithm}

The proofs go via the following four steps:

\noindent \textbf{Step 1: The empirical distribution concentrates to $\bD$.} This is shown via \Cref{lma:Bernstein}.

\noindent \textbf{Step 2: When $\bE$ is close to $\bD$,  $\rev^*_{\bE}$ is close to $\rev^*_{\bD}$.} The second step is to show that we still gain a good revenue guarantee when switching from the original distribution $\bD$ to the empirical distribution $\bE$. To achieve this, we prove the following stronger \Cref{lma:rev-close}:

\begin{Lemma}
\label{lma:rev-close}
    For any product value distribution $\bD$ and $\bE$ with support $[0,1]^n$, suppose condition $|F_{D_i}(v) - F_{E_i}(v)| ~\leq~ \sqrt{F_{D_i}(v) (1 - F_{D_i}(v)) \cdot 2\Gamma} ~+~ \Gamma$ holds for every $i \in [n]$ and $v \in [0, 1]$. Then, for any price vector $\bp \in [0,1]^n$, we have
    \[
    \rev_{\bE}(\bp) ~\geq~ \rev_{\bD}(\bp) - 300\log \Gamma^{-1} \cdot \left(\Gamma n + \sqrt{\log \Gamma^{-1} \cdot \Gamma n}\right).
    \]
\end{Lemma}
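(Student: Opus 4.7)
The plan is to prove a two-sided approximate strong monotonicity for fixed pricing via an auxiliary ``upper-envelope'' distribution, then combine using the triangle inequality. Define the product distribution $\bG$ coordinate-wise by $F_{G_i}(v) := \min(F_{D_i}(v), F_{E_i}(v))$; then $\bG \succeq \bD$ and $\bG \succeq \bE$ coordinate-wise. Since $F_{G_i}(v)$ lies between $F_{D_i}(v)$ and $F_{E_i}(v)$, the hypothesis gives $|F_{D_i}(v) - F_{G_i}(v)| \leq |F_{D_i}(v) - F_{E_i}(v)|$, and a short calculation (using that $F_{D_i}(1 - F_{D_i})$ and $F_{E_i}(1 - F_{E_i})$ are within a constant factor of each other when the CDFs are close) gives an analogous Bernstein-style bound for the pair $(\bE, \bG)$ as well, losing only constants.

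The core technical step is the following approximate monotonicity: for any product distributions $\bD' \preceq \bG'$ whose CDFs satisfy $|F_{D'_i}(v) - F_{G'_i}(v)| \leq \sqrt{F_{D'_i}(v)(1 - F_{D'_i}(v)) \cdot 2\Gamma} + \Gamma$ for all $i,v$, and any price vector $\bp$,
\[
\bigl|\rev_{\bG'}(\bp) - \rev_{\bD'}(\bp)\bigr| \;\leq\; O(\log \Gamma^{-1}) \cdot \bigl(\Gamma n + \sqrt{\log \Gamma^{-1} \cdot \Gamma n}\bigr).
\]
I would prove this by a quantile coupling: draw $U_i \sim U[0,1]$ i.i.d.\ and set $v_i = F_{D'_i}^{-1}(U_i)$, $w_i = F_{G'_i}^{-1}(U_i)$, so that $w_i \geq v_i$ almost surely. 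The revenues $\rev_{\bD'}$ and $\rev_{\bG'}$ differ only when the identity of the utility-maximizing item changes between $\bv$ and $\bw$. For each item $j$, the event ``$j$ wins under $\bw$ but not under $\bv$'' forces $U_j$ to lie in a quantile interval of length $F_{D'_j}(t_j) - F_{G'_j}(t_j)$, where the threshold $t_j$ is determined by the $\bv$-winner's utility plus $p_j$. Summing over $j$ and integrating over the $\bv$-winner's distribution bounds the total winner-switch probability by an expectation of CDF gaps, which Bernstein's inequality controls. The symmetric event ``$j$ won under $\bv$ but loses under $\bw$'' is handled by the same argument with roles reversed, giving both sides of the inequality.

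Applying this approximate monotonicity twice, to $(\bD, \bG)$ and $(\bE, \bG)$, yields $|\rev_{\bG}(\bp) - \rev_{\bD}(\bp)| \leq \mathrm{err}$ and $|\rev_{\bG}(\bp) - \rev_{\bE}(\bp)| \leq \mathrm{err}$; the triangle inequality then delivers $\rev_{\bE}(\bp) \geq \rev_{\bD}(\bp) - 2\,\mathrm{err}$, which after absorbing the constant matches the stated bound.

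The main obstacle is extracting the improved $\sqrt{n \Gamma \log \Gamma^{-1}}$ scaling rather than the $n \sqrt{\Gamma}$ that a naive per-item union bound would yield. The key is to discretize the quantile axis geometrically into $O(\log \Gamma^{-1})$ dyadic levels, on each of which the variance factor $F_{D'_j}(1 - F_{D'_j})$ in Bernstein is roughly constant, and then to aggregate the per-item CDF gaps via Cauchy--Schwarz over items, turning a factor of $n$ into $\sqrt{n}$. A secondary subtlety is handling atoms in $F_{D'_i}$: when the coupling is not strictly increasing, ties in utility-maximization must be broken consistently between $\bv$ and $\bw$, which is straightforward but needs care to ensure that no ``phantom'' winner-switch events are introduced.
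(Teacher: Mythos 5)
Your architecture genuinely differs from the paper's. The paper routes through a \emph{downward}-shifted distribution $\bbD$ dominated by both $\bD$ and $\bE$, handles the $\bD\leftrightarrow\bbD$ leg by a total-variation bound imported from \cite{GHTZ-COLT21} (\Cref{lma:tv-d-dhat}), and only proves the \emph{one-sided} monotonicity statement ``moving from a dominated distribution to a nearby dominating one loses little revenue'' (\Cref{lma:approx-sm}), applied once to the pair $(\bbD,\bE)$. You instead take the upper envelope $\bG$ and need monotonicity in \emph{both} directions, since your second leg requires $\rev_{\bE}(\bp)\geq\rev_{\bG}(\bp)-\mathrm{err}$, i.e.\ that descending from the dominating distribution also loses little. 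This is strictly stronger than what the paper's lemma gives, but your coupling can in principle deliver it: since all prices lie in $[0,1]$, $|\rev_{\bG'}(\bp)-\rev_{\bD'}(\bp)|$ is at most the probability that the identity of the utility-maximizing item changes, so one bound on the winner-switch event covers both directions. (The conditioning needs care: the winner's utility $u^*$ depends on $U_j$, so for the event ``$j$ newly wins'' you must replace $u^*$ by $\max_{k\neq j}(v_k-p_k)^+$, which is legitimate exactly because $j$ is not the $\bv$-winner on that event.)

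The genuine gap is at the quantitative crux. After summing the per-item quantile gaps and applying Cauchy--Schwarz you are left with $n\Gamma+\sqrt{n\Gamma\cdot S}$, where $S=\sum_j(1-F_j(p_j+u^*))$ is the total survival mass above the realized maximum utility. Your proposed fix for the $n\sqrt{\Gamma}$ problem --- dyadic discretization of the quantile axis plus Cauchy--Schwarz over items --- does not resolve this: $S$ can a priori be as large as $n$, and dyadic bucketing returns you to the same $\sqrt{n\Gamma\cdot S}$ expression, i.e.\ to the naive $n\sqrt{\Gamma}$ bound you set out to beat. What actually saves the bound is that $u^*$ is the \emph{maximum} utility, so little mass survives above it; this is precisely the paper's Case-1 analysis (if $S_{G_i}(\theta)\geq 2\log\gamma^{-1}$ then the win probability $Q_{G_i}(\theta)$ is itself at most $\gamma$) combined with \Cref{clm:sum-integral-bound} (the total $f_{G_i}$-mass of the region where $S<\beta$ is at most $\beta+1$). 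In your coupling language the missing step is showing $\E[S(u^*)]=O(\log\Gamma^{-1})$, e.g.\ by observing that $\sum_j\pr[w_j'-p_j>\max_{k\neq j}(v_k-p_k)^+]$ is, up to the CDF gaps, the expected number of winners and hence $O(1)$. This is the heart of the proof and is absent from your sketch; the rest of your plan is sound but this step must be supplied.
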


When \Cref{lma:rev-close} is satisfied, taking $\bp= \bp^*_{\bD}$ implies $\rev^*_{\bE}$ is close to $\rev^*_{\bD}$. We defer the proof of \Cref{lma:rev-close} to \Cref{sec:main-lma}.

\noindent \textbf{Step 3: There is a good PTAS algorithm for $\bE$.} The third step is to run a PTAS algorithm for the empirical distribution $\bE$. We apply the additive PTAS algorithm in \Cref{lma:ptas} by \cite{cai2015extreme} on the empirical distribution $\bE$ to get a pricing $\bp'$ such that
\[\rev_{\bE}(\bp')\geq \rev_{\bE}^*-\frac{\eps}{2}.\]

\noindent \textbf{Step 4: $\bp'$ is also near-optimal for $\bD$.} The last step of our proof is to show that the price vector $\bp'$ given by \Cref{lma:ptas} is also near-optimal for distribution $\bD$. Our main idea is to apply \Cref{lma:rev-close} and show that $\rev_{\bD}(\bp')$ is close to $\rev_{\bE}(\bp')$. This is achievable when we convert the CDF bound given by \Cref{lma:Bernstein} to a function of $F_{E_i}(v)$: 

\begin{restatable}{Claim}{dtoe}
    \label{clm:cdf-dtoe}
    For any product value distribution $\bD$ and $\bE$ with support $[0,1]^n$, suppose condition $|F_{D_i}(v) - F_{E_i}(v)| ~\leq~ \sqrt{F_{D_i}(v) (1 - F_{D_i}(v)) \cdot 2\Gamma} + \Gamma$ holds for every $i \in [n]$ and $v \in [0, 1]$. Then, for every $i \in [n]$ and $v \in [0, 1]$ for some $\Gamma \leq 0.01$, we have
    \[
    |F_{D_i}(v) - F_{E_i}(v)| ~\leq~ \sqrt{F_{E_i}(v) (1 - F_{E_i}(v)) \cdot 16 \Gamma} ~+~ 8 \Gamma.
    \]
\end{restatable}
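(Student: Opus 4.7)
The plan is to reduce the task to a purely scalar algebraic statement about two numbers $a,b \in [0,1]$ satisfying a given concentration-style bound. Fix $i$ and $v$, and abbreviate $a := F_{D_i}(v)$, $b := F_{E_i}(v)$, and $g(x) := x(1-x)$. The hypothesis reads
\[
|a - b| \;\le\; \sqrt{2\,g(a)\,\Gamma}\;+\;\Gamma,
\]
and the goal is
\[
|a - b| \;\le\; \sqrt{16\,g(b)\,\Gamma}\;+\;8\Gamma.
\]
Since $g$ is the only quantity that changes between the two bounds, the entire content of the claim is the comparison $\sqrt{g(a)} \lesssim \sqrt{g(b)} + O(\sqrt{\Gamma})$, so I will focus on establishing exactly that.

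\textbf{Step 1: Lipschitz control of $g$.} A one-line identity gives $g(a) - g(b) = (a-b)(1 - a - b)$, hence $|g(a) - g(b)| \le |a - b|$ (since $|1-a-b| \le 1$ on $[0,1]^2$). In particular
\[
g(a) \;\le\; g(b) + |a - b| \;\le\; g(b) + \sqrt{2\,g(a)\,\Gamma} + \Gamma,
\]
where the second inequality is the hypothesis. This is the key reduction: it removes any direct dependence on the direction of the stochastic comparison.

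\textbf{Step 2: A quadratic inequality in $\sqrt{g(a)}$.} Writing $u := \sqrt{g(a)} \ge 0$, Step~1 becomes $u^{2} - \sqrt{2\Gamma}\,u - (g(b) + \Gamma) \le 0$. Solving the quadratic and using $\sqrt{x+y} \le \sqrt{x} + \sqrt{y}$ yields
\[
\sqrt{g(a)} \;\le\; \frac{\sqrt{2\Gamma} + \sqrt{6\Gamma + 4\,g(b)}}{2} \;\le\; \sqrt{g(b)} + 2\sqrt{\Gamma}.
\]

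\textbf{Step 3: Repackage.} Multiplying by $\sqrt{2\Gamma}$ gives $\sqrt{2\,g(a)\,\Gamma} \le \sqrt{2\,g(b)\,\Gamma} + 2\sqrt{2}\,\Gamma$. Substituting back into the hypothesis,
\[
|a - b| \;\le\; \sqrt{2\,g(b)\,\Gamma} + (2\sqrt{2} + 1)\,\Gamma \;\le\; \sqrt{16\,g(b)\,\Gamma} + 8\Gamma,
\]
which is the desired bound with the stated constants $16$ and $8$ (with room to spare). The smallness hypothesis $\Gamma \le 0.01$ is not actually needed for this route; it only ensures that the implicit error terms are well below the loose slack I am leaving in Steps~2--3.

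\textbf{Expected obstacle.} Nothing deep is hiding here: the whole difficulty is avoiding a messy case analysis on where $a$ and $b$ sit relative to $\tfrac{1}{2}$ (so that $g(x)$ is not overly distorted when one CDF is near an endpoint). The Lipschitz identity $|g(a) - g(b)| \le |a-b|$ in Step~1 is the trick that makes the argument uniform; once that is in place the remainder is just solving a quadratic and absorbing constants. The only bookkeeping to be careful about is choosing constants in Step~2 generous enough to guarantee the factor $16$ and additive $8\Gamma$ in the final statement, which the AM--GM-style inequality $\sqrt{x+y} \le \sqrt{x} + \sqrt{y}$ delivers cleanly.
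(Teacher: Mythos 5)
Your proof is correct, and it takes a genuinely different and cleaner route than the paper's. The paper proves \Cref{clm:cdf-dtoe} by a four-way case analysis on the location of $F_{D_i}(v)$ (below $8\Gamma$, above $1-8\Gamma$, in $[8\Gamma,0.5]$, in $[0.5,1-8\Gamma]$), in each case arguing directly that $\sqrt{F_{D_i}(v)(1-F_{D_i}(v))}$ is within a constant factor of $\sqrt{F_{E_i}(v)(1-F_{E_i}(v))}$ or that both error terms are $O(\Gamma)$; the endpoint cases are precisely where the hypothesis $\Gamma\leq 0.01$ gets invoked. Your identity $g(a)-g(b)=(a-b)(1-a-b)$, which gives $|g(a)-g(b)|\leq|a-b|$, followed by the quadratic inequality in $u=\sqrt{g(a)}$, collapses all of these cases into the single uniform bound $\sqrt{g(a)}\leq\sqrt{g(b)}+2\sqrt{\Gamma}$. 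I verified the arithmetic: the discriminant is $2\Gamma+4(g(b)+\Gamma)=6\Gamma+4g(b)$, the constant $(\sqrt{2}+\sqrt{6})/2\approx 1.93<2$, and the final constants $2\sqrt{2}+1<8$ and $\sqrt{2}<\sqrt{16}$ leave ample slack. Your remark that $\Gamma\leq 0.01$ is not needed is also correct --- your argument proves a slightly stronger statement (with better constants and no smallness assumption) than the one in the paper. The paper's direct case analysis buys nothing here beyond being the obvious first attack; your version is shorter, treats $a$ and $b$ symmetrically, and eliminates the bookkeeping around the $0.5$ threshold.
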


We defer the proof of \Cref{clm:cdf-dtoe} to \Cref{sec:dtoe}, and first finish the proof of \Cref{thm:sample-upper}:

\begin{proof}[Proof of \Cref{thm:sample-upper}]
    Recall that we set $N = C \cdot \log^4(n/(\epsilon\delta)) \cdot \log^2(\log(n/(\epsilon\delta))) \cdot n \cdot \epsilon^{-2}$ for a sufficiently large constant $C$,  and defined $\Gamma = \log(2nN\delta^{-1}) \cdot N^{-1}$. Then, we have 
    \begin{align}
        \Gamma = \frac{\epsilon^2}{ C' \cdot n \cdot \log^3(n/(\epsilon\delta)) \cdot \log(\log(n/(\epsilon\delta)))} \label{eq:gamma-exact}
    \end{align}
    for some sufficiently large constant $C'$. We assume in the proof that $\Gamma < 0.01$. This is true when $C$ and the corresponding $C'$ are sufficiently large.

    To prove \Cref{thm:sample-upper}, we first apply \Cref{lma:Bernstein}, which guarantees that with probability $1 - \delta$ we have $|F_{D_i}(v) - F_{E_i}(v)| ~\leq~ \sqrt{F_{D_i}(v) (1 - F_{D_i}(v)) \cdot 2\Gamma} ~+~ \Gamma$ for every $i \in [n]$ and $v \in [0, 1]$. Assuming  $|F_{D_i}(v) - F_{E_i}(v)| ~\leq~ \sqrt{F_{D_i}(v) (1 - F_{D_i}(v)) \cdot 2\Gamma} ~+~ \Gamma$ holds, applying \Cref{lma:rev-close} with $\bp = \bp^*_{\bD}$ gives
\begin{align}
    \rev^*_{\bE} ~\geq~ \rev_{\bE}(\bp^*_{\bD}) ~&\geq~  \rev_{\bD}(\bp^*_{\bD}) - 300\log \Gamma^{-1} \cdot \left(\Gamma n + \sqrt{\log \Gamma^{-1} \cdot \Gamma n}\right) \notag \\
    ~&\geq~ \rev^*_{\bD} - 300\log \Gamma^{-1} \cdot \left(\Gamma n + \sqrt{\log \Gamma^{-1} \cdot \Gamma n}\right) \label{eq:revd-to-reve}.
\end{align}

Next, we apply the PTAS algorithm given by \Cref{lma:ptas}, which guarantees that the output vector $\bp'$ satisfies
\begin{align}
\label{eq:reve-to-bpprime}
    \rev_{\bE}(\bp') ~\geq~ \rev^*_{\bE} - \frac{\epsilon}{2}
\end{align}

Finally, as \Cref{clm:cdf-dtoe} guarantees that $|F_{D_i}(v) - F_{E_i}(v)| ~\leq~ \sqrt{F_{E_i}(v) (1 - F_{E_i}(v)) \cdot 16 \Gamma}$, where the required condition $\Gamma \leq 0.01$ for \Cref{clm:cdf-dtoe} can be satisfied when $C'$ is sufficiently large, we can reuse \Cref{lma:rev-close} with $\Gamma' = 8\Gamma$ and $\bp = \bp'$, which gives
\begin{align}
\label{eq:bpprime-to-revd}
    \rev_{\bD}(\bp') ~\geq~ \rev_{\bE}(\bp') - 2400\log \Gamma^{-1} \cdot \left(\Gamma n + \sqrt{\log \Gamma^{-1} \cdot \Gamma n}\right).
\end{align}
Summing \eqref{eq:revd-to-reve}, \eqref{eq:reve-to-bpprime}, \eqref{eq:bpprime-to-revd} together, we get
\[
\rev_{\bD}(\bp') ~\geq~ \rev^*_{\bD} - \frac{\epsilon}{3} - 3200\log \Gamma^{-1} \cdot \left(\Gamma n + \sqrt{\log \Gamma^{-1} \cdot \Gamma n}\right).
\]
To prove \Cref{thm:sample-upper}, it remains to show $3200\log \Gamma^{-1} \cdot \left(\Gamma n + \sqrt{\log \Gamma^{-1} \cdot \Gamma n}\right) \leq \frac{\epsilon}{2}$. Plugging \eqref{eq:gamma-exact} into the calculation, we have 
\begin{align*}
    \log \Gamma^{-1} \cdot \left(\Gamma n + \sqrt{\log \Gamma^{-1} \cdot \Gamma n}\right) ~&\leq~ 2\sqrt{\log^3 \Gamma^{-1} \cdot \Gamma n} \\
    ~&=~ 2\epsilon \cdot \sqrt{\frac{\log^3 \Gamma^{-1}}{C' \cdot \log^3(n/(\epsilon\delta)) \cdot \log(\log(n/(\epsilon\delta)))}} \\
    ~&\leq~ 2\epsilon \cdot \sqrt{\frac{(8 \log C' \cdot \log (n/(\epsilon \delta)))^3}{C' \cdot \log^3(n/(\epsilon\delta)) \cdot \log(\log(n/(\epsilon\delta)))}} \\
    ~&\leq~ 100 \epsilon \cdot \sqrt{\frac{\log^3 C'}{C'}} ~\leq~ \frac{1}{3200} \cdot \frac{\epsilon}{2},
\end{align*}
where the last inequality holds when $C'$ is sufficiently large.
\end{proof}

\subsection{Proof of \Cref{lma:rev-close}}
\label{sec:main-lma}
In this subsection, we prove \Cref{lma:rev-close}. We first define product distribution $\bbD$: For every $i \in [n]$ and $v \in [0, 1]$, we define
\begin{align}
\label{eq:dhat-def}
    F_{\breve D_i}(v) ~:=~ \max\left\{1, F_{D_i}(v) +  \sqrt{F_{D_i}(v) (1 - F_{D_i}(v)) \cdot 2\Gamma} ~+~ \Gamma\right\}.
\end{align}
Then, the following lemma guarantees that the total variation distance between $\bbD$ and $\bD$ is small:

\begin{Lemma}[Lemma 2 and Lemma 11 in \cite{GHTZ-COLT21}]
\label{lma:tv-d-dhat}
    Let $\bD$ be a product distribution with support $[0, 1]^n$ and $\bbD$ be the product distribution defined by \eqref{eq:dhat-def}. The total variation distance between $\bD$ and $\bbD$ is bounded by $4\sqrt{n\Gamma \cdot \log \Gamma^{-1}}$. 
\end{Lemma}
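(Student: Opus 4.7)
The approach is to route through the Hellinger distance, which tensorizes multiplicatively over product measures (unlike total variation). Combining the product identity $1 - d_H^2(\bD, \bbD) = \prod_{i=1}^n\bigl(1 - d_H^2(D_i, \breve D_i)\bigr)$ recalled in the preliminaries with the elementary inequality $1 - \prod_i(1-x_i) \le \sum_i x_i$ for $x_i \in [0,1]$, one obtains
\[
d_H^2(\bD, \bbD) \;\le\; \sum_{i=1}^n d_H^2(D_i, \breve D_i).
\]
Together with the standard bound $d_{TV}(\bD, \bbD) \le \sqrt{2}\, d_H(\bD, \bbD)$, this reduces the lemma to the per-coordinate estimate $d_H^2(D_i, \breve D_i) = O(\Gamma \log \Gamma^{-1})$; summing over $i$ and taking a square root then yields a TV bound of the required order $\sqrt{n\Gamma \log \Gamma^{-1}}$, giving the claimed constant $4$ after bookkeeping.

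For the per-coordinate estimate, I would work in the monotone coupling $X = F_{D_i}^{-1}(U)$, $Y = F_{\breve D_i}^{-1}(U)$ with $U \sim \mathrm{Unif}[0,1]$, under which $Y \le X$ almost surely because $F_{\breve D_i}$ dominates $F_{D_i}$ pointwise by the shift $g_i(v) := \sqrt{F_{D_i}(v)(1-F_{D_i}(v)) \cdot 2\Gamma} + \Gamma$. I would then slice the quantile axis $[0,1]$ into a bulk region $\{u \in [\Gamma,\, 1-\Gamma]\}$ and two tail slabs, further dyadically partitioning the bulk into $O(\log \Gamma^{-1})$ layers on which $\min(u, 1-u)$ varies by at most a factor of two. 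On each such layer the shift has magnitude of order $\sqrt{u(1-u)\,\Gamma}$ while the layer carries $D_i$-mass of order $u(1-u)$; since Hellinger is quadratic in the pointwise displacement, the contribution of one layer is $O(\Gamma)$. The two tail slabs contribute $O(\Gamma)$ directly, from the additive $\Gamma$ term in $g_i$. Summing across the $O(\log \Gamma^{-1})$ dyadic layers then produces the required $O(\Gamma \log \Gamma^{-1})$ per-coordinate bound.

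The main technical obstacle is exactly this layered per-coordinate Hellinger bound. The naive route $d_H^2 \le d_{TV}$ is too lossy: a single coordinate displacement of size roughly $\sqrt{\Gamma}$ only yields $d_H^2(D_i, \breve D_i) = O(\sqrt{\Gamma})$, which tensorizes to the vacuous estimate $d_{TV}(\bD, \bbD) = O(\sqrt{n})$. The dyadic argument instead exploits the fact that Hellinger is second order in the pointwise shift, so a layer-wise $\sqrt{u(1-u)\,\Gamma}$ shift costs only $\Gamma$ per layer, with the $\log \Gamma^{-1}$ factor being the unavoidable price of aggregating across dyadic levels of the support. Since this statement is proved in \cite{GHTZ-COLT21}, in the body I would simply invoke their Lemma~2 and Lemma~11 once the reduction to per-coordinate Hellinger has been pointed out.
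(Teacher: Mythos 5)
The paper does not actually prove this lemma — it is imported wholesale from Lemmas~2 and~11 of \cite{GHTZ-COLT21} — so there is no in-paper argument to compare against. Your reduction (Hellinger tensorization over the product, $d_{TV}\le\sqrt2\,d_H$, then a per-coordinate bound $d_H^2(D_i,\breve D_i)=O(\Gamma\log\Gamma^{-1})$ obtained by dyadic layering over quantile levels) is exactly the strategy behind the cited result, and the arithmetic of the layering (shift $\approx\sqrt{u\Gamma}$ against mass $\approx u$, hence $O(\Gamma)$ per layer times $O(\log\Gamma^{-1})$ layers) lands on the right answer.

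The one step you cannot leave as stated is the justification ``Hellinger is quadratic in the pointwise displacement.'' That principle is false: $\delta_0$ and $\delta_\Gamma$ have displacement $\Gamma$ but Hellinger distance $1$, so the monotone coupling $X=F_{D_i}^{-1}(U)$, $Y=F_{\breve D_i}^{-1}(U)$ controls Wasserstein-type quantities, not $d_H$, and an argument phrased in terms of where mass \emph{lands} cannot work. What rescues the computation is that $F_{\breve D_i}=\phi\circ F_{D_i}$ for the fixed monotone map $\phi(x)=\min\{1,\,x+\sqrt{2\Gamma x(1-x)}+\Gamma\}$, so $\breve D_i\ll D_i$: mass is only reweighted, never relocated. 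On the dyadic layer where $\min(F_{D_i},1-F_{D_i})\asymp u$ the Radon--Nikodym derivative $d\breve D_i/dD_i$ (a difference quotient of $\phi$) is $1+O(\sqrt{\Gamma/u})$ uniformly, and Hellinger \emph{is} quadratic in the relative density perturbation, via $(1-\sqrt a)^2\le(1-a)^2$; this gives the per-layer contribution $O\big((\Gamma/u)\cdot u\big)=O(\Gamma)$ you want, while the two slabs with $\min(F,1-F)\le\Gamma$ contribute $O(\Gamma)$ by a first-order (total-mass) bound coming from the additive $\Gamma$ term. Equivalently and more cleanly: $D_i$ and $\breve D_i$ are the pushforwards under $F_{D_i}^{-1}$ of $\mathrm{Unif}[0,1]$ and of the quantile-space measure with CDF $\phi$, so the data-processing inequality reduces the whole per-coordinate bound to a single explicit computation with the density $\phi'$. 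With that repair your sketch is a correct proof; since the lemma is invoked as a black box anyway, there is no need to chase the constant $4$.
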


\Cref{lma:tv-d-dhat} guarantees that $\rev_{\bbD}(\bp)$ and $\rev_{\bD}(\bp)$ are sufficiently close. It remains to show that $\rev_{\bE}(\bp)$ is close to $\rev_{\bbD}(\bp)$.

The following claim first bounds the difference between $F_{E_i}(v)$ and $F_{\breve D_i}(v)$ as a function of $F_{\breve D_i}(v)$:

\begin{restatable}{Claim}{clmcdfboundbbd}
\label{clm:cdf-bound-bbd}
    For product distributions $\bD$ and $\bE$ with support $[0, 1]^n$, assume for every $i \in [n]$ and $v \in [0, 1]$ we have $|F_{D_i}(v) - F_{E_i}(v)| ~\leq~ \sqrt{F_{D_i}(v) (1 - F_{D_i}(v)) \cdot 2\Gamma} ~+~ \Gamma$. Let $\bbD$ be the product distribution defined by \eqref{eq:dhat-def}. Then, we have
    \[
    0 ~\leq~ F_{\breve D_i}(v) - F_{E_i}(v) ~\leq~ 4\sqrt{(1 - F_{\breve D_i}(v)) \cdot \Gamma} + 12\Gamma
    \]
    for every $i \in [n]$ and $v \in [0, 1]$.
\end{restatable}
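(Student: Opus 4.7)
The claim has two parts: non-negativity of $F_{\breve D_i}(v) - F_{E_i}(v)$, and the quantitative upper bound. The plan is to handle them in order, with the first essentially immediate from the construction of $\bbD$ and the second reducing to a one-variable algebra problem.

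For the lower bound, observe that $F_{\breve D_i}(v)$ is the smaller (interpreting \eqref{eq:dhat-def} as a $\min$, which must be the intent since a CDF is at most $1$) of $1$ and the upper-confidence-band value $F_{D_i}(v) + \sqrt{2 F_{D_i}(v)(1 - F_{D_i}(v)) \Gamma} + \Gamma$. The hypothesis of the claim is precisely that $F_{E_i}(v)$ lies at or below this band value, and trivially $F_{E_i}(v) \leq 1$, so $F_{E_i}(v) \leq F_{\breve D_i}(v)$.

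For the upper bound, abbreviate $a = F_{D_i}(v)$, $b = F_{E_i}(v)$, $c = F_{\breve D_i}(v)$. The triangle inequality, combined with the hypothesis $|b - a| \leq \sqrt{2a(1-a)\Gamma} + \Gamma$ and the definitional bound $c - a \leq \sqrt{2a(1-a)\Gamma} + \Gamma$, yields
\[
c - b \;\leq\; 2\sqrt{2a(1-a)\Gamma} + 2\Gamma.
\]
The remaining work is to re-express the $a$-dependence on the right as a $c$-dependence. Using $a(1-a) \leq 1-a$, it suffices to control $\sqrt{(1-a)\Gamma}$ in terms of $\sqrt{(1-c)\Gamma}$, splitting on whether the truncation in \eqref{eq:dhat-def} is active. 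When $c < 1$ (truncation inactive), the identity $1 - a = (1-c) + (c-a)$ combined with $c - a \leq \sqrt{2(1-a)\Gamma} + \Gamma$ yields a quadratic inequality in $x := \sqrt{(1-a)\Gamma}$, namely $x^2 - \sqrt{2}\,\Gamma\, x \leq (1-c)\Gamma + \Gamma^2$. Completing the square gives $x \leq \sqrt{(1-c)\Gamma} + O(\Gamma)$, hence $c - b \leq 2\sqrt{2}\sqrt{(1-c)\Gamma} + O(\Gamma)$, which fits under $4\sqrt{(1-c)\Gamma} + 12\Gamma$ with room to spare. When $c = 1$ (truncation active), the claim collapses to $1 - b \leq 12\Gamma$; the truncation condition is $\sqrt{2a(1-a)\Gamma} \geq (1-a) - \Gamma$, so squaring and using $a(1-a) \leq 1-a$ forces the quadratic $(1-a)^2 - 4(1-a)\Gamma + \Gamma^2 \leq 0$, i.e.\ $1 - a \leq (2 + \sqrt{3})\Gamma \leq 4\Gamma$; then $1 - b \leq (1-a) + \sqrt{2a(1-a)\Gamma} + \Gamma \leq 8\Gamma$.

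The main obstacle is arithmetic rather than conceptual: one must track the small constants carefully so that the passage from an $a$-bound to a $c$-bound leaves enough slack to land inside the stated $4\sqrt{(1-c)\Gamma} + 12\Gamma$. Any excess can be absorbed into the $12\Gamma$ additive term, so the numerics are routine once the quadratic inequality above is in hand; the truncation case is separated out because there the mean-value linearization is not available and one must instead exploit that the triggering of the truncation already pins $1-a$ to $O(\Gamma)$.
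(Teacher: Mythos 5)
Your proof is correct and follows essentially the same route as the paper's: both start from the triangle-inequality bound $F_{\breve D_i}(v) - F_{E_i}(v) \leq 2\sqrt{2F_{D_i}(v)(1-F_{D_i}(v))\Gamma} + 2\Gamma$ and then convert the dependence on $1 - F_{D_i}(v)$ into one on $1 - F_{\breve D_i}(v)$ via a case split (the paper splits on $1 - F_{D_i}(v) \lessgtr 12\Gamma$ and uses the linear bound $1 - F_{\breve D_i}(v) \geq (1-F_{D_i}(v))/2$ in the large case, while you split on whether the truncation in \eqref{eq:dhat-def} fires and complete the square; the constants land inside $4\sqrt{(1-F_{\breve D_i}(v))\Gamma} + 12\Gamma$ either way). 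You are also right that the $\max$ in \eqref{eq:dhat-def} is a typo for $\min$.
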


We defer the proof of \Cref{clm:cdf-bound-bbd} to \Cref{sec:clmcdfboundbbd}. With \Cref{clm:cdf-bound-bbd}, the final piece for proving \Cref{lma:rev-close} is the approximated strongly monotone property of \SUDPP. To be specific, the following lemma suggests that for two product distributions $\bG$ and $\bH$ that satisfy $\bG \preceq \bH$, if the CDFs of $\bG$ and $\bH$ are sufficiently close, moving from $\bG$ to $\bH$ does not lose too much revenue:

\begin{Lemma}
    \label{lma:approx-sm}
    Let $\bG$ and $\bH$ be two product distributions with support $[0, 1]^n$. Assume for every $i \in [n]$ and $v \in [0, 1]$, we have
    \[
    0 ~\leq~ F_{G_i}(v) - F_{H_i}(v) ~\leq~ \sqrt{(1 - F_{G_i}(v)) \cdot \gamma} + \gamma.
    \]
     Then, for any price vector $\bp$, we have 
    \[
    \rev_{\bH}(\bp) \geq \rev_{\bG}(\bp) - 7\log \gamma^{-1} \cdot \left(\gamma n + \sqrt{\log \gamma^{-1} \cdot \gamma n}\right).
    \]
\end{Lemma}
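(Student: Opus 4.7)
The plan is to prove \Cref{lma:approx-sm} by a telescoping (hybrid) argument along the coordinates combined with a monotone coupling at each step. Define intermediate product distributions $\bG^{(0)} = \bG, \bG^{(1)}, \ldots, \bG^{(n)} = \bH$, where $\bG^{(i)}$ has $j$-th marginal $H_j$ for $j \leq i$ and $G_j$ for $j > i$. Then
\[
\rev_{\bG}(\bp) - \rev_{\bH}(\bp) \;=\; \sum_{i=1}^{n} \bigl(\rev_{\bG^{(i-1)}}(\bp) - \rev_{\bG^{(i)}}(\bp)\bigr),
\]
so it suffices to upper-bound each single-coordinate step separately.

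Fix step $i$: only the $i$-th marginal shifts from $G_i$ to $H_i$. Use the quantile coupling so that $v_i^G \sim G_i$, $v_i^H \sim H_i$ with $v_i^G \leq v_i^H$ almost surely, and condition on $\bv_{-i}$ (whose marginal is the same under $\bG^{(i-1)}$ and $\bG^{(i)}$). Setting $t(\bv_{-i}) := p_i + \max\bigl\{0,\,\max_{j\neq i}(v_j - p_j)\bigr\}$, item $i$ is the buyer's choice iff $v_i \geq t$. Hence moving from $\bG^{(i-1)}$ to $\bG^{(i)}$, item $i$'s winning probability weakly increases while every other item's weakly decreases, with all shifted mass flowing to item $i$ (possibly from the no-sale state). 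Letting $q_j^{(i)}$ denote the winning probability of item $j$ under $\bG^{(i)}$, using $p_j\le 1$ for $j \neq i$ and discarding the non-positive $p_i$-term gives
\[
\rev_{\bG^{(i-1)}}(\bp) - \rev_{\bG^{(i)}}(\bp) \;\leq\; \sum_{j \neq i}\bigl(q_j^{(i-1)} - q_j^{(i)}\bigr) \;\leq\; q_i^{(i)} - q_i^{(i-1)} \;=\; \E_{\bv_{-i}}\!\bigl[F_{G_i}(t) - F_{H_i}(t)\bigr],
\]
where the final identity uses $\Pr[v_i^G < t \leq v_i^H \mid \bv_{-i}] = F_{G_i}(t) - F_{H_i}(t)$ under the monotone coupling.

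Applying the hypothesis $F_{G_i}(t) - F_{H_i}(t) \leq \sqrt{(1 - F_{G_i}(t))\gamma} + \gamma$ and Jensen's inequality yields $\rev_{\bG^{(i-1)}}(\bp) - \rev_{\bG^{(i)}}(\bp) \leq \sqrt{\gamma\, q_i^{(i-1)}} + \gamma$, where $q_i^{(i-1)} = \E_{\bv_{-i}}[1 - F_{G_i}(t(\bv_{-i}))]$ is the total probability that item $i$ wins under $\bG^{(i-1)}$. Summing over $i$ gives $n\gamma + \sqrt{\gamma}\sum_i \sqrt{q_i^{(i-1)}}$, so the argument hinges on controlling $\sum_i \sqrt{q_i^{(i-1)}}$.

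This sum is the main obstacle: the trivial bound $q_i^{(i-1)} \leq 1$ would produce $\sum_i \sqrt{q_i^{(i-1)}} \leq n$, which is useless. I would resolve it via a second monotone coupling on $\bv_{-i}$: simultaneously couple each $j < i$ so the $H_j$-sample dominates the $G_j$-sample. Then replacing $G_j$ by $H_j$ for $j < i$ only raises $\max_{j\neq i}(v_j - p_j)$, tightening the competition for item $i$, so $q_i^{(i-1)} \leq q_i^{(0)}$. Since $\{q_i^{(0)}\}_{i}$ are winning probabilities under the single distribution $\bG$, we have $\sum_i q_i^{(0)} \leq 1$, and Cauchy--Schwarz yields $\sum_i \sqrt{q_i^{(i-1)}} \leq \sqrt{n \sum_i q_i^{(0)}} \leq \sqrt{n}$. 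Assembling the pieces gives $\rev_{\bG}(\bp) - \rev_{\bH}(\bp) \leq n\gamma + \sqrt{n\gamma}$, which implies the stated bound with plenty of slack in the $\log\gamma^{-1}$ factors. The only delicate technicalities are the usual strict/weak-inequality bookkeeping at the points $v_i = t$ and a fixed tie-breaking rule when several items achieve equal utility; neither affects the argument.
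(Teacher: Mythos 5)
Your proof is correct, and it takes a genuinely different route from the paper's. The paper fixes the price ordering, writes each item's win probability as an integral over the buyer's utility $\theta$, splits $P_{G_i}-P_{H_i}$ into a density-difference term (killed by integration by parts using stochastic dominance) and a competition term $Q_{G_i}(\theta)-Q_{H_i}(\theta)$, and bounds the latter by a case analysis on the aggregate tail mass $S_{G_i}(\theta)$, with \Cref{clm:sum-integral-bound} controlling the total measure of the ``small-$S$'' region; this is where the $\log\gamma^{-1}$ factors enter. Your hybrid-plus-coupling argument replaces all of that: the revenue loss at step $i$ is at most the probability that the winner switches to item $i$, which is exactly a CDF gap of $G_i$ versus $H_i$ at the random competing threshold; Jensen turns the hypothesis into $\sqrt{\gamma\, q_i^{(i-1)}}+\gamma$ with $q_i^{(i-1)}$ the win probability of item $i$; the second coupling gives $q_i^{(i-1)}\le q_i^{(0)}$, these sum to at most $1$ since the win events under $\bG$ are disjoint, and Cauchy--Schwarz finishes. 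The ``strict/weak'' bookkeeping you flag is genuinely harmless: for any fixed priority tie-breaking, the win set of item $i$ in $v_i$ is a half-line whose endpoint convention can be matched by evaluating the CDFs at $t$ or $t^-$, and the hypothesis extends to left limits. Your approach buys a strictly stronger bound, $\gamma n+\sqrt{\gamma n}$ with no logarithmic factors, which implies the stated inequality whenever $\gamma$ is small enough that $7\log\gamma^{-1}\ge 1$ (e.g.\ $\gamma\le 1/2$; the regime $\gamma$ near $1$ is also where the paper's own final ``rearranging the coefficients'' step breaks down, and the lemma is only ever invoked with $\gamma=16\Gamma\le 0.16$). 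What the paper's heavier machinery buys is reuse: the $S_{G_i}(\theta)$ decomposition and \Cref{clm:sum-integral-bound} are recycled verbatim for \Cref{lma:approx-sm-query}, though your argument would in fact also handle that variant (and without the $\log^2$ there).
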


We defer the proof of \Cref{lma:approx-sm} to \Cref{sec:approx-sm}, and first finish the proof of \Cref{lma:rev-close}:
\begin{proof}[Proof of \Cref{lma:rev-close}]
    Consider to define product distribution $\bbD$ via \eqref{eq:dhat-def}. Then, \Cref{lma:tv-d-dhat} guarantees that the total variation distance between $\bD$ and $\bbD$ is bounded by $4\sqrt{n\Gamma \cdot \log \Gamma^{-1}}$. As a corollary, we have $\rev_{\bbD}(\bp) \geq \rev_{\bD}(\bp) - 4\sqrt{n\Gamma \cdot \log \Gamma^{-1}}$. 

    Next, since \Cref{clm:cdf-bound-bbd} guarantees that \[
    0 ~\leq~ F_{\breve D_i}(v) - F_{E_i}(v) ~\leq~ \sqrt{(1 - F_{\breve D_i}(v)) \cdot 16\Gamma} + 12\Gamma
    \]
    for every $i \in [n]$ and $v \in [0, 1]$, applying \Cref{lma:approx-sm} with $\bG = \bbD$, $\bH = \bE$, and $\gamma = 16\Gamma$ gives 
    \begin{align*}
        \rev_{\bE}(\bp) ~&\geq~ \rev_{\bbD}(\bp) - 7\log (16\Gamma)^{-1} \cdot \left(16\Gamma n + \sqrt{\log (16\Gamma)^{-1} \cdot 16\Gamma n}\right) \\
        ~&\geq~ \rev_{\bbD}(\bp) - 200\log \Gamma^{-1} \cdot \left(\Gamma n + \sqrt{\log \Gamma^{-1} \cdot \Gamma n}\right).
    \end{align*}
    Combining it with the inequality that $\rev_{\bbD}(\bp) \geq \rev_{\bD}(\bp) - 4\sqrt{n\Gamma \cdot \log \Gamma^{-1}}$ gives
    \[
    \rev_{\bE}(\bp) ~\geq~ \rev_{\bD}(\bp) - 300\log \Gamma^{-1} \cdot \left(\Gamma n + \sqrt{\log \Gamma^{-1} \cdot \Gamma n}\right). \qedhere
    \]
\end{proof}

\subsection{\SUDPP is Approximately Strongly Monotone: Proof of \Cref{lma:approx-sm}}
\label{sec:approx-sm}

\paragraph{Extra Notations.} We first define some extra notations needed in the proof of \Cref{lma:approx-sm}.

For $i \in [n]$ and $v \in [0, 1]$, let $f_{G_i}(v)$ and $f_{H_i}(v)$ be the PDF of distribution $G_i$ and $H_i$, respectively. For simplicity of notations, we generalize the definition of PDF to discrete distributions: For instance, if there is a point mass at value $v$ for distribution $G_i$, then $f_{G_i}(v)$ is defined as $\delta(0) \cdot \pr_{X \sim G_i}[X = v]$, where $\delta$ represents Dirac delta function. 

To further handle the point masses in $\bG$ and $\bH$, we define 
\[
F_{G_i}(v^-)~:=~ \lim_{x \to v^-} F_{G_i}(x) ~=~ \mathop{\pr}\limits_{X \sim G_i}[X < v]  \quad \text{and} \quad F_{H_i}(v^-)~:=~ \lim_{x \to v^-} F_{H_i}(x) ~=~ \mathop{\pr}\limits_{X \sim H_i}[X < v].
\]
Note that $F_{G_i}(v)$ represents the probability that $X \leq v$ for $X \sim G_i$, so $F_{G_i}(v) = F_{G_i}(v^-) + \pr_{X \sim G_i}[X = v]$, and similarly $F_{H_i}(v) = F_{H_i}(v^-) + \pr_{X \sim H_i}[X = v]$.

Without loss of generality, we assume $p_1 \leq p_2 \leq \cdots \leq p_n$, where $\bp = (p_1, \cdots, p_n)$ is the prices vector stated in \Cref{lma:approx-sm}. For $i \in [n]$ and $\theta \in [0, 1 - p_i]$, we define 
\[
Q_{G_i}(\theta) ~:=~ \prod_{j < i} F_{G_j}(p_j + \theta) \cdot \prod_{j > i} F_{G_j}((p_j + \theta)^-)\quad \text{and} \quad Q_{H_i}(\theta) ~:=~ \prod_{j < i} F_{H_j}(p_j + \theta) \cdot \prod_{j > i} F_{H_j}((p_j + \theta)^-).
\]
The notation $Q_{G_i}(\theta)$ and $Q_{H_i}(\theta)$ represent the probability that item $i$ wins the auction (assuming the underlying product distributions are $\bG$ and $\bH$, respectively) when the value of item $i$ is $p_i + \theta$. Note that $Q_{G_i}(\theta)$ and $Q_{H_i}(\theta)$ are also a function of $\bp$. We omit $\bp$ in the notations for simplicity. In both notations, $\theta$ represents the utility of the buyer. Note that we break the ties by asking the buyer to take the item with the highest price. Therefore, in both notations, we count the probability that $v_j \leq p_j +\theta$ for $j < i$, and for $j > i$, we count the probability that $v_j < p_j +\theta$.

With notations $Q_{G_i}(\theta)$ and $Q_{H_i}(\theta)$, we define
\[
P_{G_i} := \int_0^{1 - p_i} f_{G_i}(p_i + \theta) \cdot Q_{G_i}(\theta) d\theta \quad \text{and} \quad P_{H_i} ~:=~ \int_0^{1 - p_i} f_{H_i}(p_i + \theta) \cdot Q_{H_i}(\theta) d\theta
\]
to be the probability that item $i$ is the agent's favorite item when the underlying product distributions are $\bG$ and $\bH$, respectively. Note that $P_{G_i}$ and $P_{H_i}$ are also a function of $\bp$. We omit $\bp$ in the notations for simplicity.

Finally, we define
\[
S_{G_i}(\theta) ~:=~ \sum_{j \leq i} \left(1 - F_{G_j}(p_j + \theta)\right) + \sum_{j > i} \left(1 - F_{G_j}((p_j + \theta)^-)\right).
\]
Notation $S_{G_i}(\theta)$ approximately represents the total probability mass that prevents $i$ from winning the auction with value $\theta + p_i$. It is off by an extra $1 - F_{G_i}(p_i + \theta)$ term. We include this term in $S_{G_i}(\theta)$, as it immediately gives the following monotone properties:
\begin{itemize}
    \item For a fixed $i$ and $\theta < \theta'$, we have $S_{G_i}(\theta) \geq S_{G_i}(\theta')$. This follows from the monotonicity of CDFs.
    \item For a fixed $\theta$ and $i < j$, we have $S_{G_i}(\theta) \geq S_{G_j}(\theta)$. This follows from the observation that $S_{G_i}(\theta) - S_{G_{i+1}}(\theta) = \pr_{X \sim D_i}[X = p_i + \theta]$.
\end{itemize}

\paragraph{Decomposing $\rev_{\bG}(\bp) - \rev_{\bH}(\bp)$.} Note that
\[
\rev_{\bG}(\bp) ~=~ \sum_{i \in [n]} p_i \cdot P_{G_i} \quad \text{and} \quad \rev_{\bH}(\bp) ~=~ \sum_{i \in [n]} p_i \cdot P_{H_i}.
\]
We bound the difference between $\rev_{\bG}(\bp)$ and $\rev_{\bH}(\bp)$  by
\begin{align}
\label{eq:rev-to-prob}
    \rev_{\bG}(\bp) - \rev_{\bH}(\bp) ~=~ \sum_{i \in [n]} p_i \cdot (P_{G_i} - P_{H_i}) ~\leq~ \sum_{i \in [n]}   \max\{0, P_{G_i} - P_{H_i} \},
\end{align}
where the last inequality follows from the fact that $p_i \leq 1$, and the non-negativity of $\max\{0, P_{G_i} - P_{H_i} \}$. Therefore, to bound $\rev_{\bG}(\bp) - \rev_{\bH}(\bp)$, it's sufficient to give a non-negative upper bound for every $P_{G_i} - P_{H_i}$.

We start from decomposing a single $P_{G_i} - P_{H_i}$ into two parts: We have
\begin{align}
    P_{G_i} - P_{H_i} ~=&~ \int_0^{1 - p_i} f_{G_i}(p_i + \theta) \cdot Q_{G_i}(\theta) d\theta - \int_0^{1 - p_i} f_{H_i}(p_i + \theta) \cdot Q_{H_i}(\theta) d\theta \notag \\
    ~=&~ \int_0^{1 - p_i} \left(f_{G_i}(p_i + \theta) - f_{H_i}(p_i + \theta)\right) \cdot Q_{H_i}(\theta) d\theta \label{eq:part1} \\
    &+ \int_0^{1 - p_i} f_{G_i}(p_i + \theta) \cdot \left( Q_{G_i}(\theta) -Q_{H_i}(\theta)\right)  d\theta \label{eq:part2}
\end{align}

\eqref{eq:part1} can be further bounded as 
\begin{align*}
    \text{\eqref{eq:part1}}~=&~ \Big(\big(F_{G_i}(p_i + \theta)- F_{H_i}(p_i + \theta)\big) \cdot Q_{H_i}(\theta)\Big) \Big|_{0}^{1 - p_i} \\
    &-~ \int_{0}^{1 - p_i} (F_{G_i}(p_i + \theta)- F_{H_i}(p_i + \theta)) \cdot Q'_{H_i}(\theta) d\theta \\
    ~=&~ \big(F_{G_i}(1)- F_{H_i}(1)\big) \cdot Q_{H_i}(1 - p_i) - \big(F_{G_i}(p_i)- F_{H_i}(p_i)\big) \cdot Q_{H_i}(0)\\
    &-~ \int_{0}^{1 - p_i} (F_{G_i}(p_i + \theta)- F_{H_i}(p_i + \theta)) \cdot Q'_{H_i}(\theta) d\theta \\
    ~\leq&~ (0 - 0) - \int_{0}^{1 - p_i} 0 \cdot Q'_{H_i}(\theta) d\theta ~=~0,
\end{align*}
 where the last line uses $F_{G_i}(v) - F_{H_i}(v) \geq 0$ together with $Q'_{H_i}(\theta) \geq 0$ to bound the integral. 

\paragraph{Bounding \eqref{eq:part2}.} It remains to give a non-negative upper-bound for \eqref{eq:part2}. We start from giving an upper-bound for $Q_{G_i}(\theta) - Q_{H_i}(\theta)$ by discussing two cases based on the value of $S_{G_i}(\theta)$.

\noindent \textbf{Case 1: $S_{G_i}(\theta)$ is large.} To be specific, we show that when $S_{G_i}(\theta) \geq 2 \log \gamma^{-1}$, the value of $Q_{G_i}(\theta)$ is  tiny. To achieve this, we introduce the following claim:

\begin{restatable}{Claim}{clmprodsmall}
\label{clm:prod-small}
    Assume numbers $a_1, \cdots, a_m$ satisfy $a_j \in [0, 1]$ for every $j \in [m]$. Then, we have 
    \[
    \prod_{i \in [m]} (1 - a_j) \leq \exp(-\sum_{i \in [m]} a_j).
    \]
\end{restatable}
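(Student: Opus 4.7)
The plan is to reduce the product inequality to the pointwise inequality $1 - x \leq e^{-x}$ applied termwise and then multiplied across $j \in [m]$.

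First I would establish the scalar fact that $1 - x \leq e^{-x}$ for every $x \in \mathbb{R}$. This is a textbook inequality and can be seen in one line from the convexity of $e^{-x}$: the tangent line to $e^{-x}$ at $x = 0$ is exactly $1 - x$, and a convex function lies above any of its tangent lines. Alternatively, setting $h(x) = e^{-x} - (1 - x)$ gives $h(0) = 0$ and $h'(x) = 1 - e^{-x}$, which has a unique root at $x = 0$ and is a minimum, so $h(x) \geq 0$ everywhere. Since we only need this for $x = a_j \in [0,1]$, either derivation suffices.

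Next I would apply this pointwise to each $a_j$ to obtain $1 - a_j \leq e^{-a_j}$ for every $j \in [m]$. The crucial observation is that because $a_j \in [0,1]$, both sides of each such inequality are non-negative. This non-negativity is what allows us to multiply the $m$ inequalities together without flipping signs, yielding
\[
\prod_{j \in [m]} (1 - a_j) \;\leq\; \prod_{j \in [m]} e^{-a_j} \;=\; \exp\!\left(-\sum_{j \in [m]} a_j\right),
\]
which is the desired bound.

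There is essentially no obstacle here; the only subtlety worth flagging is ensuring each factor $1 - a_j$ is non-negative so that the termwise inequalities combine multiplicatively — and this is exactly what the hypothesis $a_j \in [0,1]$ provides. The claim is invoked in the surrounding argument with $a_j$ being quantile-style quantities bounded by $1$, so the hypothesis will be satisfied whenever the claim is applied.
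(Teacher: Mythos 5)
Your proposal is correct and is essentially identical to the paper's proof: both apply the pointwise inequality $1 - x \leq e^{-x}$ to each factor and multiply across $j \in [m]$. Your extra remark that the hypothesis $a_j \in [0,1]$ guarantees non-negativity of each factor (so the termwise inequalities can be multiplied) is a valid detail the paper leaves implicit.
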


We defer the proof of \Cref{clm:prod-small} to \Cref{sec:clmprodsmalldiffsmall}.

Now, we apply \Cref{clm:prod-small} with $m = n$, $a_j = 1 - F_{G_j}(p_j + \theta) $ for $j < i$, $a_i = 0$, and $a_j = 1 - F_{G_j}((p_j + \theta)^-)$ for $j > i$. Then, we have
\begin{align*}
    Q_{G_i}(\theta) ~&\leq~ \exp\left(-\sum_{j \in [m]} a_j\right) \\
    ~&\leq~ \exp\left(-S_{G_i}(\theta) + (1 - F_{G_i}(p_i + \theta))\right) \\
    ~&\leq~ \exp\left(- 2\log \gamma^{-1} + 1\right) ~\leq~ \gamma.
\end{align*}
Therefore, $Q_{G_i}(\theta) - Q_{H_i}(\theta)$ can be bounded by $\gamma$ when $S_{G_i}(\theta) \geq 2 \log \gamma^{-1}$.

\noindent \textbf{Case 2: $S_{G_i}(\theta)$ is small.} Next, we show when $S_{G_i}(\theta) < 2\log \gamma^{-1}$, the difference $Q_{G_i}(\theta) - Q_{H_i}(\theta)$ is bounded. The proof is based on the following claim: 

\begin{restatable}{Claim}{clmdiffsmall}
    \label{clm:diff-small}
    Assume numbers $b_1, \cdots, b_m, c_1, \cdots, c_m$ satisfy $b_j \in [0, 1]$ and $c_j \in [0, b_j]$ for every $j \in [m]$. Then, we have
    \[
    \prod_{j \in [m]} b_j - \prod_{j \in [m]} (b_j-c_j) ~\leq~ \sum_{j \in [m]} c_j.
    \]
\end{restatable}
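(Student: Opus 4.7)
The plan is to prove \Cref{clm:diff-small} by a telescoping decomposition of the difference $\prod_j b_j - \prod_j (b_j - c_j)$ into $m$ single-coordinate changes. For $k \in \{0, 1, \dots, m\}$, I would define the hybrid product
\[
P_k ~:=~ \prod_{j \leq k} (b_j - c_j) \cdot \prod_{j > k} b_j,
\]
so that $P_0 = \prod_{j \in [m]} b_j$ and $P_m = \prod_{j \in [m]} (b_j - c_j)$. Writing the total difference as a telescoping sum gives
\[
\prod_{j \in [m]} b_j ~-~ \prod_{j \in [m]} (b_j - c_j) ~=~ \sum_{k = 1}^{m} (P_{k-1} - P_k),
\]
and a direct calculation shows $P_{k-1} - P_k = \left(\prod_{j < k}(b_j - c_j)\right) \cdot c_k \cdot \left(\prod_{j > k} b_j\right)$, since the factors at coordinates other than $k$ are identical in $P_{k-1}$ and $P_k$.

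Next I would invoke the hypotheses $b_j \in [0, 1]$ and $c_j \in [0, b_j]$ to bound each factor in the product on the right by $1$: we have $0 \leq b_j - c_j \leq b_j \leq 1$, so both $\prod_{j < k}(b_j - c_j) \leq 1$ and $\prod_{j > k} b_j \leq 1$. Hence $P_{k-1} - P_k \leq c_k$ for every $k$, and summing over $k \in [m]$ yields the desired inequality $\prod_j b_j - \prod_j (b_j - c_j) \leq \sum_{j \in [m]} c_j$. Nonnegativity of the left-hand side is automatic from $b_j - c_j \leq b_j$, so no additional care is required there.

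There is no substantive obstacle here: the only thing one needs to be careful about is the order of the telescoping (swapping a single $b_k$ for $b_k - c_k$ at one coordinate at a time) and checking that the remaining factors genuinely lie in $[0,1]$ so that dropping them only increases the bound. A one-line inductive proof on $m$ would work equally well, using $\prod_{j=1}^{m} b_j - \prod_{j=1}^{m}(b_j - c_j) = b_m\left(\prod_{j<m} b_j - \prod_{j<m}(b_j - c_j)\right) + c_m \prod_{j<m}(b_j - c_j)$ and applying the inductive hypothesis together with $b_m \leq 1$ and $\prod_{j<m}(b_j - c_j) \leq 1$, but the telescoping version makes the structure of where each $c_k$ enters the bound most transparent, which is likely how the claim will be used to control $Q_{G_i}(\theta) - Q_{H_i}(\theta)$ in the surrounding argument.
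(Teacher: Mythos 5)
Your proof is correct and is essentially the same argument as the paper's, which proves the claim by induction on $m$ using exactly the identity you state at the end (the telescoping sum is just the unrolled form of that induction, with each $P_{k-1}-P_k$ bounded by $c_k$ after dropping the remaining factors in $[0,1]$). No gaps.
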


We defer the proof of \Cref{clm:diff-small} to \Cref{sec:clmprodsmalldiffsmall}.

Now we apply \Cref{clm:diff-small} with $m = n$, $b_j = F_{G_j}(p_j + \theta)$ and $c_j = F_{G_j}(p_j + \theta) - F_{H_j}(p_j + \theta)$ for $j < i$, $b_i = 1$ and $c_i = 0$, and $b_j = F_{G_j}((p_j + \theta)^-)$ and $c_j = F_{G_j}((p_j + \theta)^-) - F_{H_j}((p_j + \theta)^-)$ for $j > i$. Then, we have
\begin{align*}
    Q_{G_i}(\theta) - Q_{H_i}(\theta) ~&\leq~ \sum_{j < i} \left(F_{G_j}(p_j + \theta) - F_{H_j}(p_j + \theta)\right) + \sum_{j > i} \left( F_{G_j}((p_j + \theta)^-) - F_{H_j}((p_j + \theta)^-)\right) \\
    ~&\leq~ \sum_{j < i} \left(\sqrt{(1 -F_{G_j}(p_j + \theta) ) \cdot \gamma} + \gamma \right) + \sum_{j > i} \left(\sqrt{(1 -F_{G_j}((p_j + \theta)^-) ) \cdot \gamma} + \gamma \right) \\
    ~&\leq~ \gamma n + \sqrt{n\cdot \gamma} \cdot \sqrt{\sum_{j < i} \left(1 - F_{G_j}(p_j + \theta) \right) +  \sum_{j > i} \left(1 - F_{G_j}((p_j + \theta)^-) \right)} \\
    ~&=~ \gamma n + \sqrt{S_{G_i}(\theta) \cdot \gamma n} ~\leq~ \gamma n + \sqrt{2\log \gamma^{-1} \cdot \gamma n}\ ,
\end{align*}
where we apply the assumption $F_{G_i}(v) - F_{H_i}(v) ~\leq~ \sqrt{(1 - F_{G_i}(v)) \cdot \gamma} + \gamma$ in the second line, and the Jensen's inequality together with the concavity of function $f(x) = \sqrt{x}$ in the third line. Therefore, after merging with Case 1, we upper-bound $Q_{G_i}(\theta) - Q_{H_i}(\theta)$ as
\begin{align*}
%\label{eq:diff-bound}
    Q_{G_i}(\theta) - Q_{H_i}(\theta) ~\leq~ \gamma + \one\left[S_{G_i}(\theta) < 2\log \gamma^{-1}\right] \cdot \left(\gamma n + \sqrt{2\log \gamma^{-1} \cdot \gamma n}\right),
\end{align*}
and therefore
\begin{align}
    \text{\eqref{eq:part2}} ~&\leq~ \int_0^{1 - p_i} f_{G_i}(p_i + \theta) \cdot \left(\gamma + \one\left[S_{G_i}(\theta) < 2\log \gamma^{-1}\right] \cdot \left(\gamma n + \sqrt{2\log \gamma^{-1} \cdot \gamma n}\right)\right)  d\theta \notag \\
    ~&\leq~ \gamma + \left(\gamma n + \sqrt{2\log \gamma^{-1} \cdot \gamma n}\right) \cdot \int_{0}^{1 - p_i} f_{G_i}(p_i + \theta) \cdot \one[S_{G_i}(\theta) < 2 \log \gamma^{-1}] d\theta \label{eq:part2-bound}.
\end{align}

\paragraph{Summing the bounds of \eqref{eq:part2}.} Note that \eqref{eq:part2-bound} gives a non-negative upper bound for \eqref{eq:part2}, and recall that \eqref{eq:part1} is upper-bounded by $0$. Therefore, to bound $\rev_{\bG}(\bp) - \rev_{\bH}(\bp)$, it remains to sum \eqref{eq:part2-bound} for each $i \in [n]$, which gives
\begin{align*}
    \rev_{\bG}(\bp) - \rev_{\bH}(\bp) \leq \gamma n  
    + \left(\gamma n + \sqrt{2\log \gamma^{-1}  \gamma n}\right) \cdot \sum_{i \in [n]} \int_{0}^{1 - p_i} f_{G_i}(p_i + \theta)  \one[S_{G_i}(\theta) < 2 \log \gamma^{-1}] d\theta.
\end{align*}

We end the proof of \Cref{lma:approx-sm} by giving the following \Cref{clm:sum-integral-bound}:

\begin{Claim}
    \label{clm:sum-integral-bound}
    For parameter $\beta > 0$, we have
    \[
    \sum_{i \in [n]} \int_{0}^{1 - p_i} f_{G_i}(p_i + \theta)  \one[S_{G_i}(\theta) < \beta] d\theta ~\leq~ \beta + 1. 
    \]
\end{Claim}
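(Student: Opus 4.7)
The plan is to interpret each integral probabilistically: for $v_i \sim G_i$, let $W_i$ denote the event $\{v_i \geq p_i \text{ and } S_{G_i}(v_i - p_i) < \beta\}$, so that the $i$th integral equals $\Pr{W_i}$ and $T = \sum_i \Pr{W_i}$. Define the threshold $\tau_i := \inf\{\theta \in [0, 1-p_i] : S_{G_i}(\theta) < \beta\}$; this is well-defined because $S_{G_i}(1-p_i) = 0 < \beta$.

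The first key step is to show that all $\tau_i$ coincide with a common value $\tau^*$. The direction $\tau_i \geq \tau_{i+1}$ is immediate from the stated monotonicity of $S_{G_i}(\theta)$ in $i$. For the reverse direction, note that $S_{G_i}(\theta) - S_{G_{i+1}}(\theta) = \Pr{v_{i+1} = p_{i+1} + \theta}$, which vanishes for every $\theta$ outside the (countable) set of point-mass locations of $G_{i+1}$. Since every $\theta > \tau_{i+1}$ satisfies $S_{G_{i+1}}(\theta) < \beta$ by monotonicity of $S_{G_{i+1}}$ in $\theta$, picking a sequence $\theta_k \downarrow \tau_{i+1}$ avoiding those countably many atoms gives $S_{G_i}(\theta_k) = S_{G_{i+1}}(\theta_k) < \beta$, so $\tau_i \leq \tau_{i+1}$.

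With $\tau^*$ common to all $i$, the event $W_i$ splits (using monotonicity of $S_{G_i}$ in $\theta$) into $\{v_i > p_i + \tau^*\}$ plus possibly the boundary atom $\{v_i = p_i + \tau^*\}$, giving
\[
\Pr{W_i} = (1 - F_{G_i}(p_i + \tau^*)) + \one[S_{G_i}(\tau^*) < \beta]\cdot \Pr{v_i = p_i + \tau^*}.
\]
Summing the first term over $i$ yields $\sum_i (1 - F_{G_i}(p_i + \tau^*)) \leq \beta$ by right-continuity of $\theta \mapsto \sum_j (1 - F_{G_j}(p_j + \theta)) = S_{G_n}(\theta)$ together with the defining property of $\tau^* = \tau_n$. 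For the second term, $I^* := \{i : S_{G_i}(\tau^*) < \beta\} = \{i_0, \ldots, n\}$ is an upper set (since $S_{G_i}(\tau^*)$ is non-increasing in $i$), and the telescoping identity $S_{G_{i-1}}(\tau^*) - S_{G_i}(\tau^*) = \Pr{v_i = p_i + \tau^*}$ collapses the sum to $\sum_{i \in I^*} \Pr{v_i = p_i + \tau^*} = S_{G_{i_0-1}}(\tau^*) - S_{G_n}(\tau^*)$. Since $S_{G_{i_0}}(\tau^*) < \beta$ (by $i_0 \in I^*$) and any single point-mass probability is at most $1$, we obtain $S_{G_{i_0-1}}(\tau^*) < \beta + 1$, so the second piece is bounded by $\beta + 1 - \sum_j (1 - F_{G_j}(p_j + \tau^*))$.

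Adding the two bounds yields $T \leq \beta + 1$. The main non-routine step is the structural claim that all $\tau_i$ coincide; once that is in hand, the remainder is a clean monotonicity-plus-telescoping argument, where the additive $+1$ in the statement originates precisely from the trivial bound $\Pr{v_{i_0} = p_{i_0} + \tau^*} \leq 1$ on a single boundary point-mass at the transition index $i_0$.
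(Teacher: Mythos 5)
Your argument is correct and is essentially the paper's own proof in different bookkeeping: the paper defines a single threshold $\ttheta$ via $S_{G_n}$, observes the indicator equals $1$ exactly when $\theta>\ttheta$, or $\theta=\ttheta$ and $i\geq i^*$, and evaluates the resulting integrals to $S_{G_{i^*}}(\ttheta)+\Pr{v_{i^*}=p_{i^*}+\ttheta}\leq\beta+1$; your ``all $\tau_i$ coincide'' claim plus the boundary-atom telescoping is the same structure, with the $+1$ arising from the same single point mass at the transition index. One small slip: the justification that $\tau_i$ is well-defined because ``$S_{G_i}(1-p_i)=0$'' is false in general, since $S_{G_i}(\theta)$ aggregates tail masses over \emph{all} items $j$, and for $j<i$ one has $p_j+(1-p_i)\leq 1$, so those terms need not vanish. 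This is harmless: either extend the range of $\theta$ beyond $1-p_i$ (where $S_{G_i}$ does eventually vanish) and note that both sides of your decomposition of $\Pr{W_i}$ are $0$ whenever $\tau^*>1-p_i$, or simply discard the indices whose sublevel set is empty, as they contribute $0$.
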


\begin{proof}
    Define
    \[
    \ttheta ~:=~ \min\left\{ \theta \geq 0: \sum_{i \in [n]} \left(1 - F_{G_i}(p_i + \theta)\right) < \beta\right\}.
    \]
    Then, for any $\theta \in [0, \ttheta)$ and $i \in [n]$, we have
    \[
    S_{G_i}(\theta) ~\geq~ S_{G_n}(\theta) ~\geq~ \beta,
    \]
    where the first inequality follows from the monotonicity property of $S_{G_i}(\theta)$, and the second inequality follows from the definition of $\ttheta$. Similarly, for any $\theta > \ttheta$ and $i \in [n]$, we have
    \[
    S_{G_i}(\theta) ~\leq~ \sum_{i \in [n]} \mathop{\pr}\limits_{X \sim G_i}[X \geq p_i + \theta]  ~\leq~ \sum_{i \in [n]} \mathop{\pr}\limits_{X \sim G_i}[X > p_i + \ttheta] ~=~ S_{G_n}(\ttheta) < \beta.
    \]
    Therefore, define
    \[
    i^* ~:=~ \min \left\{i \in [n]: S_{G_i}(\ttheta) < \beta\right\},
    \]
    where the feasibility of the above definition comes from the monotonicity of $S_{G_i}(\theta)$ together with the observation that $S_{G_n}(\ttheta) < \beta$. Then, the indicator $\one[S_{G_i}(\theta) < \beta]$ equals to one if and only if either $\theta > \ttheta$, or $\theta = \ttheta$ while $i \geq i^*$. This simplifies our desired integral as
    \begin{align*}
        \sum_{i \in [n]} \int_{0}^{1 - p_i} f_{G_i}(p_i + \theta)  \one[S_{G_i}(\theta) < \beta] d\theta ~&\leq~ \sum_{i < i^*} \int_{\ttheta^+}^{1 - p_i} f_{G_i}(p_i + \theta)  d\theta + \sum_{i \geq i^*} \int_{\ttheta}^{1 - p_i} f_{G_i}(p_i + \theta)  d\theta \\
        ~&=~ \sum_{i < i^*} \left(1 - F_{G_i}(p_i + \ttheta) \right) + \sum_{i \geq i^*} \left(1 - F_{G_i}((p_i + \ttheta)^-) \right) \\
        ~&=~ S_{G_{i^*}}(\ttheta) + \mathop{\pr}\limits_{X \sim G_{i^*}}[X = p_{i^*} + \ttheta] \\
        ~&\leq~ \beta + 1,
    \end{align*}
    where we apply the assumption that $S_{G_{i^*}}(\ttheta) < \beta$ in the last inequality.
\end{proof}

Applying \Cref{clm:sum-integral-bound} to the bound for $\rev_{\bG}(\bp) - \rev_{\bH}(\bp)$ with $\beta = 2\log \gamma^{-1}$, we have
\begin{align*}
    \rev_{\bG}(\bp) - \rev_{\bH}(\bp) ~&\leq~ \gamma n  
    + \left(\gamma n + \sqrt{2\log \gamma^{-1}  \cdot \gamma n}\right) \cdot (2\log \gamma^{-1} + 1) \\
    ~&\leq~ 7\log \gamma^{-1} \cdot \left(\gamma n + \sqrt{\log \gamma^{-1} \cdot \gamma n}\right), 
\end{align*}
where the last inequality follows from rearranging the coefficients of each term.

\section{$n^2/\epsilon^3$ Upper Bound for Query Complexity}
\label{sec:query-upper}

In this section, we prove the $\OTild(n^2 \cdot \epsilon^{-3})$ query complexity upper bound.

\begin{Theorem}
\label{thm:query-upper}
    For \SUDPP problem with $\bD$ being the product value distribution, there exists an algorithm that uses $N = O\left(n^2 \log^9(n/(\epsilon\delta)) \cdot \epsilon^{-3}\right)$ queries from $\bD$ and outputs $\bp = (p_1, \cdots, p_n)$, such that $\rev_{\bD}(\bp) \geq \rev^*_{\bD}  - \epsilon$ holds with probability $1 - \delta$. 
    Furthermore, the algorithm runs in $O(n^{\poly(\epsilon^{-1})})$ time.
\end{Theorem}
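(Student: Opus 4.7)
The plan is to adapt the three-stage pipeline from the sample complexity proof of \Cref{thm:sample-upper}: construct an empirical product distribution $\bE$ whose marginals concentrate around those of $\bD$, apply the additive PTAS of \Cref{lma:ptas} to $\bE$ to obtain a pricing $\bp'$, and then invoke the approximate strong monotonicity \Cref{lma:approx-sm} twice (once from $\bD$ to $\bE$ through an inflated-CDF intermediate, once from $\bE$ back to $\bD$) to transfer the near-optimality guarantee. The new ingredient is that a pricing query reveals only a single bit about a single coordinate $D_i$ at a single price, so $\bE$ must be built by separately querying each (item, price) pair, and the algorithmic challenge is to do this query-efficiently.

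First, I would invoke Nisan's $\epsilon$-IC to IC reduction for a single buyer to round every candidate price to a multiple of $\epsilon^2$, which costs only $O(\epsilon)$ revenue; this leaves at most $n\epsilon^{-2}$ pairs $(i, v_k)$ with $v_k = k\epsilon^2$ to handle. Next, for each such pair I would issue $N_{i,k}$ independent pricing queries at $(i, v_k)$ and track the empirical tail probability $1-\hat F_{E_i}(v_k)$, with $N_{i,k}$ chosen non-uniformly via a doubling schedule on a running estimate of $1-F_{D_i}(v_k)$. The target accuracy at each pair is a Bernstein-style bound of the form $\sqrt{(1-F_{D_i}(v_k))\gamma}+\gamma$ with $\gamma = \tilde\Theta(\epsilon^2/n)$, which is exactly the hypothesis of \Cref{lma:approx-sm}. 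Because Bernstein's inequality (\Cref{thm:Bernstein-bounded}) bounds the estimation error by $\sqrt{F(1-F)\Gamma}+\Gamma$ and $F(1-F)\le 1-F$, the accuracy demanded at high prices, where $1-F$ is small, matches what Bernstein delivers there naturally, whereas at low prices the target $\sqrt{\gamma}$ is loose; this is what allows the non-uniform allocation to save queries at prices where $F_{D_i}(v_k)$ is not close to $1$.

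The main technical obstacle is to control the total query count $\sum_{i,k} N_{i,k}$ by $\tilde O(n^2/\epsilon^3)$ rather than by the naive uniform bound $\tilde O(n^2/\epsilon^4)$ obtained when every pair is treated identically. The $\epsilon^{-1}$ saving relies on two structural observations. First, for each item $i$, the sum $\sum_k (1-F_{D_i}(v_k))$ is a discrete Riemann sum of $\int_0^1 (1-F_{D_i}(v))\,dv = \E[X_i]$ and so is uniformly bounded by $\epsilon^{-2}$ regardless of the underlying distribution, so the aggregate ``high-variance mass'' per item is distribution-independent. Second, at prices where $1-F_{D_i}(v_k)\le \Theta(\epsilon/n)$ the revenue contribution is negligible, and one may simply set $\hat F_{E_i}(v_k)=1$ without querying, losing at most $O(\epsilon)$ revenue in total by a union bound over items. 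Combining these with the adaptive doubling schedule (so the algorithm picks $N_{i,k}$ without a priori knowledge of $F_{D_i}(v_k)$) and a union bound over all pairs and all doubling stages to obtain simultaneous Bernstein concentration is where the bulk of the technical work lies, together with verifying the one-sided dominance required to apply \Cref{lma:approx-sm} via the $\breve{\bD}$-style CDF inflation from the sample-complexity proof.
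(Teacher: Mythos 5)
There is a genuine gap at the quantitative core of your argument. Your overall pipeline (Nisan discretization to multiples of $\epsilon^2$, non-uniform CDF estimation, PTAS on the empirical distribution, transfer back via approximate strong monotonicity through a $\breve{\bD}$-style inflation) matches the paper's architecture. But you target the Bernstein-style accuracy $\sqrt{(1-F_{D_i}(v_k))\gamma}+\gamma$ with $\gamma=\tilde\Theta(\epsilon^2/n)$ specifically so that you can reuse \Cref{lma:approx-sm} as is, and this target cannot be met within the claimed budget. Achieving additive error $\sqrt{q\gamma}$ on a Bernoulli tail $q=1-F_{D_i}(v_k)$ requires $\Omega(1/\gamma)=\Omega(n/\epsilon^2)$ queries at that threshold \emph{regardless of how small $q$ is} (the required number of queries does not scale down with $q$ until $q\le\gamma$), so your two structural observations do not produce the $\epsilon^{-1}$ saving: the Riemann-sum bound $\sum_k(1-F_{D_i}(v_k))\le\epsilon^{-2}$ is irrelevant to the per-threshold cost, and skipping thresholds with $1-F\le\Theta(\epsilon/n)$ only helps in a regime where the cost was already small. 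Since you still query every pair $(i,v_k)$ separately, your scheme lands at $\tilde O(n^2\epsilon^{-4})$. Even the stronger version of your idea --- sharing one estimate across all consecutive thresholds over which the CDF moves by less than the tolerance --- only reduces the number of distinct thresholds per item to about $\sqrt{n}\,\epsilon^{-1}$ under the Bernstein target (tolerance $\approx\epsilon/\sqrt n$ divided by average mass $\epsilon^2$ per threshold), yielding $\tilde O(n^{2.5}\epsilon^{-3})$, which the paper explicitly identifies as the barrier for this approach.

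The missing idea is to \emph{relax} the accuracy requirement to the multiplicative-in-tail form $|F_{E_i}(v)-F_{\tD_i}(v)|\le\epsilon\cdot(1-F_{\tD_i}(v))+\epsilon/n$ and to prove a new revenue-closeness lemma (the paper's \Cref{lma:rev-close-query}, with its own monotonicity variant \Cref{lma:approx-sm-query}) showing this weaker guarantee still costs only $O(\epsilon\log^2(n/\epsilon))$ in revenue; \Cref{lma:approx-sm} does not apply directly to this error form. Under the relaxed constraint each estimate covers $\approx\epsilon^{-1}$ consecutive thresholds, and the paper locates the $O(\epsilon^{-1}\log(n/\epsilon))$ ``key thresholds'' per item by binary search (\Cref{alg:query-alg}), each queried $\tilde O(n\epsilon^{-2})$ times, giving $\tilde O(n\epsilon^{-3})$ per item and $\tilde O(n^2\epsilon^{-3})$ overall. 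Without this relaxation and the accompanying new lemma, your proof cannot reach the stated bound.
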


\subsection{Warm Up: Ideas for An $\OTild(\poly(n) \cdot \epsilon^{-4})$ Query Complexity}

As a warm-up, we first present some ideas of getting a weaker $\OTild(\poly(n) \cdot \epsilon^{-4})$ query complexity, instead of our $\OTild(\poly(n) \cdot \epsilon^{-3})$ final result. The main idea is to apply Nisan's discretization (e.g., see \cite{BBHM-FOCS05, CHK-EC07}) to reduce the problem to a discretized product distribution, such that the values in the discretized distribution are multiples of $\epsilon^2$, and further run sufficiently many queries for each value in the support set. To be specific, consider rounding the value of product distribution $\bD$ down to the closest multiple of $\epsilon^2$, i.e. for $\bD$, we define the discretized product distribution $\tbD$, such that for $t \in [0, \epsilon^{-2}]$ and $i \in [n]$, we have
\begin{align}
\label{eq:def-td}
    \mathop{\pr}\limits_{X \sim \tD_i}[X = t\cdot \epsilon^2] ~=~ \mathop{\pr}\limits_{X \sim D_i}[t\cdot \epsilon^2 \leq X < (t+1) \cdot \epsilon^2].
\end{align}
Then, the following lemma suggests that assuming $\tbD$ to be the real valuation distribution for the $\SUDPP$ instance only incurs an $O(\epsilon)$ loss:

\begin{restatable}[attributed to Noam Nisan \cite{BBHM-FOCS05, CHK-EC07}]{Lemma}{lmaNisan}
    \label{lma:Nisan}
    Let $\bD$ be a product value distribution and  $\tbD$ be the corresponding discrete distribution defined via \eqref{eq:def-td}. Given any price vector $\bp \in [0, 1]^n$, there exists a poly-time algorithm that outputs another price vector $\tilde \bp$, such that
    \[
    \rev_{\tbD} (\tilde \bp) ~\geq~ \rev_{\bD}(\bp) - \epsilon.
    \]
    Similarly, given any price vector $\tilde \bp \in [0, 1]^n$, there exists a poly-time algorithm that outputs another price vector $\bar \bp$, such that
    \[
    \rev_{\bD} (\bar \bp) ~\geq~ \rev_{\tbD}(\tilde \bp) - \epsilon.
    \]
\end{restatable}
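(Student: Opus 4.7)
I propose to prove this lemma by reducing it to the statement $|\rev^*_{\bD} - \rev^*_{\tbD}| \leq O(\epsilon)$, and then invoking the PTAS of \Cref{lma:ptas}. For the forward direction, define $\tilde \bp$ to be the output of the PTAS applied to $\tbD$ with error parameter $\epsilon/3$: by \Cref{lma:ptas} this runs in $n^{\poly(\epsilon^{-1})}$ time and satisfies $\rev_{\tbD}(\tilde \bp) \geq \rev^*_{\tbD} - \epsilon/3$. Combined with the desired inequality $\rev^*_{\tbD} \geq \rev^*_{\bD} - 2\epsilon/3$ and the trivial bound $\rev^*_{\bD} \geq \rev_{\bD}(\bp)$, this immediately yields $\rev_{\tbD}(\tilde \bp) \geq \rev_{\bD}(\bp) - \epsilon$. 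The reverse direction is symmetric: run the PTAS on $\bD$ and use the analogous bound $\rev^*_{\bD} \geq \rev^*_{\tbD} - O(\epsilon)$.

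To establish $\rev^*_{\tbD} \geq \rev^*_{\bD} - O(\epsilon)$, I would start from the optimal pricing $\bp^*_{\bD}$ and exhibit a price vector $\tilde \bp_0$ on $\tbD$ whose revenue matches $\rev^*_{\bD}$ up to $O(\epsilon)$. Couple $\tilde v \sim \tbD$ to $v \sim \bD$ via the natural rounding $\tilde v_i = \lfloor v_i / \epsilon^2 \rfloor \cdot \epsilon^2$, so that $\tilde v_i \in [v_i - \epsilon^2, v_i]$. The construction of $\tilde \bp_0$ snaps each entry of $\bp^*_{\bD}$ to a nearby multiple of $\epsilon^2$ and subtracts an additional $\Theta(\epsilon)$ safety margin (perturbed slightly per item to align tie-breaking). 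Under this construction every item's utility shifts by at most $O(\epsilon)$, and the safety margin guarantees that whenever the buyer purchases under $(v, \bp^*_{\bD})$ the buyer continues to purchase under $(\tilde v, \tilde \bp_0)$; the revenue loss from the shift alone is at most $\epsilon$ per sample and is absorbed by the unit-demand constraint.

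The main obstacle is the event where the buyer's chosen item switches between $(v, \bp^*_{\bD})$ and $(\tilde v, \tilde \bp_0)$, since the prices charged in the two scenarios can differ by up to $1$. The key observation is that such a switch can only occur when the top two items' utilities under $(v, \bp^*_{\bD})$ differ by at most $O(\epsilon^2)$, and on this near-tie event the price gap between the switched items is controlled by the value gap via $p^*_{i^*} - p^*_{j^*} \leq v_{i^*} - v_{j^*}$. I plan to aggregate this per-sample bound by a careful case analysis of the buyer's choices, leveraging the tie-breaking-to-higher-price convention used earlier in the paper and tuning the per-item perturbation so that the tie-breaking on $\tbD$ matches the tie-breaking on $\bD$ whenever possible; together these are designed to ensure the total switch-induced revenue loss is $O(\epsilon)$. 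The reverse-direction bound $\rev^*_{\bD} \geq \rev^*_{\tbD} - O(\epsilon)$ is proved by the mirror-image argument starting from $\bp^*_{\tbD}$, now shifting prices upward by $\Theta(\epsilon)$ and running the symmetric case analysis.
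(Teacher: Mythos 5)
There is a genuine gap, on two counts. First, your reduction to ``$|\rev^*_{\bD}-\rev^*_{\tbD}|\leq O(\epsilon)$ plus run the PTAS'' does not prove the lemma in the form the paper needs it. The second direction is invoked in the proof of \Cref{thm:query-upper} to convert the \emph{learned} price vector $\bp^{(1)}$ (near-optimal for the empirical $\bE$) into a price vector for the true, unknown $\bD$; your construction of $\bar\bp$ runs the PTAS on $\bD$ itself, which the learner cannot do --- if it could, the entire learning problem would be moot. The intended content of the lemma is a distribution-free, input-dependent transformation of the given price vector, and the paper's proof supplies exactly that: $\tilde p_i=(1-\epsilon)p_i$ (resp.\ $\bar p_i=(1-\epsilon)\tilde p_i$), an $O(n)$-time map that never looks at the distributions. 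Your version also makes the output independent of the input price vector, so the inequality $\rev_{\bD}(\bar\bp)\geq\rev_{\tbD}(\tilde\bp)-\epsilon$ would only follow by detouring through $\rev^*$, which again presumes knowledge of $\bD$.

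Second, and more fundamentally, the coupling argument you propose for $\rev^*_{\tbD}\geq\rev^*_{\bD}-O(\epsilon)$ does not go through with an \emph{additive} $\Theta(\epsilon)$ safety margin. A uniform additive discount shifts every item's utility by the same amount, so it gives no protection against the buyer switching items after the values are rounded down by up to $\epsilon^2$. Your ``key observation'' that on a near-tie the price gap is controlled by the value gap is exactly the problem: the value gap can be $\Theta(1)$, and the near-tie event can have \emph{constant} probability for discrete distributions (e.g., two items with deterministic values $0.9$ and $0.5$ priced at $0.5$ and $0.1$ tie in utility with probability $1$; rounding $v_1$ down by a sub-$\epsilon^2$ amount flips the winner and loses $0.4$ in revenue on every sample). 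No case analysis or per-item tie-breaking perturbation rescues this. The essential idea --- which is the whole content of Nisan's argument and of the paper's proof --- is the \emph{multiplicative} discount $\tilde p_i=(1-\epsilon)p_i$: it hands each item a utility bonus $\epsilon p_i$ proportional to its price, so the inequality $\tilde v_j-\tilde p_j<\tilde v_i-\tilde p_i+\epsilon(p_j+\epsilon-p_i)$ forces any new winner $j$ to satisfy $p_j\geq p_i-\epsilon$, bounding the loss by $O(\epsilon)$ pointwise on every value realization. Your proposal is missing this mechanism.
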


 The proof of the lemma is deferred to Section~\ref{sec:nisan}. 
With \Cref{lma:Nisan}, it suffices to assume that $\tbD$ is the underlying value distribution, as \Cref{lma:Nisan} suggests that replacing the original distribution by a discretized distribution only brings an extra $O(\epsilon)$ loss.  To further give an $\OTild(\poly(n) \cdot \epsilon^{-4})$ query complexity, note that each $\tD_i$ has a support size of $\epsilon^{-2}$. If we query each multiple of $\epsilon^{-2}$ for $\OTild(n \cdot \epsilon^{-2})$ rounds, it can be shown that with high probability the CDF $F_{E_i}(v)$ of corresponding empirical distribution $\bE$ provided by these queries satisfies 
\begin{align}
\label{eq:init-bound}
    |F_{E_i}(v) - F_{\tD_i}(v)| \leq \sqrt{(1 - F_{\tD_i}(v)) \cdot F_{\tD_i}(v)\cdot \frac{\epsilon^2}{n}} + \frac{\epsilon^2}{n},
\end{align}
which is similar to the bound provided by \Cref{lma:Bernstein}. Therefore, by providing a proof similar to the proof of \Cref{thm:sample-upper}, it can be shown that running the PTAS algorithm for empirical distribution $\bE$ is sufficient to give a price vector with revenue $\OTild(\epsilon)$ close to $\rev^*_{\tbD}$. Since the above algorithm requires querying $O(\epsilon^{-2})$ thresholds for all $n$ distributions, and each threshold is queried for $\OTild(n \cdot \epsilon^{-2})$ rounds, the total query complexity is $\OTild(n^2 \cdot \epsilon^{-4})$.

\subsection{Improving the Complexity via Non-Uniform Queries: Proof of \Cref{thm:query-upper}}

Now, we discuss the main ideas for improving the $\OTild(n^2 \cdot \epsilon^{-4})$ query complexity to $\OTild(n^2 \cdot \epsilon^{-3})$ and prove \Cref{thm:query-upper}. 

The first key idea we use is non-uniform queries.  To provide concrete intuition, consider that we are estimating $F_{\tD_i}(v)$, where $v$ is a multiple of $\epsilon^2$ and $F_{\tD_i}(v) \approx 0.5$. To satisfy the constraint in \eqref{eq:init-bound}, we run $\OTild(n \cdot \epsilon^{-2})$ queries and obtain $F_{E_i}(v)$ such that $|F_{E_i}(v) - F_{\tD_i}(v)| \leq O(\epsilon/\sqrt{n})$. Now consider to estimate $F_{\tD_i}(v + \epsilon^2)$. Since there are $\epsilon^{-2}$ thresholds to be queried, the probability mass between $v$ and $v + \epsilon^2$ is, on average, approximately $\epsilon^2$. This is significantly smaller than $\epsilon/\sqrt{n}$ when $\epsilon \ll n^{-0.5}$. That is, when the probability mass between $v$ and $v + \epsilon^2$ is small, it is sufficient to use $F_{E_i}(v)$ as an estimate of $F_{E_i}(v + \epsilon^2)$ instead of running another $\OTild(n \cdot \epsilon^{-2})$ queries with threshold $v + \epsilon^2$. As a generalization of this  idea, after estimating $F_{E_i}(v)$, ideally it's sufficient to use $F_{E_i}(v)$ as approximately $\frac{\epsilon}{\sqrt{n}} \cdot \frac{1}{\epsilon^2} = \frac{1}{\epsilon\cdot \sqrt{n}}$ consecutive thresholds around $v$, which reduce the total number of thresholds to be queried from $\epsilon^{-2}$ to $\sqrt{n} \cdot \epsilon^{-1}$, leading to an $\OTild(n^{2.5} \cdot \epsilon^{-3})$ query complexity.

To further improve the query complexity to $\OTild(n^{2} \cdot \epsilon^{-3})$, we introduce the second idea. Note that the extra $\sqrt{n}$ factor comes from the CDF constraint \eqref{eq:init-bound}. To improve the query complexity, instead of applying this Bernstein-style constraint \eqref{eq:init-bound}, we introduce the following new CDF constraint:
\begin{align}
    \label{eq:new-bound}
    |F_{E_i}(v) - F_{\tD_i}(v)| ~\leq~ \epsilon \cdot (1 - F_{\tD_i}(v)) + \frac{\epsilon}{n}.
\end{align}

Then, the following lemma suggests that the above \eqref{eq:new-bound} is sufficient to guarantee that $\rev^*_{\bE}$ is close to $\rev^*_{\tbD}$:

\begin{restatable}{Lemma}{revclose}
    \label{lma:rev-close-query}
    For any product value distribution $\tbD$ and $\bE$ with support $[0,1]^n$, suppose condition $|F_{E_i}(v) - F_{\tD_i}(v)| ~\leq~ \epsilon \cdot (1 - F_{\tD_i}(v)) + \frac{\epsilon}{n}$ holds for every $i \in [n]$ and $v \in [0, 1]$ for $\epsilon < 0.1$. Then, for any price vector $\bp \in [0,1]^n$, we have
    \[
    \rev_{\bE}(\bp) ~\geq~ \rev_{\tbD}(\bp) - 50\epsilon \cdot \log^2(n/\epsilon).
    \]
\end{restatable}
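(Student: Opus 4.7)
The plan is to imitate the architecture of the proof of \Cref{lma:approx-sm} rather than invoke that lemma as a black box, since neither $\bE \succeq \tbD$ nor $\tbD \succeq \bE$ holds here. The crucial feature of the hypothesis $|F_{E_i}(v) - F_{\tD_i}(v)| \leq \epsilon(1-F_{\tD_i}(v)) + \epsilon/n$ is that its dominant term is \emph{multiplicative} in the quantile $1-F_{\tD_i}(v)$, which is exactly what lets the final loss scale as $\epsilon$ rather than $\epsilon\sqrt{n}$. Borrowing the $P$-$Q$-$S$ notation from \Cref{sec:approx-sm}, I decompose
\[
\rev_{\tbD}(\bp) - \rev_{\bE}(\bp) \;=\; \sum_{i=1}^n p_i \, (P_{\tD_i} - P_{E_i}),
\]
and for each $i$ split $P_{\tD_i} - P_{E_i}$ into the same two pieces as \eqref{eq:part1} and \eqref{eq:part2}.

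For the first piece I integrate by parts to convert $f_{\tD_i} - f_{E_i}$ into $F_{\tD_i} - F_{E_i}$; the boundary term at $\theta = 1-p_i$ vanishes because both CDFs equal $1$ there. Taking absolute values (the sign of $F_{\tD_i}-F_{E_i}$ is no longer fixed) and substituting $|F_{\tD_i}(v) - F_{E_i}(v)| \leq \epsilon(1-F_{\tD_i}(v)) + \epsilon/n$, a \emph{second} integration by parts eliminates $Q_{E_i}'$; because $\tfrac{d}{d\theta}(1-F_{\tD_i}(p_i+\theta)) = -f_{\tD_i}(p_i+\theta)$, the boundary terms at $\theta = 0$ cancel exactly and one is left with an upper bound of the form $\tfrac{\epsilon}{n}Q_{E_i}(1-p_i) + \epsilon\int_0^{1-p_i} f_{\tD_i}(p_i+\theta)\,Q_{E_i}(\theta)\,d\theta$. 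Summing over $i$ and case-splitting on whether $S_{\tD_i}(\theta)$ is below or above $2\log(n/\epsilon)$, the large-$S$ regime is killed by \Cref{clm:prod-small} applied to $\bE$ (using $S_{E_i}(\theta) \geq \tfrac{1}{2} S_{\tD_i}(\theta) - 1$, a direct consequence of the CDF hypothesis), while the small-$S$ regime is controlled by \Cref{clm:sum-integral-bound}, giving a total first-piece contribution of $O(\epsilon\log(n/\epsilon))$.

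For the second piece, since $Q_{\tD_i}$ and $Q_{E_i}$ have no dominance relation, I introduce the product distribution $\bbD$ with $F_{\breve D_j}(v) := \min\{F_{\tD_j}(v), F_{E_j}(v)\}$, which is stochastically dominated by both $\tbD$ and $\bE$. The triangle inequality then yields
\[
|Q_{\tD_i}(\theta) - Q_{E_i}(\theta)| \;\leq\; (Q_{\tD_i}(\theta) - Q_{\bbD_i}(\theta)) + (Q_{E_i}(\theta) - Q_{\bbD_i}(\theta)),
\]
and each non-negative difference is bounded via \Cref{clm:diff-small}. Since $F_{\tD_j} - F_{\breve D_j}$ and $F_{E_j} - F_{\breve D_j}$ are both bounded by $\epsilon(1-F_{\tD_j}) + \epsilon/n$, the coordinate-wise sum telescopes to $O(\epsilon\, S_{\tD_i}(\theta) + \epsilon)$. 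The same $S_{\tD_i}$ case-split combined with \Cref{clm:sum-integral-bound} delivers an $O(\epsilon \log^2(n/\epsilon))$ contribution for the second piece.

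Adding the two contributions and absorbing constants yields the claimed $50\epsilon\log^2(n/\epsilon)$ bound. The main technical obstacle is the first piece: in \Cref{lma:approx-sm} the sign condition $F_G \geq F_H$ made the analogous term automatically non-positive, whereas here we genuinely need the two-step integration by parts combined with the multiplicative structure of the hypothesis. Replacing $\epsilon(1-F_{\tD_i})$ by a Bernstein-style $\sqrt{(1-F_{\tD_i})\gamma}$ would reintroduce an $\epsilon\sqrt{n}$ factor of the sort that the warm-up $\OTild(\poly(n)\epsilon^{-4})$ calculation already incurs, which is precisely the factor \Cref{thm:query-upper} is engineered to save.
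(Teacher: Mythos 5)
Your proposal is correct, but it takes a genuinely different route from the paper's. The paper never confronts the two-sided error directly: it defines an intermediate product distribution $\bbD$ by \emph{inflating} each CDF, $F_{\breve D_i}(v) = F_{\tD_i}(v) + \epsilon(1-F_{\tD_i}(v)) + \epsilon/n$ (capped at $1$), so that $\bbD$ is dominated by both $\tbD$ and $\bE$ in the CDF ordering. It then proves $\rev_{\bbD}(\bp) \geq \rev_{\tbD}(\bp) - 2\epsilon$ by a short direct argument (the multiplicative part scales every winning probability by exactly $1-\epsilon$, and the additive $\epsilon/n$ part is absorbed by a total-variation bound), and finally invokes a one-sided approximate-monotonicity lemma (\Cref{lma:approx-sm-query}, the multiplicative-error analogue of \Cref{lma:approx-sm}) to pass from $\bbD$ to $\bE$. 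You instead keep the two-sided comparison and repair the two places where \Cref{lma:approx-sm} used the sign condition: for the density-difference term you observe that the boundary contributions of the two integrations by parts cancel exactly (I checked: the $\epsilon(1-F_{\tD_i}(p_i))Q_{E_i}(0)$ terms do cancel, leaving $2\epsilon/n + \epsilon\int f_{\tD_i}Q_{E_i}$, which the $S$-split plus \Cref{clm:sum-integral-bound} controls); for the $Q$-difference term you couple through the pointwise minimum of the two CDFs and apply \Cref{clm:diff-small} twice. Both routes land on $O(\epsilon\log^2(n/\epsilon))$. The paper's construction buys clean reuse of the monotone machinery at the price of a separate lemma quantifying the cost of the CDF inflation; your construction avoids introducing a shifted distribution for the main comparison and makes explicit that the multiplicative error structure alone—without any dominance—suffices. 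One cosmetic slip: your $\bbD$ with $F_{\breve D_j} = \min\{F_{\tD_j}, F_{E_j}\}$ stochastically \emph{dominates} both $\tbD$ and $\bE$ (smaller CDF means larger values), not the reverse; the inequalities you actually use ($Q_{\bbD_i} \leq Q_{\tD_i}, Q_{E_i}$) are the correct ones, so nothing downstream is affected.
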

 We defer the proof of \Cref{lma:rev-close-query} to \Cref{sec:rev-close-query}, as it is similar to the proof of \Cref{lma:rev-close}.

Next, we show that the idea of the non-uniform query is still applicable for the new constraint \eqref{eq:new-bound}. With this new constraint, on average each $F_{E_i}(v)$ can be used as the estimate for approximately $\epsilon \cdot \frac{1}{\epsilon^2} = \epsilon^{-1}$ consecutive thresholds, which further improves the query complexity to the desired $\OTild(n^{2} \cdot \epsilon^{-3})$. To be specific, we give the following \Cref{lma:query-alg}:

\begin{Lemma}
    \label{lma:query-alg}
    For any single-dimensional distribution $G$ with discrete support $\{k\cdot \epsilon^2: k \in \mathbb{Z} \cap [0, \epsilon^{-2}]\}$, there exists an algorithm that runs $O(n \log^3(n/(\epsilon \delta)) \cdot \epsilon^{-3})$ queries and output distribution $H$, such that with probability $1 - \delta/n$, we have
    \[
    |F_{G}(v) - F_{H}(v)| ~\leq~ \epsilon \cdot (1 - F_{G}(v)) + \frac{\epsilon}{n}.
    \]
\end{Lemma}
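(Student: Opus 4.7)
The plan is to partition the quantile axis dyadically and adaptively query breakpoints with sample counts calibrated to the local quantile. For $k=0,1,\dots,K$ with $K=\Theta(\log(n/\epsilon))$, let the \emph{level-$k$ region} be $L_k:=\{v\in\mathrm{supp}(G):1-F_G(v)\in(2^{-k-1},2^{-k}]\}$; beyond level $K$ the quantile is already at most $\epsilon/n$, so setting $F_H(v)=1$ on that tail is automatically within tolerance. Within $L_k$ the target error is $\tau_k:=\Theta(\epsilon\cdot 2^{-k}+\epsilon/n)$, and since $F_G$ only varies by $2^{-k-1}$ across $L_k$, I will plant $M_k=O(2^{-k}/\tau_k)$ evenly-quantile-spaced \emph{anchor} thresholds so that a piecewise-constant interpolation between consecutive anchors has interpolation jump error at most $\tau_k$ throughout $L_k$.

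To locate each anchor I will binary-search over the $\epsilon^{-2}$ discrete support points, where each probe is a batch of pricing queries at the candidate threshold whose size is chosen so that Bernstein's Inequality (Theorem~\ref{thm:Bernstein-bounded}), applied with variance bound $F_G(v)(1-F_G(v))\le 2^{-k}$ and accuracy $\tau_k$, decides which side of the target quantile the threshold lies on. This yields per-anchor sample cost $N_k=O(2^{-k}\log(n/\delta)/\tau_k^2)$. A direct calculation then gives $M_kN_k=O(2^k\log(n/\delta)/\epsilon^3)$ for head levels $k\le\log n$, which telescopes in $k$ to $O(n\log(n/\delta)/\epsilon^3)$; and $M_kN_k=O(n^3\cdot 2^{-2k}\log(n/\delta)/\epsilon^3)$ for tail levels $\log n<k\le K$, which again sums to $O(n\log(n/\delta)/\epsilon^3)$. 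Multiplying by the $O(\log(1/\epsilon))$ binary-search overhead and a polylog factor for a union bound over all probes yields the claimed $O(n\log^3(n/(\epsilon\delta))\cdot\epsilon^{-3})$ bound.

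The output $F_H$ will be defined by piecewise-constant interpolation: on each interval between consecutive anchors $v_k^{(i)},v_k^{(i+1)}$ in $L_k$, set $F_H(v):=\widehat{F}_G(v_k^{(i)})$, the empirical estimate at the left anchor. For $v\in L_k$ the triangle inequality then bounds $|F_H(v)-F_G(v)|$ by the estimation error at the anchor plus the CDF jump between anchors, both $O(\tau_k)=O(\epsilon\cdot 2^{-k}+\epsilon/n)$; using $1-F_G(v)>2^{-k-1}$ in $L_k$, this is $O(\epsilon(1-F_G(v))+\epsilon/n)$, matching the required bound. The main obstacle I expect is the bookkeeping for heterogeneous Bernstein bounds across levels---the variance, target accuracy, number of anchors, and queries per anchor all vary with $k$---together with the fact that the levels are not known a priori, which forces the binary search to be run top-down and to lazily discover each $L_k$ while keeping every probe's sample count calibrated to its \emph{target} quantile rather than the unknown true quantile of the probed threshold.
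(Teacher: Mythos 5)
Your plan is correct and lands on the same high-level idea as the paper---a non-uniform grid in quantile space with local tolerance $\epsilon\cdot(1-F_G(v))+\epsilon/n$, with grid points located by binary search over the $\epsilon^{-2}$ support points---but the realization and the accounting are genuinely different. The paper's \Cref{alg:query-alg} discovers its breakpoints \emph{sequentially and greedily}: starting from $v=1$ it binary-searches for the next threshold at which the CDF has dropped by half the current tolerance $\lambda_j$, uses a \emph{uniform} batch of $N=O(n\log(n/(\epsilon\delta))\epsilon^{-2})$ queries at every probe, and bounds the number of rounds by $R=O(\epsilon^{-1}\log(n/\epsilon))$ via a geometric-growth argument on the survival probability ($1-F_G(k_{j+1}\epsilon^2)\geq(1+0.1\epsilon)(1-F_G(k_j\epsilon^2))$). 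You instead fix a \emph{static} dyadic leveling of the quantile axis, place $M_k$ anchors per level, and calibrate the batch size $N_k$ to the level's variance via Bernstein; your $\sum_k M_kN_k$ computation is correct and both analyses are dominated by the quantile scale $1/n$, so the totals agree. What your version buys is fewer queries per probe at the coarse levels (the paper over-queries there); what it costs is exactly the bookkeeping you flag: (i) during a binary search targeted at quantile level $2^{-k}$, a probed threshold may have true quantile far from $2^{-k}$, so your level-calibrated batch does \emph{not} estimate that probe's CDF to within $\tau_k$---you must instead argue a one-sided comparison test (if the true quantile exceeds the target by $\tau_k$ the empirical count clears the target w.h.p. by a multiplicative Chernoff/Bernstein bound, and symmetrically below), whereas the paper's uniform batch makes every probe accurate relative to its \emph{own} quantile (\Cref{clm:concentration}) and avoids this issue entirely; and (ii) since $G$ is discrete, a single point mass can exceed $\tau_k$, so "evenly-quantile-spaced anchors" must be interpreted as rounding each target quantile to a support point, with the observation that any heavy atom becomes an anchor itself so the interpolation jump between consecutive \emph{distinct} anchors stays $O(\tau_k)$. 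Neither issue is a gap---both are resolvable by standard arguments---but they are where your proof would need the most care relative to the paper's.
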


Applying \Cref{lma:query-alg} to $\tD_1, \cdots, \tD_n$ gives the algorithm that learns $\bE$ with $\OTild(n^{2} \cdot \epsilon^{-3})$ queries. We defer the proof of \Cref{lma:query-alg} to \Cref{sec:query-alg}, and first combine \Cref{lma:query-alg} with \Cref{lma:rev-close-query} to prove \Cref{thm:query-upper}:

\begin{proof}[Proof of \Cref{thm:query-upper}] We assume $\epsilon < 0.1$, otherwise \Cref{thm:query-upper} naturally holds.

We first describe the algorithm for \Cref{thm:query-upper}. Define discretized product distribution $\tbD$ according to \Cref{eq:def-td}. Then, each single-dimensional distribution $\tD_i$ is with support $\{k\cdot \epsilon^2: k \in \mathbb{Z} \cap [0, \epsilon^{-2}]\}$. Next, for every $i \in [n]$, we call the algorithm given by \Cref{lma:query-alg} with $G = \tD_i$, and let $E_i = H$ be the single-dimensional distribution given by the algorithm, and we define $\bE = E_1 \times \cdots \times E_n$ be the corresponding product distribution. 
    
    Next, we run the PTAS algorithm given by \Cref{lma:ptas} for the empirical distribution $\bE$, and get price vector $\bp^{(1)}$ such that $\rev_{\bE}(\bp^{(1)}) \geq \rev^*_{\bE} - \epsilon$. Finally, we apply the algorithm given by \Cref{lma:Nisan}, and outputs a price vector $\bp^{(2)}$ that satisfies $\rev_{\bD}(\bp^{(2)}) \geq \rev_{\tbD}(\bp^{(1)})$. $\bp^{(2)}$ is the final output of our algorithm.

    \noindent \textbf{Proving the success probability.}  Note that when calling the algorithm provided by \Cref{lma:query-alg}, each all succeeds with probability $1 - \delta/n$. By the union bound, with probability $1 - \delta$ we have
    \begin{align}
    \label{eq:query-thm-eq0}
        |F_{E_i}(v) - F_{\tD_i}(v)| ~\leq~ \epsilon \cdot (1 - F_{\tD_i}(v)) + \frac{\epsilon}{n}
    \end{align}
    for all $i \in [n]$ and $v \in [0, 1]$. We will show in the following that \eqref{eq:query-thm-eq0} is the only condition we need. So the algorithm succeeds with probability $1 - \delta$.

    \noindent \textbf{Bounding the query complexity.} Note that we only run queries when calling the algorithm given by \Cref{lma:query-alg}. We call the algorithm $n$ times, and the query complexity of each call is $O(n \log^3(n/(\epsilon \delta)) \cdot \epsilon^{-3})$. Therefore, the total number of queries is $O(n^2 \log^3(n/(\epsilon \delta)) \cdot \epsilon^{-3})$.

    \noindent \textbf{Bounding the error.} We first bound the difference between $\rev_{\tbD}(\bp^{(1)})$ and $\rev^*_{\tbD}$. When \eqref{eq:query-thm-eq0} holds, \Cref{lma:rev-close-query} guarantees
    \begin{align}
        \label{eq:query-thm-eq1}
        \rev^*_{\bE} ~\geq~ \rev_{\bE}(\bp^*_{\tbD}) ~\geq~ \rev_{\tbD}(\bp^*_{\tbD}) - 50\epsilon \cdot \log^2(n/\epsilon) ~=~ \rev^*_{\tbD} - 50\epsilon \cdot \log^2(n/\epsilon).
    \end{align}
    Furthermore, the PTAS algorithm provided by \Cref{lma:ptas} guarantees
    \begin{align}
    \label{eq:query-thm-eq2}
        \rev_{\bE}(\bp^{(1)}) ~\geq~ \rev^*_{\bE} - \epsilon.
    \end{align}
    It remains to bound the difference between $\rev_{\bE}(\bp^{(1)})$ and $\rev_{\tbD}(\bp^{(1)})$. To achieve this, we will apply \Cref{lma:rev-close-query} in a reversed way: Note that \eqref{eq:query-thm-eq0} implies
    \begin{align}
    \label{eq:query-thm-revbound}
        |F_{\tD_i}(v) - F_{E_i}(v)| ~\leq~ 2\epsilon \cdot (1 - F_{E_i}(v)) + \frac{2\epsilon}{n},
    \end{align}
    because if we have $1 - F_{\tD_i}(v) \leq 1/n$, \eqref{eq:query-thm-eq0} implies $|F_{\tD_i}(v) - F_{E_i}(v)| \leq 2\epsilon/n$. Otherwise, when $1 - F_{\tD_i}(v) > 1/n$, \eqref{eq:query-thm-eq0} implies 
    \[
    |F_{\tD_i}(v) - F_{E_i}(v)| ~\leq~ 2\epsilon \cdot (1 - F_{\tD_i}(v)) \leq 0.5 (1 - F_{\tD_i}(v)),
    \]
    where the last inequality holds when $\epsilon < 0.1$, and therefore $1 - F_{E_i}(v) \geq 0.5 (1 - F_{\tD_i}(v))$, implying
    \[
    |F_{\tD_i}(v) - F_{E_i}(v)| ~\leq~ 2\epsilon \cdot (1 - F_{E_i}(v)) + \frac{\epsilon}{n}.
    \]
    Combining both cases proves \eqref{eq:query-thm-revbound}. Then, applying \Cref{lma:rev-close-query} with swapped $\tbD$ and $\bE$, $2\epsilon$ being the error, and $\bp^{(1)}$ being the price vector, we get
    \begin{align}
        \label{eq:query-thm-eq3}
        \rev_{\tbD}(\bp^{(1)}) ~\geq~ \rev_{\bE}(\bp^{(1)}) - 100\epsilon \cdot \log^2(n/(2\epsilon)) ~\geq~ \rev_{\bE}(\bp^{(1)}) - 100\epsilon \cdot \log^2(n/\epsilon).
    \end{align}
    Summing \eqref{eq:query-thm-eq1}, \eqref{eq:query-thm-eq2} and \eqref{eq:query-thm-eq3} together, we get
    \begin{align*}
        \rev_{\tbD}(\bp^{(1)}) ~\geq~ \rev^*_{\tbD} - 150\epsilon \cdot \log^2(n/\epsilon) - \epsilon.
    \end{align*}

    The second step is to bound the difference between $\rev_{\bD}(\bp^{(2)})$ and $\rev^*_{\bD}$. Recall that the second argument in \Cref{lma:Nisan} guarantees
    \[
    \rev_{\bD}(\bp^{(2)}) ~\geq~ \rev_{\tbD}(\bp^{(1)}) - \epsilon.
    \]
    Furthermore, applying the first argument in \Cref{lma:Nisan} gives
    \[
    \rev^*_{\tbD} ~\geq~ \rev_{\tbD}(\tilde \bp) ~\geq~ \rev_{\bD}(\bp^*_{\bD}) - \epsilon ~=~ \rev^*_{\bD} - \epsilon.
    \]
    Combining the above two inequalities with the inequality between $\rev_{\tbD}(\bp^{(1)})$ and  $\rev^*_{\tbD}$ gives
    \[
    \rev_{\bD}(\bp^{(2)}) ~\geq~ \rev^*_{\bD} - 150\epsilon \cdot \log^2(n/\epsilon) - 3\epsilon ~\geq~ \rev^*_{\bD} - 200\epsilon \cdot \log^2(n/\epsilon).
    \]

    \noindent \textbf{Scaling the error parameter.} Note that the above error bound we get is $O(\epsilon \cdot \log^2(n/\epsilon))$, instead of $\epsilon$. To give the final query complexity bound, let $\epsilon' = 200\epsilon \cdot \log^2(n/\epsilon)$ be the scaled error parameter. Then, the total number of queries we need is 
    \begin{align*}
        O \left(\frac{n^2 \log^3(n/(\epsilon\delta))}{\epsilon^3}\right) ~=~ O \left(\frac{n^2 \log^9(n/(\epsilon\delta))}{\epsilon'^3}\right) ~=~ O \left(\frac{n^2 \log^9(n/(\epsilon'\delta))}{\epsilon'^3}\right).
    \end{align*}
    Therefore, we have $O\left(n^2 \log^9(n/(\epsilon\delta)) \cdot \epsilon^{-3}\right)$ to be our final query complexity.
\end{proof}

\subsection{Proof of \Cref{lma:query-alg}}
\label{sec:query-alg}

To prove \Cref{lma:query-alg}, we show \Cref{alg:query-alg} is the desired algorithm for \Cref{lma:query-alg}.

%\begin{wrapfigure}[15]{R}{0.45\textwidth} 
\begin{algorithm}[tbh]
\caption{\textsc{Learning A Single $G$ via Queries}}
\label{alg:query-alg}
\begin{algorithmic}[1]
\State \textbf{input:} Single-dimensional distribution $G$ with support $\{k\cdot \epsilon^2: k \in \mathbb{Z} \cap [0, \epsilon^{-2}]\}$, parameters $\delta, n, \epsilon$.
\State Initialize $k_{1} = \epsilon^{-2}$, step length $\lambda_{1} = \epsilon/n$, $F_H(k_{1}\cdot \epsilon^2) = 1$, and $j = 1$.
\While{$k_{j} > 0$}
\State Set $l \gets 0, r \gets k_{j} - 1$
\While{$l < r$}
\Comment{\textcolor{blue}{Find $k_{j+1}$ such that $F_{G}(k_{j+1} \cdot \epsilon^2)$ is away from $F_{H}(k_{j} \cdot \epsilon^2)$}}
\State Set $m \gets \lfloor(l + r + 1)/2 \rfloor$
\State Estimate $F_{G}(m \cdot \epsilon^2)$ by running $N = C \cdot \frac{n \log (n/(\epsilon \delta))}{\epsilon^2}$ queries with threshold $m \cdot \epsilon^2$. 
\State Let $\hat F_{G}(m \cdot \epsilon^2) = \frac{1}{N} \cdot \sum_{t \in [N]} \one[X_t \leq m \cdot \epsilon^2]$ be the estimate for $F_{G}(m \cdot \epsilon^2)$, where $X_t \sim G$ is the hidden sample from $t$-th query.
\If{$\hat F_{G}(m \cdot \epsilon^2) \leq F_{H}(k_{j} \cdot \epsilon^2) - 0.5 \lambda_{j}$}
\Comment{\textcolor{blue}{$F_{G}(m \cdot \epsilon^2)$ is away from $F_{H}(k_{j} \cdot \epsilon^2)$}}
\State Update $l \gets m$ 
\Else
\State Update $r \gets m - 1$
\EndIf
\EndWhile
\State Set $k_{j+1} \gets l$ 
\Comment{\textcolor{blue}{We have $l = r$ after finishing binary search}}
\State Run $N = C \cdot \frac{n \log (n/(\epsilon \delta))}{\epsilon^2}$ queries with threshold $k_{j+1} \cdot \epsilon^2$. 
\State Let $F_{H}(k_{j+1} \cdot \epsilon^2) = \frac{1}{N} \cdot \sum_{t \in [N]} \one[X_t \leq k_{j+1} \cdot \epsilon^2]$ be the estimate for $F_{G}(m \cdot \epsilon^2)$, where $X_t \sim G$ is the hidden sample from $t$-th query.
\State For $i = k_{j+1} + 1, \cdots, k_{j} - 1$, set $F_{H}(i\cdot \epsilon^2) \gets F_H(k_{j} \cdot \epsilon^2)$.
\State Update $k_{j} \gets k_{j+1}$, $\lambda_{j} \gets \epsilon \cdot (1 - F_{H}(k_{j+1} \cdot \epsilon^2)) + \epsilon/n$, and finally $j \gets j+1$.
\EndWhile
\State Set $R \gets j$ to be the total rounds of binary searches we run.
\State \textbf{output:} Distribution $H$ with $F_H(v)$ as its CDF.
\end{algorithmic}
\end{algorithm}
%\end{wrapfigure}

\noindent \textbf{Overview of \Cref{alg:query-alg}.} For a better understanding, we first give an overview of \Cref{alg:query-alg}.

The main idea of \Cref{alg:query-alg} is to find $R \approx O(\epsilon^{-1})$ number of key thresholds, and show that giving accurate CDF estimates for that $R$ thresholds would be sufficient.

To be more specific, our algorithm starts from $k_1 \cdot \epsilon^2 = 1$  as the first key threshold.  In each round, we find the following $k_{j+1} < k_j$, such that the CDF of the new $k_{j+1} \cdot \epsilon^2$ can't be estimated accurately via $F_H(k_j \cdot \epsilon^2)$. Since the CDF of distribution $G$ is monotone, it's sufficient to find $k_{j+1}$ via a standard binary search. Then, for multiples of $\epsilon^2$ which are between $k_{j+1} \cdot \epsilon^2$ and $k_{j} \cdot \epsilon^2$,  $F_H(k_j \cdot \epsilon^2)$ is sufficiently accurate to estimate their CDFs. The algorithm terminates when reaching $k_R = 0$ as the final key threshold.  The algorithm itself guarantees that every multiple of $\epsilon^2$ has an accurate CDF estimate. We will further show that this algorithm terminates with $R = O(\epsilon^{-1})$. Therefore, the algorithm tests $\OTild(\epsilon^{-1})$ thresholds, while each threshold is tested with $N = \OTild(n \cdot \epsilon^{-2})$ queries, which gives the desired $\OTild(n \cdot \epsilon^{-3})$ query complexity stated in \Cref{lma:query-alg}.

\noindent \textbf{Concentrations for \Cref{alg:query-alg}.} To analyze \Cref{alg:query-alg}, we first provide the following concentration bound:

\begin{restatable}{Claim}{concentration}
    \label{clm:concentration}
    For all $\hat F_{G}(v)$ and $F_{H}(v)$ that are calculated from $N = C \cdot \frac{n \log (n/(\epsilon \delta))}{\epsilon^2}$ queries with $v$ being a threshold, we have 
\begin{align*}
    |\hat F_{G}(v) - F_{G}(v)| \leq 0.1 \left(\epsilon \cdot (1 - F_G(v)) + \frac{\epsilon}{n}\right)~ \text{and} ~~  |F_{H}(v) - F_{G}(v)| \leq 0.1  \left(\epsilon \cdot (1 - F_G(v)) + \frac{\epsilon}{n}\right)
\end{align*}
    holds with probability at least $1 - \delta/n$, when $C$ is sufficiently large.
\end{restatable}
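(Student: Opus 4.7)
} The plan is to apply Bernstein's inequality (Theorem~\ref{thm:Bernstein-bounded}) to the indicator random variables underlying the empirical CDF, and then take a union bound over all the thresholds ever queried by Algorithm~\ref{alg:query-alg}. For a fixed threshold $v$, both $\hat F_G(v)$ and $F_H(v)$ are of the form $\frac{1}{N}\sum_{t=1}^N \one[X_t \leq v]$ for $X_t \sim G$ i.i.d., so it suffices to prove the bound once for a generic empirical CDF estimator $\bar F_G(v)$ and then quote it for both.

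For a fixed $v$, set $q := 1 - F_G(v)$ and $Y_t := \one[X_t \leq v] - F_G(v)$. Then $Y_t$ is mean-zero, bounded in absolute value by $1$, and has variance $F_G(v)(1-F_G(v)) \leq q$, so $\sigma^2 := \sum_{t=1}^N \Ex{Y_t^2} \leq Nq$. Apply Theorem~\ref{thm:Bernstein-bounded} with $M = 1$ and deviation $\varepsilon^\star = N \cdot 0.1\bigl(\epsilon q + \epsilon/n\bigr)$ for the sum $\sum_t Y_t$; then $|\bar F_G(v) - F_G(v)| > 0.1(\epsilon q + \epsilon/n)$ holds with probability at most $2\exp\bigl(-\frac{(\varepsilon^\star)^2/2}{Nq + \varepsilon^\star/3}\bigr)$. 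I would split into the two cases $q \geq 1/n$ and $q < 1/n$: in the first case $Nq$ dominates the denominator and the exponent simplifies to $\Omega(N\epsilon^2 q) \geq \Omega(N\epsilon^2/n)$; in the second case $\varepsilon^\star/3$ dominates but $\varepsilon^\star$ is $\Theta(N\epsilon/n)$, which again gives an exponent of $\Omega(N\epsilon^2/n)$. In both cases the failure probability for a single threshold is at most $2\exp(-cCn\log(n/(\epsilon\delta))/n) = 2\exp(-cC\log(n/(\epsilon\delta)))$ for an absolute constant $c>0$.

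To finish, I would union-bound over all thresholds that the algorithm could ever estimate. The thresholds are restricted to multiples of $\epsilon^2$ in $[0,1]$, so at most $\epsilon^{-2}+1$ distinct values of $v$ arise, and for each one at most two estimates ($\hat F_G$ and $F_H$) are produced; thus it suffices to union-bound over $O(\epsilon^{-2})$ events. Choosing $C$ sufficiently large so that $cC\log(n/(\epsilon\delta)) \geq \log(n/\delta) + 2\log(1/\epsilon) + O(1)$ makes the total failure probability at most $\delta/n$, yielding the claim.

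The only real subtlety I expect is in the Bernstein case analysis: the additive $\epsilon/n$ slack inside the target deviation is exactly what is needed to handle $q$ being extremely small (including $q = 0$), where a purely variance-based bound would give nothing. Separating the two regimes and checking that the numerator $(\epsilon q + \epsilon/n)^2$ dominates the denominator $q + (\epsilon q + \epsilon/n)/30$ by a factor of at least $\epsilon^2/n$ is the one computation that needs care; everything else is a textbook application of Bernstein plus a union bound.
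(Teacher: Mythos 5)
Your proposal matches the paper's proof essentially step for step: Bernstein's inequality applied to the centered indicators $\one[X_t\le v]-F_G(v)$, the same two-case split on whether $1-F_G(v)$ exceeds $1/n$, and a union bound absorbed by taking $C$ large. The one inaccuracy is your event count in the union bound: the same multiple of $\epsilon^2$ can be freshly re-estimated in several later binary-search rounds (each producing an independent estimate), so the paper union-bounds over $O(\epsilon^{-3})$ estimates (at most $\epsilon^{-2}$ rounds times $O(\log\epsilon^{-1})$ tests per binary search) rather than your $O(\epsilon^{-2})$ distinct thresholds with ``at most two estimates each''; this changes nothing in the end since the exponent absorbs the extra polylogarithmic factor for sufficiently large $C$.
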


We defer the proof of \Cref{clm:concentration} to \Cref{sec:concentration}. With \Cref{clm:concentration},  we also give the following \Cref{clm:step-size}, which gives the accuracy of the step length $\lambda_j$ in each round of binary search:

\begin{Claim}
    \label{clm:step-size}
    Assuming \Cref{clm:concentration} holds. Then, for every $j \in [R]$, we have 
    \[
    0.9 \cdot \left(\epsilon \cdot (1 - F_{G}(k_j \cdot \epsilon^2)) + \frac{\epsilon}{n}\right) ~\leq~ \lambda_{j} ~\leq~ 1.1 \cdot \left(\epsilon \cdot (1 - F_{G}(k_j \cdot \epsilon^2)) + \frac{\epsilon}{n}\right).
    \]
\end{Claim}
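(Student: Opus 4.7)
The plan is to unwind the definition of $\lambda_j$ in \Cref{alg:query-alg} and directly substitute the concentration bound from \Cref{clm:concentration}. Tracking the pseudocode, for every round $j \in [R]$ (including $j=1$, where the initialization gives $F_H(k_1\cdot \epsilon^2)=F_H(1)=1$), one checks that
\[
\lambda_j ~=~ \epsilon\cdot\bigl(1 - F_H(k_j\cdot \epsilon^2)\bigr) + \frac{\epsilon}{n}.
\]
Using the support assumption $v\in\{k\cdot\epsilon^2\}$ with $k\in\mathbb{Z}\cap[0,\epsilon^{-2}]$, we also have $F_G(k_1\cdot\epsilon^2)=F_G(1)=1$, so the base case matches the target expression exactly (with zero error).

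Next, I would apply \Cref{clm:concentration} to the quantity $F_H(k_j\cdot\epsilon^2)$ for each $j\ge 2$ (the $j=1$ case being handled by the exact initialization). This yields
\[
\bigl|F_H(k_j\cdot\epsilon^2) - F_G(k_j\cdot\epsilon^2)\bigr| ~\le~ 0.1\cdot\left(\epsilon\cdot\bigl(1-F_G(k_j\cdot\epsilon^2)\bigr) + \frac{\epsilon}{n}\right).
\]
Combining with the formula for $\lambda_j$,
\[
\Bigl|\lambda_j - \Bigl(\epsilon\bigl(1-F_G(k_j\cdot\epsilon^2)\bigr)+\tfrac{\epsilon}{n}\Bigr)\Bigr| ~=~ \epsilon\cdot\bigl|F_G(k_j\cdot\epsilon^2)-F_H(k_j\cdot\epsilon^2)\bigr|,
\]
which, by the bound above and using $\epsilon\le 1$, is at most $0.1\cdot\bigl(\epsilon(1-F_G(k_j\cdot\epsilon^2))+\epsilon/n\bigr)$. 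Rearranging gives the desired two-sided bound.

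There is essentially no obstacle here: the proof is a one-line substitution once one reads off the correct definition of $\lambda_j$ from the pseudocode and invokes the (already assumed) concentration bound of \Cref{clm:concentration}. The only mild subtlety is the base case $j=1$, where the algorithm hardcodes $\lambda_1=\epsilon/n$ and $F_H(k_1\cdot\epsilon^2)=1$ rather than computing them from queries; this is consistent with the target expression because $F_G$ assigns probability one to the bounded support, so both $1-F_G(k_1\cdot\epsilon^2)$ and $1-F_H(k_1\cdot\epsilon^2)$ vanish.
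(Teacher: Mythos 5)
Your proof is correct and is essentially identical to the paper's: both unwind the update rule $\lambda_j = \epsilon\,(1 - F_H(k_j\cdot\epsilon^2)) + \epsilon/n$, note the difference from the target expression equals $\epsilon\cdot|F_H(k_j\cdot\epsilon^2)-F_G(k_j\cdot\epsilon^2)|$, and conclude via \Cref{clm:concentration} together with $0.1\epsilon\le 0.1$. Your explicit check of the $j=1$ initialization is a minor bonus the paper leaves implicit.
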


\begin{proof}
    Note that 
    \[
    \lambda_j - \left(\epsilon \cdot (1 - F_{G}(k_j \cdot \epsilon^2)) + \frac{\epsilon}{n}\right) ~=~ \epsilon \cdot \left(F_H(k_j \cdot \epsilon^2) - F_G(k_j \cdot \epsilon^2)\right).
    \]
    Since \Cref{clm:concentration} guarantees $|F_{H}(v) - F_{G}(v)| \leq 0.1  \left(\epsilon \cdot (1 - F_G(v)) + \frac{\epsilon}{n}\right)$, there must be
    \[
    (1 - 0.1\epsilon) \cdot \left(\epsilon \cdot (1 - F_{G}(k_j \cdot \epsilon^2)) + \frac{\epsilon}{n}\right) ~\leq~ \lambda_j ~\leq~ (1 + 0.1\epsilon) \cdot \left(\epsilon \cdot (1 - F_{G}(k_j \cdot \epsilon^2)) + \frac{\epsilon}{n}\right).
    \]
    Then, \Cref{clm:step-size} holds from the fact that $0.1 \epsilon \leq 0.1$.
\end{proof}

\noindent \textbf{Proving the accuracy of $F_H(v)$.} Next, we show that \Cref{alg:query-alg} guarantees
\[
|F_{G}(v) - F_{H}(v)| ~\leq~ \epsilon \cdot (1 - F_{G}(v)) + \frac{\epsilon}{n}.
\]
For every $j \in [R]$, \Cref{clm:concentration} already guarantees
\[
|F_{H}(k_j \cdot \epsilon^2) - F_{G}(k_j \cdot \epsilon^2)| \leq 0.1  \left(\epsilon \cdot (1 - F_G(k_j \cdot \epsilon^2)) + \frac{\epsilon}{n}\right).
\]
Note that \Cref{alg:query-alg} sets $F_H(i \cdot \epsilon^2) = F_H(k_j \cdot \epsilon^2)$ for every $i = k_{j+1} + 1, \cdots, k_j$. Then, for every $i = k_{j+1} , \cdots, k_j - 1$, we already get
\begin{align*}
     F_G(i \cdot \epsilon^2) - F_H(i \cdot \epsilon^2) ~&\leq~  F_G(k_j \cdot \epsilon^2) - F_H(k_j \cdot \epsilon^2)\\
    ~&\leq~ 0.1 \cdot \left(\epsilon \cdot (1 - F_G(k_j \cdot \epsilon^2)) + \frac{\epsilon}{n}\right) \\
    ~&\leq~ 0.1 \cdot\left(\epsilon \cdot (1 - F_G(i \cdot \epsilon^2)) + \frac{\epsilon}{n}\right),
\end{align*}
where we apply the monotonicity of CDF $F_G(v)$ in the first and third inequality. Therefore, it remains to upper-bound $F_H(i \cdot \epsilon^2) - F_G(i \cdot \epsilon^2) = F_H(k_j \cdot \epsilon^2) - F_G(i \cdot \epsilon^2)$. Since CDF $F_G(v)$ is monotone, it's sufficient to only consider $i = k_{j+1} + 1$ when $k_{j+1} + 1 < k_j$. Note that the binary search process guarantees that $i = k_{j+1} + 1$ must be tested when $k_{j+1} + 1 < k_j$, and there must be 
\begin{align}
\label{eq:query-alg-acc1}
    \hat F_G((k_{j+1} + 1) \cdot \epsilon^2) > F_{H}(k_j \cdot \epsilon^2) - 0.5 \lambda_j,
\end{align}
which is guaranteed by Line 9 in \Cref{alg:query-alg}.
Therefore, we have
\begin{align*}
    &~F_H(k_j \cdot \epsilon^2) - F_G((k_{j+1} + 1) \cdot \epsilon^2) \\
    ~\leq&~ F_H(k_j \cdot \epsilon^2) - \hat F_G((k_{j+1} + 1) \cdot \epsilon^2) + 0.1 \left(\epsilon \cdot (1 - F_G((k_{j+1} + 1) \cdot \epsilon^2)) + \frac{\epsilon}{n}\right) \\
    ~\leq~& 0.5 \lambda_j + 0.1 \left(\epsilon \cdot (1 - F_G((k_{j+1} + 1) \cdot \epsilon^2)) + \frac{\epsilon}{n}\right) \\
    ~\leq~& 0.55 \cdot \left(\epsilon \cdot (1 - F_G(k_{j}  \cdot \epsilon^2)) + \frac{\epsilon}{n}\right) + 0.1 \left(\epsilon \cdot (1 - F_G((k_{j+1} + 1) \cdot \epsilon^2)) + \frac{\epsilon}{n}\right) \\
    ~\leq&~ \epsilon \cdot (1 - F_G((k_{j+1} + 1) \cdot \epsilon^2)) + \frac{\epsilon}{n},
\end{align*}
where the second line uses \Cref{clm:concentration}, the third line uses \Cref{eq:query-alg-acc1}, the fourth line uses \Cref{clm:step-size}, and the last line uses the fact that $F_G(v)$ is monotone.

\noindent \textbf{Bounding the number of queries for \Cref{alg:query-alg}.} The final step of proving \Cref{lma:query-alg} is to show that \Cref{alg:query-alg} uses no more than $O(n \log^3(n/(\epsilon \delta)) \cdot \epsilon^{-3})$ queries. Note that the algorithm inside the while loop from Line 3 to Line 17 uses a standard binary search, which tests at most $\log_2 (\epsilon^{-2}) + 2 = O(\log \epsilon^{-1})$ thresholds. Each threshold is tested for $O(n\log (n/(\epsilon \delta)) \cdot \epsilon^{-2}$ rounds. Therefore, it's sufficient to show $R = O(\log (n/(\epsilon\delta)) \cdot \epsilon^{-1})$.

Note that the binary search process in \Cref{alg:query-alg} guarantees if $k_{j+1} \neq 0$, threshold $k_{j+1} \cdot \epsilon^2$ must be tested, and we have
\begin{align}
\label{eq:query-alg-acc2}
    \hat F_G(k_{j+1}  \cdot \epsilon^2) \leq F_{H}(k_j \cdot \epsilon^2) - 0.5 \lambda_j,
\end{align}
which is guaranteed by Line 9 in \Cref{alg:query-alg}. Therefore
\begin{align*}
    1 - F_G(k_{j+1}  \cdot \epsilon^2) ~\geq&~ 1 - \hat F_G(k_{j+1}  \cdot \epsilon^2) - 0.1 \left(\epsilon \cdot (1 - F_G(k_{j+1} \cdot \epsilon^2)) +\frac{\epsilon}{n}\right) \\
    ~\geq&~ 1 - F_{H}(k_j \cdot \epsilon^2) + 0.5 \lambda_j - 0.1 \left(\epsilon \cdot (1 - F_G(k_{j+1} \cdot \epsilon^2)) +\frac{\epsilon}{n}\right) \\
    ~\geq&~ 1 - F_{G}(k_j \cdot \epsilon^2) + (0.45-0.1) \left(\epsilon \cdot (1 - F_G(k_{j} \cdot \epsilon^2)) +\frac{\epsilon}{n}\right) \\
    &-~ 0.1 \left(\epsilon \cdot (1 - F_G(k_{j+1} \cdot \epsilon^2)) +\frac{\epsilon}{n}\right) \\
    ~=&~ (1 +0.35\epsilon) \cdot (1 - F_G(k_j \cdot \epsilon^2)) - 0.1\epsilon \cdot (1 - F_G(k_{j+1} \cdot \epsilon^2)) + \frac{0.25\epsilon}{n},
\end{align*}
where the first inequality uses \Cref{clm:concentration}, the second inequality uses \Cref{eq:query-alg-acc2}, and the third inequality uses \Cref{clm:concentration} and \Cref{clm:step-size}.
Rearrange the above inequality, we have
\begin{align*}
    1 - F_G(k_{j+1}  \cdot \epsilon^2) ~&\geq~ \frac{1 + 0.35\epsilon}{1 + 0.1\epsilon} \cdot (1 - F_G(k_j \cdot \epsilon^2)) + \frac{0.25\epsilon}{(1 + 0.1\epsilon)n} \\
    ~&\geq~ (1 + 0.1\epsilon) \cdot (1 - F_G(k_j \cdot \epsilon^2)) + \frac{\epsilon}{10n},
\end{align*}
where the last inequality follows from  $\epsilon < 1$. 

We first apply the above inequality to $j = 1$ and get $1 - F_G(k_2 \cdot \epsilon^2) \geq \frac{\epsilon}{10n}$. Then, we apply the above inequality to $j = 2, 3, \cdots, R - 2$, which gives
\[
1 - F_{G}(k_{j+1} \cdot \epsilon^2) ~\geq~ (1 + 0.1\epsilon) \cdot (1 - F_G(k_j \cdot \epsilon^2)),
\]
and therefore 
\[
1 ~\geq~ 1 - F_{G}(k_{R-1} \cdot \epsilon^2) \geq (1 + 0.1\epsilon)^{R-3} \cdot (1 - F_G(k_2 \cdot \epsilon^2))~\geq~ (1 + 0.1\epsilon)^{R-3} \cdot \frac{\epsilon}{10n}.
\]
Rearranging the above inequality gives
\[
R ~\leq~ \frac{\log(10n/\epsilon)}{\log(1+0.1\epsilon)} + 3 ~=~ O\left(\frac{\log(n/\epsilon)}{\epsilon}\right).
\]

\bibliographystyle{alpha}
\bibliography{ref.bib}

\appendix

\section{$n/\epsilon^2$ Lower Bound for Sample Complexity}
\label{sec:sample-lower}

In this section, we provide a matching $\Omega(n \cdot \epsilon^{-2})$ lower bound for sample complexity. To be specific, we prove the following theorem:

\begin{Theorem}
\label{thm:sample-lower}
    For \SUDPP problem on product distributions, any learning algorithm requires $\Omega(\frac{n}{\epsilon^2})$ samples to return an expected $\epsilon$-additive approximation.
\end{Theorem}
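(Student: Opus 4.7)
The plan is to apply an information-theoretic reduction in the spirit of Le Cam's or Assouad's method to the hard instance sketched in the technical overview. Fix $q = \ln 4 - 1$. For each subset $S \subseteq [n]$ with $|S| = \lfloor qn \rfloor$, define the product distribution $\bD^{(S)}$ whose $i$-th coordinate is $G^L$ if $i \in S$ and $G$ otherwise. It suffices to exhibit a prior over such $S$ under which any estimator using $o(n/\epsilon^2)$ samples incurs expected revenue loss $\Omega(\epsilon)$; I will take $S$ uniformly at random over all $\binom{n}{\lfloor qn\rfloor}$ choices.

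First I would verify the revenue characterization for the base distribution $\bD^{(\emptyset)}$. By computing the probability of each of the three prize events (some price-$\tfrac12$ item has value $1$; some price-$1$ item has value $1$; some price-$\tfrac12$ item has value $\tfrac12$) as $n \to \infty$, the limiting revenue as a function of the fraction $\beta$ of items priced at $\tfrac12$ is $R(\beta) = \tfrac12 + \tfrac12 e^{-\beta/2} - \tfrac12 e^{-1/2} - \tfrac12 e^{-1/2 - \beta}$. Differentiating and setting $R'(\beta) = 0$ recovers $\beta = \ln 4 - 1 = q$, confirming that the base-optimal pricing assigns $qn$ items to price $\tfrac12$ and the rest to price $1$; by symmetry the choice of which items is arbitrary.

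Next I would show that for a perturbed distribution $\bD^{(S)}$, the optimal pricing breaks this symmetry by placing items of $S$ at price $\tfrac12$ (they have the extra probability mass at $\tfrac12$) and items of $[n]\setminus S$ at price $1$. The crucial quantitative step, and the main technical obstacle, is to show that any swap (moving one item of $S$ to price $1$ and one item of $[n]\setminus S$ to price $\tfrac12$) strictly decreases the expected revenue by $\Omega(\epsilon/n)$. I would establish this by expanding the revenue as a sum of per-item contributions weighted by the probability that the item is the buyer's favorite, Taylor-expanding in $\epsilon$, and observing that the $\epsilon^0$ term vanishes by the first-order optimality of the base problem while the asymmetric $\epsilon$-shift of mass from $1$ to $\tfrac12$ at the swapped coordinates produces a first-order term of magnitude $\Theta(\epsilon/n)$. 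Summing over all disagreements between an algorithm's price assignment and the true $S$, it follows that any algorithm with expected revenue loss at most $\epsilon$ must identify $S$ up to Hamming error $O(n)$, i.e., must correctly classify a constant fraction of the perturbed indices.

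Finally, I would close the argument by an information-theoretic bound. A direct computation using $\sqrt{p+\delta} \approx \sqrt{p} + \delta/(2\sqrt{p})$ on the three-atom supports of $G$ and $G^L$ yields $d_H^2(G, G^L) = \Theta(\epsilon^2/n)$. By Hellinger tensorization, for subsets $S,S'$ of Hamming distance $\Delta$ and $N$ i.i.d.\ samples, $d_H^2(\bD^{(S)\otimes N}, \bD^{(S')\otimes N}) \le \Delta \cdot N \cdot d_H^2(G, G^L) = O(\Delta \cdot N\epsilon^2/n)$, so via $d_{TV} \le \sqrt{2}\,d_H$ the two hypotheses are statistically indistinguishable when $N\epsilon^2/n$ is small. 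Applying Assouad's lemma (or equivalently a per-coordinate Le Cam bound between two priors that differ on a single coordinate after conditioning on the symmetric structure of $|S|=qn$) implies the expected Hamming error in recovering $S$ is $\Omega(n)$ whenever $N\epsilon^2/n$ is below a sufficiently small absolute constant. Combined with the $\Omega(\epsilon/n)$-per-mistake revenue loss established above, this forces expected revenue loss $\Omega(\epsilon)$ unless $N = \Omega(n/\epsilon^2)$, completing the lower bound.
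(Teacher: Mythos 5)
Your proposal is essentially the paper's argument: the same pair of three-atom distributions $G$ and $G^L$, the same $\Omega(\epsilon/n)$-per-misclassified-item revenue loss, and the same Hellinger-tensorization-plus-TV bound showing indistinguishability below $N = \Theta(n/\epsilon^2)$ samples. The one structural difference is how the randomness is packaged. The paper does not draw $S$ uniformly over all size-$\lfloor qn\rfloor$ subsets; it groups the first $2q^*_n n$ items into $q^*_n n$ fixed pairs and randomizes only the order of $(G, G^L)$ within each pair. This buys two things your version leaves as loose ends. First, the prior becomes a product over pairs, so the indistinguishability argument is a clean independent binary test per pair rather than an Assouad-type bound over a non-product (fixed-cardinality) prior, which you would have to patch via the conditioning trick you allude to. Second, and more substantively, your step ``revenue loss $\le \epsilon$ implies Hamming error $O(n)$ in recovering $S$'' is justified only by a swap argument, which presumes the algorithm prices exactly $|S|$ items at $\tfrac12$; if the algorithm prices a different number of items at each level, the misclassified items cannot all be paired into swaps, and the residual imbalance contributes only a second-order (in the fraction deviation) loss that does not obviously combine with the first-order $\epsilon/n$ terms. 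The paper's pairing makes each mistake literally a within-pair price swap, so its Lemma on the $0.2\epsilon/n$ swap loss (proved by a coupling/``undecided zone'' argument rather than your Taylor expansion, though both give the same first-order term) applies directly. Also note a small imprecision: for a swap the $\epsilon^0$ term vanishes by symmetry ($G = G^L$ at $\epsilon = 0$), not by first-order optimality of the base problem; first-order optimality is what you would need if the \emph{counts} at each price level changed, which is exactly the case your sketch does not handle.
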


To prove \Cref{thm:sample-lower}, our main idea is to introduce a hard instance, such that for any learning algorithm with less than $C \cdot \frac{n}{\epsilon^2}$ samples with a sufficiently small $C$, the probability that it incurs an $\Omega(\epsilon)$-additive loss is at least $0.1$. We first introduce a base instance of the problem and further convert it into a hard instance by adding some randomization.

\paragraph{Base Instance.} The base instance we consider is a product distribution $\bG$, which contains $n$ i.i.d. distributions $G$ with $n$ being sufficiently large,  such that
\begin{align}
\label{eq:dm-def}
    G = \begin{cases} 1 &  \text{with probability }~~ \frac{0.5}{n} \\ 0.5 & \text{with probability  }~~ \frac{1}{n} \\ 0 & \text{with probability  }~~ 1 - \frac{1.5}{n}.
\end{cases}~
\end{align}

We first consider the optimal price vector $\bp^*_{\bG}$ for distribution $\bG$. Note that to achieve optimal revenue, each price in $\bp^*_{\bG}$ must be either $0.5$ or $1$. If not, increasing all prices within $[0, 0.5)$ to $0.5$ and $(0.5, 1)$ to $1$ can only increase the revenue. For $q \in [0, 1]$ define $\bp^{(q)}$ to be the price vector such that the first $qn$ item prices of $\bp^{(q)}$ equal to $0.5$, while the remaining prices are set to be $1$, i.e., $q \in [0, 1]$ fraction of items in price vector $\bp^{(q)}$ are with price $0.5$, and $1 - q$ fraction of items are with price $1$. Then, there must be $\bp^*_{\bG} = \bp^{(q^*)}$ for some $q^* \in [0, 1]$.

To find $q^*$, consider the value of $\rev_{\bG}(\bp^{(q)})$. Note that we get revenue $1$ when there exists some $i$ with $v_i = \bp^{(q)}_i = 1$, while the items $i'$ with $\bp^{(q)}_{i'} = 0.5$ must satisfy $v_{i'} \neq 1$ (otherwise the utility becomes $0.5$, which is greater than the utility for $i$). Therefore, we have
\begin{align*}
    \pr\left[\text{revenue is }1 \text{ with } \bp^{(q)}\right] ~&=~ \left(1 - \left(1 - \frac{0.5}{n}\right)^{n-qn}\right) \cdot \left(1 - \frac{0.5}{n}\right)^{qn} \\
    ~&=~ \left(1 - \frac{0.5}{n}\right)^{qn} - \left(1 - \frac{0.5}{n}\right)^{n}  \\
    ~&=~ e^{-0.5q} - e^{-0.5} \pm O(n^{-1}).
\end{align*}
Next, consider the case where we get revenue $0$. This happens when all items $i$ with $\bp^{(q)}_i = 1$ satisfy $v_i \neq 1$, while all items $i'$ with $\bp^{(q)}_{i'} = 0.5$ satisfy $v_{i'} = 0$. Therefore, we have
\begin{align*}
    \pr\left[\text{revenue is }0 \text{ with } \bp^{(q)}\right] ~&=~  \left(1 - \frac{0.5}{n}\right)^{n-qn} \cdot \left(1 - \frac{1.5}{n}\right)^{qn} \\
    ~&=~ e^{-0.5(1-q)} \cdot e^{-1.5q} \pm O(n^{-1}) ~=~ e^{-0.5-q} \pm O(n^{-1}).
\end{align*}
The remaining case is that we get revenue $0.5$. Then,
\begin{align*}
    \pr\left[\text{revenue is }0.5 \text{ with } \bp^{(q)}\right] ~&=~  1 - (e^{-0.5q} - e^{-0.5}) - e^{-0.5-q}  \pm O(n^{-1}).
\end{align*}
For simplicity of the notation, define $t = e^{-0.5q} \in [e^{-0.5}, 1]$. Then, our goal is to find $t$ that maximizes
\[
(t - e^{-0.5}) \cdot 1 + (1 - t + e^{-0.5} - t^2/e^{(0.5)}) \cdot 0.5 \pm O(n^{-1}),
\]
which is equivalent to maximize $-t^2e^{-0.5} + t$. Note that $-t^2e^{-0.5} + t$ is maximized when $t^* = e^{0.5}/2$, which further implies $q^* = \ln 4 - 1 \approx 0.3863$. Note that the above calculation comes with an $O(1/n)$ error, which is the difference between $(1 + c/n)^{n/c}$ and $1/e$ with a constant $c$. Therefore, the real optimal $q$ should be $q^* \pm O(1/n)$. We let $q^*_n = \ln 4 - 1 \pm O(1/n)$ be the real optimal parameter, such that $\rev_{\bG}(\bp^{(q^*_n)})$ maximizes the revenue. When $n$ is sufficiently large, we have $q^*_n \in [0.38, 0.39]$.

\paragraph{Converting the base instance.} Next, we add some randomization to the base instance we described above and show that the new instance is hard to learn. Formally, we give the following definition:

\begin{Definition}[Hard Instance]
\label{def:hard}
Consider a \SUDPP instance with underlying product distribution $\tbG$, such that the first $2q^*_n \cdot n$ items in $\tbG$ are divided into $q^*_n \cdot n$ pairs. Inside each pair, there is exactly one item with value distribution $G$ defined in \eqref{eq:dm-def}, and another item with value distribution $G^L$, such that
\begin{align}
\label{eq:dl-def}
    G^L = \begin{cases} 1 &  \text{with probability }~~ \frac{0.5 - \epsilon}{n} \\ 0.5 & \text{with probability  }~~ \frac{1 + \epsilon}{n} \\ 0 & \text{with probability  }~~ 1 - \frac{1.5}{n},
\end{cases}~
\end{align}
where $\epsilon$ is sufficiently small.
\end{Definition}

We will show that the instance $\tbG$ defined in \Cref{def:hard} is the desired hard instance for \Cref{thm:sample-lower}.

Note that when $\epsilon$ is sufficiently small, the optimal pricing strategy for $\tbG$ is the same as $\tD$, i.e., $q^*_n \cdot  n$ number of items are with price $0.5$, and the remaining are with price $1$. The major difference between $\tbG$ and $\bG$ is that there are $q^*_n \cdot n$ number of items with value distribution $G^L$. Since the probability of generating a $1$ from $G^L$ is slightly smaller than that from $G$, intuitively we should price the item with distribution $G^L$ as $0.5$, and the remaining items with distribution $G$ as $1$. The following lemma formally suggests that if we incorrectly mark the price for a distribution $G^L$ as $1$, and the price for a distribution $G$ as $0.5$, swapping the prices of these two distributions incurs at least $\Omega(\frac{\epsilon}{n})$ gain in the revenue.

\begin{Lemma}
\label{lma:mistake-loss}
    Fix a pair of items, such that one item is with distribution $G^L$, and another one is with distribution $G$. Assume the prices of other items are at least $0.5$. If we incorrectly price the item with distribution $G^L$ at $1$ and the item with distribution $G$ at $0.5$, swapping the prices of these two items gains an extra revenue for at least $0.2 \cdot \frac{\epsilon}{n}$.
\end{Lemma}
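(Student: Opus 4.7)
The plan is a direct case analysis on the pair's realized values $(v_L,v_R)\in\{0,\tfrac12,1\}^2$, conditioning on the maximum utility $U^\ast:=\max_{j\notin\{L,R\}}(v_j-p_j)$ obtained outside the pair. Since every outside price is at least $1/2$ and every value is at most $1$, one has $U^\ast\le 1/2$ deterministically; set $A:=\Pr[U^\ast\le 0]$. For each of the nine realizations and each of the two pricings (``wrong'' $(p_L,p_R)=(1,\tfrac12)$ and ``swap'' $(\tfrac12,1)$), I would tabulate the intra-pair winner and the winner's utility $u_{\text{pair}}$ using the paper's tie-break-by-highest-price convention; the pair then contributes its winner's price to revenue iff $u_{\text{pair}}\ge U^\ast$. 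The three diagonal cases $(v,v)$ contribute equally under both pricings by symmetry, so only the six off-diagonal cases matter.

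Plugging in $\Pr[v_L=\tfrac12]=(1+\epsilon)/n$, $\Pr[v_L=1]=(\tfrac12-\epsilon)/n$, $\Pr[v_R=\tfrac12]=1/n$, $\Pr[v_R=1]=1/(2n)$, and keeping only leading-$1/n$ contributions, the six off-diagonal cases assemble into
\[
\Delta \;=\; \frac{\epsilon}{2n}\bigl(3A-1\bigr)\;+\;O(\epsilon/n^2),
\]
where $\Delta$ denotes the expected per-pair revenue under swap minus under wrong. A useful sanity check is that setting $\epsilon=0$ makes both pair items $G$-distributed and the two pricings equivalent up to relabelling, forcing the $\epsilon^{0}$ coefficients in the expansion to cancel---exactly what the algebra produces, leaving only the $\epsilon$-linear term. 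The final step is to lower-bound $A$: for every outside item and every price $p_j\in[\tfrac12,1]$, the positive-utility event $\{u_j>0\}=\{v_j>p_j\}$ requires $v_j=1$ and $p_j<1$, and so has probability at most $1/(2n)$ under either $G$ or $G^L$. By independence, $A\ge(1-1/(2n))^{n-2}\ge e^{-1/2}-o(1)\ge 0.6$, hence $3A-1\ge 0.8$ and $\Delta\ge 0.2\,\epsilon/n$ as claimed.

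The main obstacle is the careful enumeration of the nine cases with correct tie-breaking, especially the two intra-pair tied-utility subcases $(v_L,v_R)=(\tfrac12,1)$ under swap and $(1,\tfrac12)$ under wrong in which both pair members realize utility exactly $0$; the highest-price rule resolves these unambiguously inside the pair. A secondary subtlety is handling ties at utility $u_{\text{pair}}$ between the pair's winner and an outside item (which can occur only when $u_{\text{pair}}\in\{0,\tfrac12\}$ coincides with some outside $u_j=v_j-p_j$): at the leading $\epsilon/n$ order these events only shift the constant inside $3A-1$ by $o(1)$, leaving comfortable slack above the required threshold of $0.4$.
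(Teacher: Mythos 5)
Your route—direct enumeration of the pair's value realizations conditioned on the best outside utility $U^*$—is genuinely different from the paper's proof, which couples the two pricings through a common base distribution plus an $\epsilon/n$ ``undecided zone'' and only needs one inequality per coupling case. Your route could be made to work, but the algebra you report contains a real error of the same order as the quantity being bounded. The issue is your accounting rule ``the pair contributes its winner's price iff $u_{\text{pair}}\ge U^*$'': this computes only $\E[\pi\cdot\one[\text{pair wins}]]$, where $\pi$ is the pair's internal winner's price, and silently drops the outside winner's price $W$, which is collected exactly when the pair loses. Since swapping the two prices changes \emph{which} realizations the pair wins, the omitted term $\E[W\cdot(\one[\text{pair loses under swap}]-\one[\text{pair loses under wrong}])]$ is $\Theta(\epsilon/n)$, not negligible. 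One can check that $\frac{\epsilon}{2n}(3A-1)$ is precisely the difference of the pair-only contributions, whereas a full tabulation (with outside prices in $\{1/2,1\}$ and values in $\{0,1/2,1\}$, as in the hard instance) gives
\[
\Delta \;=\; \frac{\epsilon}{2n}\Bigl(\pr[U^*=0]+2\,\pr[U^*<0]\Bigr)+O(\epsilon/n^2),
\]
which agrees with your formula only when $\pr[U^*=0]=\pr[U^*=1/2]$. A sanity check that falsifies your expression: if every outside item deterministically had value $1$ and price $1/2$, then $A=0$ and your formula returns $-\epsilon/(2n)$, yet the revenue is identically $1/2$ under both pricings, so $\Delta=0$.

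Relatedly, your claim that tie events only shift the constant by $o(1)$ is false: the ties $u_{\text{pair}}=U^*=0$ occur with constant probability, are the sole source of the $\frac{\epsilon}{2n}\pr[U^*=0]$ term, and whether the tied outside winner has price $1$ or $1/2$ changes the coefficient, so they must be tracked exactly. The lemma itself survives a corrected computation, since $\Delta\ge\frac{\epsilon}{2n}\,\pr[U^*\le 0]\ge\frac{\epsilon}{2n}\,e^{-1/2}(1-o(1))\ge 0.3\,\epsilon/n$; but note that in the actual instance the term you dropped is \emph{negative}, so your expression overstates the gain—the dangerous direction for a lower bound—and your derivation as written does not establish the claim. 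The paper's coupling argument avoids this bookkeeping entirely: it never needs the exact value of $\Delta$, only that one $\epsilon/n$-probability coupling event gains at least $\tfrac12\cdot\pr[\text{all other values}\le 1/2]$ while the symmetric event cannot lose revenue.
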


\begin{proof}
    We prove \Cref{lma:mistake-loss} via a coupling idea.
    Define distribution $G^{(1)}$ and $G^{(0.5)}$, such that $G^{(1)}$ is the distribution with price $1$, and $G^{(0.5)}$ is the distribution with price $0.5$. Both distributions generate $0$ w.p. $1 - 1.5/n$, $0.5$ w.p. $1/n$, and $1$ w.p. $(0.5 - \epsilon)/n$. For the remaining $\epsilon/n$ probability, we mark it as ``undecided zone''. Then, pricing $(G, G^L)$ at prices $(1, 0.5)$ or $(0.5, 1)$ is the same as allocating $1$ and $0.5$ to the undecided zones of $G^{(1)}$ and $G^{(0.5)}$, and proving \Cref{lma:mistake-loss} is equivalent to show that assigning $1$ to the undecided zone of $G^{(1)}$ while assigning $0.5$ to the undecided zone of $G^{(0.5)}$ yields an extra $0.2 \cdot \frac{\epsilon}{n}$ revenue compared to doing the opposite.

    We first assign $0.5$ to the undecided zone of $G^{(1)}$, and $1$ to the undecided zone of $G^{(0.5)}$, and lower bound the changes of revenue when switching the values of two zones. Let $v^{(1)} \sim G^{(1)}$ and $v^{(0.5)} \sim G^{(0.5)}$. Consider the following three cases:

    \noindent \textbf{Case 1: Both $v^{(1)}$ and $v^{(0.5)}$ are outside of undecided zones.} In this case, the revenue remains unchanged.

    \noindent \textbf{Case 2: $v^{(1)}$ falls in the undecided zone.} In this case, the value of $v^{(1)}$ increases from $0.5$ to $1$. Consider the scenario where the values of remaining items are at most $0.5$. In this case, the revenue increases from at most $0.5$ to $1$. Therefore, the expected gain of the revenue is at least 
    \[
    0.5 \cdot \left(1 - \frac{0.5}{n}\right)^{n-1} ~=~ \frac{1}{2\sqrt{e}} \pm O(n^{-1}).
    \]

    \noindent \textbf{Case 3: $v^{(0.5)}$ falls in the undecided zone, while $v^{(1)}$ does not.} In this case, the value of $v^{(0.5)}$ drops from $1$ to $0.5$. For any realization of values for other items, if $v^{(0.5)}$ initially wins the auction and loses the auction after the value drop, the revenue can only increase, because the price of every other item is at least $0.5$. Otherwise, the revenue remains unchanged, because the winner of the auction does not change. Therefore, in this case, the revenue does not decrease.

    To conclude, the total gain of our revenue is at least
    \[
    \left(\frac{1}{2\sqrt{e}} \pm O(n^{-1}) \right) \cdot \pr\left[\text{Case 2 happens}\right] ~\geq~ 0.2 \cdot \frac{\epsilon}{n},
    \]
    where the last inequality holds when $n$ is sufficiently large.
\end{proof}

\Cref{lma:mistake-loss} suggests that whenever we fail to distinguish between $G^L$ and $G$ for one item pair, we incur an $\Omega(\epsilon/n)$ loss in revenue. Since there are $O(n)$ item pairs to be distinguished, it remains to show that when we don't have sufficiently many samples, there is a constant probability of making a mistake for each pair. Formally, we prove the following:

\begin{Lemma}
\label{lma:hellinger}
    Consider the \SUDPP instance defined in \Cref{def:hard}. Suppose that, for all $q^*_n \cdot n$ pairs of distributions stated in \Cref{def:hard}, the distribution pairs $(G^L, G)$ and $(G, G^L)$ occur with equal probability of $0.5$. Then, if the distribution $\tbG$ is not explicitly given and instead $N = C \cdot \frac{n}{\epsilon^2}$ samples are provided, any algorithm must incorrectly distinguish whether a given distribution pair is $(G^L, G)$ or $(G, G^L)$ for at least $0.05 \cdot n$ pairs in expectation.
\end{Lemma}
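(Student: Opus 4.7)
My plan is a Le Cam-style two-point argument applied independently to each of the $q_n^* n$ pairs, with the distinguishability controlled by the Hellinger distance between $G$ and $G^L$. For a fixed pair $j$, I first observe that, conditional on the orientations of all other pairs, the samples at items outside pair $j$ are independent of pair $j$'s orientation, so identifying pair $j$'s orientation reduces to distinguishing the two product distributions $(G^L \times G)^{N}$ and $(G \times G^L)^{N}$ on the two coordinates corresponding to pair $j$ across the $N$ sample vectors.

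Next I would directly compute the per-coordinate Hellinger affinity $\rho := \sum_{x \in \{0,\, 0.5,\, 1\}} \sqrt{G(x)\, G^L(x)}$. The contribution from $x = 0$ is unchanged, while the contributions at $x = 0.5$ and $x = 1$ can be expanded via $\sqrt{1 + c\epsilon} = 1 + c\epsilon/2 - c^2\epsilon^2/8 + O(\epsilon^3)$. The first-order terms in $\epsilon$ cancel, because mass is conserved across the shift, leaving $1 - \rho = \Theta(\epsilon^2/n)$. By multiplicativity of the Hellinger affinity on product measures, the per-sample affinity between $G^L \times G$ and $G \times G^L$ is $\rho^2$, and the $N$-sample affinity is $\rho^{2N}$, so
\[
d_{\mathrm{TV}}\!\left((G^L \times G)^N,\ (G \times G^L)^N\right) \;\leq\; \sqrt{2(1 - \rho^{2N})} \;\leq\; 2\sqrt{N(1 - \rho)} \;=\; O\!\left(\epsilon\sqrt{N/n}\right).
\]
With $N = C n / \epsilon^2$ for a sufficiently small absolute constant $C$, this total variation distance is at most some fixed $\tau < 1/2$, for instance $\tau = 0.36$.

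Finally, Le Cam's inequality guarantees that under the uniform prior on pair $j$'s orientation, the Bayes error probability for identifying pair $j$ is at least $(1 - \tau)/2 \geq 0.32$. Summing over all $q_n^* n \geq 0.38 n$ pairs (for $n$ large enough that $q_n^* \geq 0.38$) and using linearity of expectation, the expected number of misclassified pairs is at least $0.32 \cdot 0.38 \cdot n > 0.05 n$, as claimed. The main obstacle I anticipate is the Hellinger computation: a na\"ive bound that only notes ``$\epsilon/n$ of mass shifts between $0.5$ and $1$'' would yield $d_H^2(G, G^L) = \Theta(\epsilon/n)$, which is weaker by a full factor of $\epsilon$ and would only give an $n/\epsilon$ lower bound. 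Verifying the first-order cancellation in the expansion is what pins the sample complexity at $n/\epsilon^2$; the reduction to per-pair testing via conditioning on other pairs' orientations and the closing Le Cam step are then routine.
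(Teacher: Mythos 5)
Your proposal is correct and follows essentially the same route as the paper: bound $d_H^2(G,G^L)=O(\epsilon^2/n)$ (your first-order-cancellation expansion of the affinity is the same computation the paper does by directly bounding each $(\sqrt{p_x}-\sqrt{q_x})^2$), tensorize over the $2N$ coordinates, convert to total variation, apply the two-point/Le Cam bound per pair, and sum by linearity of expectation. Your explicit reduction to per-pair testing by conditioning on the other pairs' orientations is a point the paper leaves implicit, but the argument is the same.
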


\begin{proof}
    We start by bounding the total variation distance between the observations of $N$ samples from distribution pairs  $(G^L, G)$ and $(G, G^L)$ via Hellinger distance. Note that 
    \begin{align*}
        1 - H^2(G, G^L) ~&=~ 1 - \frac{1}{2} \left(\left(\sqrt{\frac{0.5}{n}} - \sqrt{\frac{0.5 - \epsilon}{n}}\right)^2 + \left(\sqrt{\frac{1}{n}} - \sqrt{\frac{1 + \epsilon}{n}}\right)^2\right) \\
        ~&\geq~1 - \frac{1}{2} \left(\left(\frac{2\epsilon}{\sqrt{n}}\right)^2 + \left(\frac{\epsilon}{\sqrt{n}}\right)^2\right) \\
        ~&\geq~ 1 - \frac{3\epsilon^2}{n},
    \end{align*}
    where in the second line we use the fact that $\sqrt{0.5} - \sqrt{0.5 - \epsilon} \leq 2\epsilon$ for $\epsilon \in [0, 0.5]$, and $\sqrt{1 + \epsilon} - 1 \leq \epsilon$ for $\epsilon \in [0, 1]$.
    Therefore, we have
    \[
    1 - H^2\left((G, G^L), (G^L, G)\right) ~=~ \left(1 - H^2(G, G^L)\right) \cdot \left(1 - H^2(G^L, G)\right) ~\geq~ 1 - \frac{6 \epsilon^2}{n},
    \]
    where the first equality follows from \cite{gibbs2002choosing}. Following the same equation, we have the Hellinger distance between $N$ samples from $(G^L, G)$ and from $(G, G^L)$ is at most $\sqrt{\frac{12N\epsilon^2}{n}}$. Using the fact that the total variation distance is upper-bounded by $\sqrt{2}$ times the Hellinger distance (also see \cite{gibbs2002choosing}) together with the assumption that $N = C \cdot \frac{n}{\epsilon^2}$, the total variation distance between $N$ samples from $(G^L, G)$ and $N$ samples from $(G, G^L)$ is at most $\sqrt{12 C}$, which is smaller than $0.25$ when $C$ is sufficiently small.

    Therefore, when $C$ is sufficiently small, for each pair of distributions that needs to be distinguished, any learning algorithm will fail to decide whether the setting is $(G^L, G)$ or $(G, G^L)$ with probability at least $0.25$. Then, any algorithm has at least $0.25$ probability of making a mistake to set the correct price, and the expected number of mistakes is at least $0.25 \cdot q^*_n \cdot n \geq 0.05n$, where the last inequality uses that $q^*_n \geq 0.2$ when $n$ is sufficiently large.
\end{proof}

Finally, combining \Cref{lma:mistake-loss} and \Cref{lma:hellinger} proves \Cref{thm:sample-lower}:

\begin{proof}[Proof of \Cref{thm:sample-lower}]
    We prove \Cref{thm:sample-lower} by showing that for the instance defined in \Cref{def:hard}, any learning algorithm that learns $\tbG$ with at most $C \cdot \frac{n}{\epsilon^2}$ samples incurs at least an expected $\Omega(\epsilon)$ additive loss in the revenue when $C$ is sufficiently small. 
    
    Note that we only need to consider the learning algorithms that price every item with either $0.5$ or $1$. If not, increasing all prices within $[0, 0.5)$ to $0.5$ and $(0.5, 1)$ to $1$ can only increase the revenue. Then, \Cref{lma:hellinger} suggests that any learning algorithm must incorrectly price at least $0.05 n$ pairs in expectation when $C$ is sufficiently small, while \Cref{lma:mistake-loss} suggests that each incorrectly priced pair incurs an additive $\Omega(\frac{\epsilon}{n})$ loss in revenue, where \Cref{lma:mistake-loss} is applicable since we assume the price for each item is at least $0.5$. Therefore, when $C$ is sufficiently small, any learning algorithm must lose at least $\Omega(\epsilon)$ revenue.
\end{proof}

\section{$n^2/\epsilon^3$ Lower Bound for Query Complexity of Ex-ante Pricing}
\label{sec:query-lower}

In this section, we prove the $\Omega(\frac{n^2}{\epsilon^3})$ query complexity lower bound for the ex-ante version of the unit-demand pricing problem. To be specific, we prove the following:

\begin{Theorem}
\label{thm:query-lower}
    For the ex-ante version of the \SUDPP problem on product distributions, any learning algorithm requires $\Omega(\frac{n^2}{\epsilon^3})$ queries to return an expected $\epsilon$-additive approximation.
\end{Theorem}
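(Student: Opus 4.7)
The plan is to adapt the single-dimensional $\Omega(\epsilon^{-3})$ construction of \cite{leme2023pricing} into a multi-parameter hard instance, where the per-item signal is shrunk by a factor of $n$, yielding the extra $n^2$ gap. First I would set up the construction. Let $K = \lfloor 1/(4\epsilon) \rfloor - 1$ and define a base distribution $H$ supported on $\{0\} \cup \{\tfrac12 + k\epsilon : 0 \le k \le K+1\}$ by pinning down the upper tails via the equal-revenue property $(\tfrac12+k\epsilon)\cdot \Pr_H[v \ge \tfrac12+k\epsilon] = \tfrac{1}{2n}$. A direct calculation gives the point masses $m_k = \Pr_H[v = \tfrac12+k\epsilon] = \Theta(\epsilon/n)$ and a total upper-tail mass $\Theta(1/n)$ per item. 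For each $k \in \{1,\ldots,K\}$ define $H^{(k)}$ by moving the mass $m_k$ from $\tfrac12+k\epsilon$ up to $\tfrac12+(k+1)\epsilon$. The hard instance draws, independently for each item $i$, a uniformly random $k_i \in \{1,\ldots,K\}$ and sets $D_i = H^{(k_i)}$.

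Second I would quantify the revenue gap. Since only the mass at $\tfrac12+(k_i+1)\epsilon$ is altered relative to $H$, the single-item ex-ante revenue $p_i \cdot \Pr_{H^{(k_i)}}[v_i \ge p_i]$ equals the equal-revenue value $\tfrac{1}{2n}$ for every $p_i$ outside the half-open interval $(\tfrac12 + k_i\epsilon, \tfrac12+(k_i+1)\epsilon]$, while it is $\tfrac{1}{2n} + (\tfrac12+(k_i+1)\epsilon)\cdot m_{k_i} = \tfrac{1}{2n} + \Theta(\epsilon/n)$ when $p_i = \tfrac12+(k_i+1)\epsilon$. A direct estimate shows $\sum_i \Pr_{H^{(k_i)}}[v_i \ge p_i] = \Theta(1) < 1$ in expectation even under the optimal pricing (roughly $2\ln(3/2)<1$), so the ex-ante capacity constraint is not binding and the per-item contributions sum cleanly. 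Hence any pricing that misses the correct interval for a $(1-\alpha)$ fraction of items loses at least $\alpha n \cdot \Theta(\epsilon/n) = \Theta(\alpha \epsilon)$ revenue, so to be $\epsilon$-optimal the algorithm must correctly identify $k_i$ for a constant fraction of items.

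Third I would prove the per-item identification lower bound of $\Omega(n/\epsilon^3)$. Because the support of $H^{(k_i)}$ is contained in the grid, any pricing query to item $i$ can be reduced without loss of generality to a query at one of the $K+2$ threshold values $\tfrac12 + j\epsilon$; the Bernoulli parameter $q_j^{(k_i)}$ of the response equals $\Pr_H[v \ge \tfrac12+j\epsilon] = \Theta(1/n)$ for $j \ne k_i+1$, and is inflated by exactly $m_{k_i} = \Theta(\epsilon/n)$ when $j = k_i+1$. This is the classical ``find the odd coin among $K$'' problem with base bias $q = \Theta(1/n)$ and shift $\Delta = \Theta(\epsilon/n)$. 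For any fixed pair $a\ne b$, the KL divergence of the response joint law between $k_i = a$ and $k_i = b$ after queries distributed as $(N_1^{(i)},\ldots,N_K^{(i)})$ across the $K$ thresholds is at most $(N_a^{(i)} + N_b^{(i)})\cdot O(\Delta^2/q) = (N_a^{(i)}+N_b^{(i)})\cdot O(\epsilon^2/n)$. Plugging this bound into Fano's inequality with the uniform prior over $k_i$ yields mutual information $I(k_i;\text{transcript}) = O(N^{(i)}\epsilon^2/(Kn))$, so identifying $k_i$ with constant probability forces $N^{(i)} \ge \Omega(Kn/\epsilon^2) = \tilde\Omega(n/\epsilon^3)$.

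Finally I would combine the pieces. The $k_i$'s are independent, and a query $(i,p)$ only produces a sample of $D_i$, so the transcript projected to item $i$ is the sole source of information about $k_i$. Hence if $N_i$ is the expected number of queries made at item $i$ and the algorithm identifies $k_i$ correctly with probability $\ge c$ for items in a set $S$, then $N_i = \Omega(n/\epsilon^3)$ for every $i \in S$. Because $\epsilon$-optimality forces $|S| = \Omega(n)$ in expectation (by paragraph two), the total expected query count is $\sum_i N_i \ge |S|\cdot \Omega(n/\epsilon^3) = \Omega(n^2/\epsilon^3)$, giving Theorem 1.3. The main obstacle is making the information-theoretic step rigorous in the adaptive setting: one must verify that the Fano/KL bound above still applies when thresholds $t$ and the item $i$ are chosen adaptively, which is standard via the chain rule for KL on the interaction tree but requires careful bookkeeping of the per-item query budget $N^{(i)}$ as a stopping-time random variable and a Wald-type identity to translate the bound on the budget into a bound on its expectation.
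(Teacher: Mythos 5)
Your construction is exactly the paper's hard instance (the same equal-revenue base distribution $H$ normalized to $\frac{1}{2n}$ on the grid $\{\frac12+k\epsilon\}$, the same uniformly random upward shift of one point mass per item), and your revenue-gap step matches the paper's Lemma on the $\Omega(\epsilon/n)$ per-item loss for a misidentified $k_i$, so the overall approach is essentially the same. The only real divergence is the indistinguishability step: you run a single Fano/KL ``odd coin among $K$'' argument per item to get $N^{(i)}=\Omega(Kn/\epsilon^2)=\Omega(n/\epsilon^3)$, whereas the paper proves a two-point Hellinger bound of $\Omega(n/\epsilon^2)$ per individual threshold and then pigeonholes twice (over items, and over under-queried thresholds within an item) to conclude a constant per-item error probability; both routes are sound, and both leave the same adaptivity/stopping-time bookkeeping that you flag at a comparable level of informality to the paper's own treatment.
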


\paragraph{The hard instance.} To define the hard instance for \Cref{thm:query-lower}, we first introduce a base instance, and further convert it into a hard instance by adding some randomization. The base instance is a product distribution $\bH$, which includes $n$ i.i.d. distributions $H$. The single-dimensional distribution $H$ has support $0, \frac{1}{2}, \frac{1}{2} + \epsilon, \frac{1}{2} + 2\epsilon, \cdots, \frac{3}{4} - \epsilon, \frac{3}{4}$, such that for every $k = 0, 1, \cdots, \frac{1}{4\epsilon}$, we have
\begin{align}
\label{eq:defh}
    \mathop{\pr}\limits_{X \sim H}\left[X \geq \frac{1}{2} + k \cdot \epsilon\right] \cdot \left(\frac{1}{2} + k \cdot \epsilon\right) ~=~ \frac{1}{2n}.
\end{align}
Take $k = 0$ in the above equation, we have $\pr_{X \sim H}[X \geq 0.5] = 1/n$. Since $0.5$ is the lowest non-zero value in the support of $H$, we have  $\pr_{X \sim H}[X = 0] = 1 - 1/n$, which further implies that any non-zero price $p_i$ together with $q_i = \pr_{X \sim H}[X \geq p_i]$ is a feasible solution for the ex-ante program \eqref{prog:exante}, as the sum of $q_i$ can't be more than $1$. Furthermore, note that the above definition of $H$ provides an equal-revenue distribution. Therefore, playing $p_i = \frac{1}{2} + k \cdot \epsilon$ for any $k = 0, 1, \cdots, \frac{1}{4\epsilon}$ together with $q_i = \pr_{X \sim H}[X \geq p_i]$ gives an optimal pricing for \eqref{prog:exante}.

Next, we build the hard instance with product distribution $\tbH$. $\tbH$ contains $n$ random distributions $\tH_1, \cdots, \tH_n$, where each $\tH_i$ is defined via the following random process:
\begin{itemize}
    \item Initiate $\tH_i = H$.
    \item Choose $k_i \in \{0, 1, \cdots, \frac{1}{4\epsilon} - 1\}$ uniformly at random.
    \item Move the point mass of $\tH_i$ at value $\frac{1}{2} + k_i \cdot \epsilon$ to value  $\frac{1}{2} + (k_i + 1) \cdot \epsilon$.
\end{itemize}

Note that distribution $\tH_i$ is still with support $\{0, \frac{1}{2}, \frac{1}{2} + \epsilon, \cdots, \frac{3}{4} - \epsilon, \frac{3}{4}\}$. Since the optimal solution of \eqref{prog:exante} with $\tbH$ being the underlying product distribution must be with a positive price, in this section from this point onward we will only discuss the solution of \eqref{prog:exante} such that $p_i \in \{\frac{1}{2} + k\epsilon: k = 0, 1, \cdots, \frac{1}{4\epsilon}\}$. Furthermore, note that setting $q_i = \pr_{v_i \sim \tH_i}[v_i\geq p_i]$ under the above assumption is a feasible solution of \eqref{prog:exante}, we assume $q_i = \pr_{v_i \sim \tH_i}[v_i\geq p_i]$ and further aim at finding
\[
\text{\eqref{prog:exante}} ~=~ \max_{p_1, \cdots, p_n} \sum_{i =1}^n p_i \cdot \pr_{v_i \sim \tH_i}[v_i\geq p_i]
\]
under the constraint that $p_i \in \{\frac{1}{2} + k\epsilon: k = 0, 1, \cdots, \frac{1}{4\epsilon}\}$. Then, the following lemma suggests that the hard instance $\tbH$ has a unique optimal solution for \eqref{prog:exante}, and any other solution that misses the optimal pricing would incur a significant loss.

\begin{Lemma}
    \label{lma:query-low-loss}
    The optimal solution of \eqref{prog:exante} with product distribution $\tbH$ satisfies $p^*_i = \frac{1}{2} + (k_i + 1) \cdot \epsilon$. Any other solution $\{p'_i\}$ is at least
    $\sum_{i \in [n]} \frac{0.25\epsilon}{n} \cdot \one[p'_i \neq p^*_i]$ away from the optimal solution.
\end{Lemma}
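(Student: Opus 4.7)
The argument reduces to a purely per-item calculation, and my approach is to (i) make the ex-ante LP constraint disappear so the revenue decomposes additively across items, and (ii) explicitly compute the revenue from each feasible price under $\tilde H_i$, identifying the unique maximizer and the gap.

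\textbf{Step 1: extract a clean closed form for $H$.} First, I would use the equal-revenue defining identity \eqref{eq:defh} to write
\[
q_k \;:=\; \Pr_{X\sim H}\!\left[X\geq \tfrac{1}{2}+k\epsilon\right] \;=\; \frac{1}{n(1+2k\epsilon)}
\quad\text{for }k=0,1,\dots,\tfrac{1}{4\epsilon},
\]
so the point mass of $H$ at $\tfrac{1}{2}+k\epsilon$ equals $q_k-q_{k+1}$ (with $q_{1/(4\epsilon)+1}:=0$). Since $k_i\leq \tfrac{1}{4\epsilon}-1$, we have $\tfrac{1}{2}+k_i\epsilon\leq \tfrac{3}{4}-\epsilon$, a fact I will need at the end.

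\textbf{Step 2: per-item revenue under $\tilde H_i$.} For any candidate $p'_i=\tfrac{1}{2}+j\epsilon$ with $j\in\{0,\dots,\tfrac{1}{4\epsilon}\}$, I would compute $\tilde q_{i,j}:=\Pr_{v_i\sim \tilde H_i}[v_i\geq p'_i]$ by tracking which of the shifted masses lies above $p'_i$. The shift moves the mass $q_{k_i}-q_{k_i+1}$ from $\tfrac{1}{2}+k_i\epsilon$ up to $\tfrac{1}{2}+(k_i+1)\epsilon$, so
\[
\tilde q_{i,j} \;=\; \begin{cases} q_j & \text{if } j\leq k_i,\\[2pt] q_{k_i} & \text{if } j=k_i+1,\\[2pt] q_j & \text{if } j\geq k_i+2. \end{cases}
\]
Multiplying by $p'_i=\tfrac{1}{2}+j\epsilon$ and using the equal-revenue identity on the base distribution,
\[
p'_i\cdot \tilde q_{i,j} \;=\; \begin{cases} \tfrac{1}{2n} & \text{if } j\neq k_i+1,\\[2pt] \dfrac{\tfrac{1}{2}+(k_i+1)\epsilon}{2n\bigl(\tfrac{1}{2}+k_i\epsilon\bigr)} \;=\; \tfrac{1}{2n}+\dfrac{\epsilon}{2n\bigl(\tfrac{1}{2}+k_i\epsilon\bigr)} & \text{if } j=k_i+1.\end{cases}
\]
So the unique revenue-maximizing price for item $i$ is $p^*_i=\tfrac{1}{2}+(k_i+1)\epsilon$, and the per-item loss of any $p'_i\neq p^*_i$ equals $\frac{\epsilon}{2n(\frac{1}{2}+k_i\epsilon)}\geq \frac{\epsilon}{2n\cdot(3/4)}=\frac{2\epsilon}{3n}\geq \frac{0.25\epsilon}{n}$ by the bound from Step 1.

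\textbf{Step 3: collapse the ex-ante LP.} The remaining subtlety is that the ex-ante revenue is not automatically $\sum_i p_i\Pr[v_i\geq p_i]$: in \eqref{prog:exante} one must also enforce $\sum_i q_i\leq 1$. I would verify this constraint is slack for every price vector under consideration: in Step 2 every $\tilde q_{i,j}$ equals some $q_{k'}\leq q_0=\tfrac{1}{n}$, so $\sum_i \tilde q_{i,j_i}\leq 1$ for any choice of $j_i$'s. Hence the LP optimum is achieved at $q_i=\tilde q_{i,j_i}$, and the ex-ante revenue decomposes as $\sum_i p_i\cdot \tilde q_{i,j_i}$. Summing the per-item gap from Step 2 over all $i$ with $p'_i\neq p^*_i$ then yields the claim.

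\textbf{Where the difficulty lies.} This proof is mostly bookkeeping once the right decomposition is identified; the only genuinely delicate point is the ex-ante constraint, where one has to notice that equal-revenue distributions automatically give $q_k\leq 1/n$, so $n$ items can simultaneously be sold with probability $q_i$ each without violating $\sum_i q_i\leq 1$. After that, the whole argument is a one-line computation per item using the equal-revenue identity.
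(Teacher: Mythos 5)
Your proof is correct and follows essentially the same route as the paper's: decompose the ex-ante objective per item (after noting the constraint $\sum_i q_i\leq 1$ is slack since each $q_i\leq 1/n$, which the paper establishes in the setup preceding the lemma), and use the equal-revenue identity to show the shifted threshold $\frac{1}{2}+(k_i+1)\epsilon$ strictly beats every other discretized price by $\frac{\epsilon}{2n(\frac12+k_i\epsilon)}\geq\frac{0.25\epsilon}{n}$. Your explicit computation of $\tilde q_{i,j}$ is in fact slightly cleaner than the paper's qualitative argument (which asserts the revenue at $\frac{1}{2}+k_i\epsilon$ drops, whereas it actually stays at $\frac{1}{2n}$), but the conclusion and the essential mechanism are identical.
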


\begin{proof}
    Note that \eqref{eq:defh} implies that for every $k = 0, 1, \cdots, \frac{1}{4\epsilon} - 1$, we have
    \begin{align*}
        \mathop{\pr}\limits_{X \sim H}\left[X \geq \frac{1}{2} + k \cdot \epsilon\right] \cdot \left(\frac{1}{2} + k \cdot \epsilon\right) ~=~ \mathop{\pr}\limits_{X \sim H}\left[X \geq \frac{1}{2} + (k + 1) \cdot \epsilon\right]\cdot \left(\frac{1}{2} + (k+ 1) \cdot \epsilon\right) 
    \end{align*}
    Rearranging the above equality by splitting the probability of $X \geq \frac{1}{2} + k \cdot \epsilon$ into $X \geq \frac{1}{2} + (k+1) \cdot \epsilon$ and $X = \frac{1}{2} + k \cdot \epsilon$, we have
    \begin{align}
    \label{eq:query-lower-bound}
        \mathop{\pr}\limits_{X \sim H}\left[X = \frac{1}{2} + k \cdot \epsilon\right] ~=~ \mathop{\pr}\limits_{X \sim H}\left[X \geq \frac{1}{2} + (k + 1) \cdot \epsilon\right] \cdot \frac{\epsilon}{0.5 + k \epsilon} \in [\frac{0.5\epsilon}{n}, \frac{2\epsilon}{n}]
    \end{align}
    where the upper bound follows from the fact that the non-zero mass of $H$ is $\frac{1}{n}$, and the term $\epsilon/(0.5 + k\epsilon)$ is at most $2\epsilon$. For the lower bound, it follows from the fact that the probability of $X = 0.75$ is $2/(3n)$, which is given by \Cref{eq:defh}, and the term $\epsilon/(0.5 + k\epsilon)$ is at least $\epsilon$.

    Now, we consider the optimal price $p^*_i$ for program \eqref{prog:exante} with $\tH_i$ being the underlying distribution. Comparing to the base distribution $H$, $\tH_i$ shifts the point mass at value $\frac{1}{2} + k_i \cdot \epsilon$ to $\frac{1}{2} + (k_i + 1) \cdot \epsilon$. Note that the revenue for $p \in \{\frac{1}{2}, \frac{1}{2} + \epsilon, \cdots, \frac{3}{4}\} \setminus \{\frac{1}{2} + k_i \cdot \epsilon, \frac{1}{2} +(k_i + 1) \cdot \epsilon\}$ remains unchanged, the revenue for $p = \frac{1}{2} + k_i \cdot \epsilon$ drops, while the revenue for $p = \frac{1}{2} +(k_i + 1) \cdot \epsilon$ increases. Therefore, $p^*_i = \frac{1}{2} +(k_i + 1) \cdot \epsilon$ is the unique optimal price for $\tH_i$.

    Next, we show that if we switch to $p'_i \neq p^*_i$, we lose at least $TBD$ in the objective of \eqref{prog:exante}. Recall that $\pr_{X \sim H}[X = 0.5 + k\epsilon] \geq 0.5 \epsilon/n$. Therefore, for $\tH_i$, the revenue of playing $0.5 + (k+1)\epsilon$ increases by at least
    \[
     \frac{0.5\epsilon}{n} \cdot \left(\frac{1}{2} + (k+1)\epsilon\right) ~\geq~ \frac{0.25\epsilon}{n},
    \]
    i.e., the revenue of playing $0.5 + (k+1)\epsilon$ is at least $0.25\epsilon/n$ larger than all other thresholds. Summing the above inequality for all $i \in [n]$ proves \Cref{lma:query-low-loss}.
\end{proof}

Next, we prove the following \Cref{lma:query-low-distinguish}, which suggests that a learning algorithm must take sufficiently many queries to determine whether the mass of value $\frac{1}{2} + k\epsilon$ is moved to $\frac{1}{2} + (k + 1)\epsilon$ for  distribution $\tH_i$:

\begin{Lemma}
    \label{lma:query-low-distinguish}
    For distribution $\tH_i$, when parameter $k_i$ is unknown, any learning algorithm that queries distribution $\tH_i$ and threshold $\frac{1}{2} + k \cdot \epsilon$ with $k \in \{0, 1, \cdots, \frac{1}{4\epsilon} - 1\}$  for at most $N = C_1 \cdot \frac{n}{\epsilon^2}$ rounds can't decide whether $k_i = k$ with probability at least $\frac{1}{2}$, when $\epsilon \leq 0.1$, $n \geq 2$, and $C_1$ is sufficiently small.
\end{Lemma}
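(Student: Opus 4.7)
The plan is a two-point hypothesis-testing argument using Hellinger tensorization and Le Cam's two-point method. The first step is to observe that a query at threshold $\tfrac{1}{2} + k\epsilon$ is essentially blind to the value of $k_i$ except for a single ``crossing'' configuration: the operation of moving a point mass from $\tfrac{1}{2} + k_i\epsilon$ to $\tfrac{1}{2} + (k_i+1)\epsilon$ changes the probability of $\{v_i \geq \tfrac{1}{2} + k\epsilon\}$ exactly when $k_i = k - 1$. Hence the query outcome is $\mathrm{Bern}(p_0)$ for $k_i \neq k - 1$, where $p_0 := \pr_{v \sim H}[v \geq \tfrac{1}{2} + k\epsilon] = \Theta(1/n)$ by \eqref{eq:defh}, and $\mathrm{Bern}(p_1)$ for $k_i = k - 1$ with $\Delta := p_1 - p_0 = \pr_{v \sim H}[v = \tfrac{1}{2} + (k-1)\epsilon] = \Theta(\epsilon/n)$ by \eqref{eq:query-lower-bound}. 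An analogous threshold $\tfrac{1}{2} + (k+1)\epsilon$ discriminates $k_i = k$ from the other values; either way it suffices to bound how well two Bernoullis with these parameters can be distinguished from $N$ i.i.d. samples.

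Next, I would bound the per-query squared Hellinger distance. A short Taylor expansion of $d_H^2(\mathrm{Bern}(p_0), \mathrm{Bern}(p_1)) = 1 - \sqrt{p_0 p_1} - \sqrt{(1-p_0)(1-p_1)}$, valid because $p_0, p_1 = O(1/n)$, yields $d_H^2 = O(\Delta^2 / p_0) = O(\epsilon^2/n)$. By the tensorization identity $1 - d_H^2(P^{\otimes N}, Q^{\otimes N}) = (1 - d_H^2(P, Q))^N$, the $N$-query Hellinger distance satisfies $d_H^2 \leq O(N \epsilon^2 / n) = O(C_1)$ when $N = C_1 n / \epsilon^2$, and hence the total variation distance satisfies $d_{TV} \leq \sqrt{2}\, d_H = O(\sqrt{C_1})$.

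Finally, I would translate the small TV distance into a decision-theoretic lower bound. By Le Cam's two-point method, under a uniform prior over the two hypotheses, the probability of a correct decision is at most $\tfrac{1}{2} + \tfrac{1}{2}\, d_{TV} = \tfrac{1}{2} + O(\sqrt{C_1})$, and so shrinking $C_1$ pushes the best achievable success probability arbitrarily close to $\tfrac{1}{2}$, matching the lemma's statement.

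The main obstacle I anticipate is obtaining the sharp $1/p_0$ improvement in the per-query Hellinger bound $d_H^2 = O(\Delta^2 / p_0)$ rather than the naive bound $O(\Delta^2)$; this improvement is exactly what produces the tight $n / \epsilon^2$ dependence and crucially exploits that the equal-revenue construction of $H$ keeps $p_0 = \Theta(1/n)$ at every allowed threshold, so the per-query Bernoulli noise is small relative to the signal $\Delta$. A minor bookkeeping subtlety is the off-by-one between ``threshold $\tfrac{1}{2} + k\epsilon$'' and ``hypothesis $k_i = k$'' noted above, which can be resolved either by shifting the threshold by one or by running the identical argument against the adjacent hypothesis and re-indexing.
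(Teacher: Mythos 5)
Your proposal is correct and follows essentially the same route as the paper: reduce to distinguishing two Bernoullis $\mathrm{Ber}(\xi_1)$ and $\mathrm{Ber}(\xi_2)$ with $\xi_1=\Theta(1/n)$ and $|\xi_1-\xi_2|=O(\epsilon/n)$, bound the per-query squared Hellinger distance by $O(\Delta^2/\xi_1)=O(\epsilon^2/n)$ (the paper does this via the explicit estimate $(\sqrt{\xi_1}-\sqrt{\xi_1-2\epsilon/n})^2\le \frac{2}{n}\cdot 4\epsilon^2$ using $\xi_1\ge \frac{2}{3n}$), tensorize over $N$ queries, pass to total variation, and apply the two-point testing bound. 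The off-by-one you flag is only a matter of convention ($\pr[v\ge t]$ versus the paper's CDF $\pr[v\le t]$) and does not affect the argument.
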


\begin{proof}
    Without loss of generality, we only consider learning algorithms that query $\frac{1}{2} + k \epsilon$ for $k = 0, 1, \cdots, \frac{1}{4\epsilon}$. Note that when $k_i = k$, compared to the base distribution $H$, all the CDF values remain unchanged, except the value of $F_{\tH_i}(0.5 + k \epsilon)$ is at most $2\epsilon/n$ smaller than the value of $F_{H}(0.5 + k \epsilon)$, where the value $2\epsilon/n$ follows from \eqref{eq:query-lower-bound}. Next, we argue that $N= C_1 \cdot \frac{n}{\epsilon^2}$ samples from Bernoulli distribution $Ber(1 - F_{\tH_i}(0.5 + k \epsilon))$ has a sufficiently small total variation distance compared to $N$ samples from Bernoulli distribution $Ber(1 - F_{H}(0.5 + k \epsilon))$ when $C_1$ is sufficiently small, and therefore no learning algorithm can decide whether $k_i = k$.

    For simplicity of the notation, define $\xi_1 = 1 - F_{H}(0.5 + k \epsilon)$, and $\xi_2 = 1 - F_{\tH_i}(0.5 + k \epsilon)$. Recall that we have $\pr_{X \sim H}[X \geq 0.5] = 1/n$ and $\pr_{X \sim H}[X = 0.75] = 2/(3n)$, so we have $\xi_1 \in [\frac{2}{3n}, \frac{1}{n}]$, and $\xi_2 \geq \xi_1 - 2\epsilon/n$. Then, for a single sample, we have
    \begin{align*}
        1 - H^2(Ber(\xi_1), Ber(\xi_2)) ~&=~ 1 - \frac{1}{2} \left(\left(\sqrt{\xi_1} - \sqrt{\xi_2}\right)^2 + \left(\sqrt{1 - \xi_1} - \sqrt{1 - \xi_2}\right)^2\right) \\
    ~&\geq~ 1 - \frac{1}{2} \left(\left(\sqrt{\xi_1} - \sqrt{\xi_1 - 2\epsilon/n}\right)^2 + \left(\sqrt{1 - \xi_1 + 2\epsilon/n} - \sqrt{1 - \xi_1}\right)^2\right) \\
    ~&\geq~ 1 - \frac{1}{2} \left(\left(\sqrt{\frac{2}{3n}} - \sqrt{\frac{2}{3n} - \frac{2\epsilon}{n}}\right)^2 + \left(\sqrt{1 - \frac{1}{n} + \frac{2\epsilon}{n}} - \sqrt{1 - \frac{1}{n}}\right)^2\right) \\
    ~&\geq~ 1 - \frac{1}{2} \left(\frac{2}{n} \cdot 4\epsilon^2 + \frac{16\epsilon^2}{n^2}\right) ~\geq~ 1 - \frac{8\epsilon^2}{n},
    \end{align*}
    where the first inequality uses that $\xi_2 \geq \xi_1 - 2\epsilon/n$, the second inequality uses that $\xi_1 \in [2/(3n), 1/n]$ together with the convexity of function $f(x) = \sqrt{x}$, and the third inequality uses that $\sqrt{1/3} - \sqrt{1/3 - \epsilon} \leq 2\epsilon$ when $\epsilon \leq 0.1$, and $\sqrt{1 - \frac{1}{n} + \frac{2\epsilon}{n}} - \sqrt{1 - \frac{1}{n}} \leq \sqrt{0.5 + 2\epsilon/n} - \sqrt{0.5} \leq 4\epsilon/n$ when $n \geq 2$. Then, the Hellinger distance between $N$ samples from $Ber(\xi_1)$ and from $Ber(\xi_2)$ is at most 
    \[
    \sqrt{1 - \left(1 - \frac{8\epsilon^2}{n}\right)^N} ~\leq~ \sqrt{N \cdot \frac{8\epsilon^2}{n}} ~=~ \sqrt{8 C_1},
    \]
    which is further bounded by $0.5$ when $C_1$ is sufficiently small.

    Therefore, when $C_1$ is sufficiently small and $k_i = k$, any algorithm  can't decide whether the queried threshold $\frac{1}{2} + k \cdot \epsilon$ is from distribution $Ber(\xi_1)$ or $Ber(\xi_2)$ with probability at least $0.5$.
\end{proof}

Finally combining \Cref{lma:query-low-loss} and \Cref{lma:query-low-distinguish} proves \Cref{thm:query-lower}:

\begin{proof}[Proof of \Cref{thm:query-lower}]
    We prove \Cref{thm:query-lower} by showing that any learning algorithm with at most $C_2 \cdot \frac{n^2}{\epsilon^3}$ queries incurs an $\Omega(\epsilon)$ loss when finding the optimal prices for program \eqref{prog:exante} with $\tbH = \tH_1 \times \cdots \times \tH_n$ being the underlying distribution.

    When $C_2$ is sufficiently small, at least $n/2$ distributions $\tH_i$ satisfy the following: at least $\frac{1}{2}$ fraction of the thresholds in $\{\frac{1}{2}, \frac{1}{2} + \epsilon, \cdots, \frac{3}{4} - \epsilon\}$ are queried for at most $C_1 \cdot \frac{n}{\epsilon^2}$ times.  This follows from a standard Pigeonhole Principle. We call those $n/2$ distributions to be ``under-queried'' distributions, and those $\frac{1}{2}$ fraction of the thresholds which are queried for at most $C_1 \cdot \frac{n}{\epsilon^2}$ times to be ``under-queried'' thresholds.
    
    For every under-queried distribution, since \Cref{lma:query-low-distinguish} guarantees that each of its under-queried thresholds is with at least $\frac{1}{2}$ probability that any learning algorithm can't decide whether $k_i$ equals to the corresponding threshold, the expected fraction of not distinguishable thresholds is at least $\frac{1}{4}$. Then, with probability at least $\frac{1}{4}$, the random parameter $k_i$ falls in the fraction of thresholds, such that no algorithm can decide whether $k_i$ equals that threshold, and the optimal strategy can only be a uniformly random guess among the $\frac{1}{4}$ fraction of thresholds, with at most $\frac{1}{2}$ chance to make a correct guess. Therefore, for every under-queried distribution, any algorithm has at least $\frac{1}{8}$ probability of making an incorrect guess of $k_i$. Then, \Cref{lma:query-low-loss} guarantees that the expected loss from incorrectly setting $p'_i \neq p^*_i = 0.5 + (k_i+1)\epsilon$ is at least $\frac{\epsilon}{4n}$. Therefore, the total expected loss is at least
    \[
    \frac{n}{2} \cdot \frac{1}{8} \cdot \frac{\epsilon}{4n} = \Omega(\epsilon). \qedhere
    \]
\end{proof}

\section{Deferred Proofs for Sample Complexity Upper Bound}
\label{sec:appendix-sample-upper}

\subsection{Proof of \Cref{clm:cdf-dtoe}}
\label{sec:dtoe}
\dtoe*

\begin{proof}
    We consider the following four cases, based on the value of $F_{D_i}(v)$:

\noindent \textbf{Case 1: $F_{D_i}(v) \leq 8 \Gamma$.} In this case, if $F_{E_i}(v) \leq F_{D_i}(v)$, there must be $|F_{E_i}(v) - F_{D_i}(v)| \leq 8 \Gamma$. If $F_{E_i}(v) \geq F_{D_i}(v)$, note that $F_{E_i}(v) \leq 8\Gamma + \sqrt{0.25 \cdot 2\Gamma} + \Gamma \leq 0.5$, which is true when $\Gamma \leq 0.01$. Then, there must be $\sqrt{F_{D_i}(v) (1 - F_{D_i}(v)) \cdot 2\Gamma} \leq \sqrt{F_{E_i}(v) (1 - F_{E_i}(v)) \cdot 2\Gamma}$, and we have
\[
|F_{E_i}(v) - F_{D_i}(v)| ~\leq~ \sqrt{F_{D_i}(v) (1 - F_{D_i}(v)) \cdot 2\Gamma} + \Gamma ~\leq~ \sqrt{F_{E_i}(v) (1 - F_{E_i}(v)) \cdot 2\Gamma} + \Gamma.
\]

\noindent \textbf{Case 2: $F_{D_i}(v) \geq 1- 8 \Gamma$.} We omit the proof, as it is identical to Case 1, after swapping $F_{D_i}(v)$ and $1 - F_{D_i}(v)$, $F_{E_i}(v)$ and $1 - F_{E_i}(v)$.

\noindent \textbf{Case 3: $8\Gamma \leq F_{D_i}(v) \leq 0.5$.} In this case, if $F_{E_i}(v) \leq F_{D_i}(v)$, there must be
\[
F_{E_i}(v) ~\geq~ F_{D_i}(v) - \sqrt{F_{D_i}(v) \cdot 2 \cdot 2\Gamma} - \Gamma ~\geq~ \frac{F_{D_i}(v)}{8},
\]
where the first inequality uses $F_{D_i}(v) \leq 0.5$, and the second inequality uses $F_{D_i}(v) \geq 8 \Gamma$.
Therefore, we have
\[
\sqrt{F_{D_i}(v) (1 - F_{D_i}(v)) \cdot 2\Gamma} ~\leq~ \sqrt{8F_{E_i}(v) \cdot (1 - F_{E_i}(v)) \cdot 2\Gamma}.
\]
Then, we have
\[
|F_{D_i}(v) - F_{E_i}(v)| ~\leq~ \sqrt{F_{D_i}(v) (1 - F_{D_i}(v)) \cdot 2\Gamma} + \Gamma ~\leq~ \sqrt{F_{E_i}(v) \cdot (1 - F_{E_i}(v)) \cdot 16\Gamma}  + 8 \Gamma.
\]

On the other hand, when $F_{E_i}(v) \geq F_{D_i}(v)$, either we have $F_{E_i}(v) \leq 0.5$, which implies 
\[\sqrt{F_{D_i}(v) (1 - F_{D_i}(v)) \cdot 2\Gamma} \leq \sqrt{F_{E_i}(v) (1 - F_{E_i}(v)) \cdot 2\Gamma},\] 
or we have $F_{E_i}(v) > 0.5$, implying that $F_{E_i}(v) \leq 0.5 + \sqrt{0.25 \cdot 2\Gamma} + \Gamma \leq 0.75$, where the last inequality holds when $\Gamma \leq 0.01$, and therefore $\sqrt{F_{D_i}(v) (1 - F_{D_i}(v)) \cdot 2\Gamma} \leq \sqrt{F_{E_i}(v) (1 - F_{E_i}(v)) \cdot 4\Gamma}$. Then, we have
\[
|F_{D_i}(v) - F_{E_i}(v)| ~\leq~ \sqrt{F_{D_i}(v) (1 - F_{D_i}(v)) \cdot 2\Gamma} + \Gamma ~\leq~ \sqrt{F_{E_i}(v) \cdot (1 - F_{E_i}(v)) \cdot 4\Gamma}  + \Gamma.
\]

\noindent \textbf{Case 4: $0.5 \leq F_{D_i}(v) \leq 1 - 8\Gamma$.} We omit the proof, as it is identical to Case 3, after swapping $F_{D_i}(v)$ and $1 - F_{D_i}(v)$, $F_{E_i}(v)$ and $1 - F_{E_i}(v)$.

To conclude, inequality $|F_{D_i}(v) - F_{E_i}(v)| ~\leq~ \sqrt{F_{E_i}(v) (1 - F_{E_i}(v)) \cdot 16 \Gamma} ~+~ 8 \Gamma$ holds in all four cases, which proves \Cref{clm:cdf-dtoe}.
\end{proof}

\subsection{Proof of \Cref{clm:cdf-bound-bbd}}
\label{sec:clmcdfboundbbd}
\clmcdfboundbbd*

\begin{proof}
    The first inequality simply holds from the definition of $\bbD$. For the second inequality, combining the condition $|F_{D_i}(v) - F_{E_i}(v)| ~\leq~ \sqrt{F_{D_i}(v) (1 - F_{D_i}(v)) \cdot 2\Gamma} ~+~ \Gamma$ and \eqref{eq:dhat-def} gives
    \begin{align}
    \label{eq:cdf-bound-bbd0}
        F_{\breve D_i}(v) - F_{E_i}(v) ~\leq~ 2\sqrt{F_{D_i}(v)(1 - F_{D_i}(v)) \cdot 2\Gamma} + 2\Gamma ~\leq~2\sqrt{(1 - F_{D_i}(v)) \cdot 2\Gamma} + 2\Gamma
    \end{align}
    To further rewrite the difference as a function of $F_{\breve D_i}(v)$, note that if $1 - F_{D_i}(v) \leq 12\Gamma$, there must be 
    \begin{align}
    \label{eq:cdf-bound-bbd1}
        F_{\breve D_i}(v) - F_{E_i}(v) ~\leq~ 2\sqrt{24} \Gamma + 2\Gamma ~\leq~ 12\Gamma.
    \end{align}
    Otherwise, note that 
    \[
    \sqrt{(1 - F_{D_i}(v)) \cdot 2\Gamma} + \Gamma ~\leq~ \sqrt{\frac{(1 - F_{D_i}(v))^2}{6}} + \frac{1 - F_{D_i}(v)}{12} ~\leq~ \frac{1 - F_{D_i}(v)}{2}.
    \]
    Therefore, we have 
    \begin{align*}
        1 - F_{\breve D_i}(v) ~&\geq~ 1 - F_{D_i}(v) - \left(\sqrt{(1 - F_{D_i}(v)) \cdot 2\Gamma} + \Gamma\right) ~\geq~ \frac{1 - F_{D_i}(v)}{2}
    \end{align*}
    Applying the inequality to \eqref{eq:cdf-bound-bbd0} gives
    \begin{align}
        \label{eq:cdf-bound-bbd2}
        F_{\breve D_i}(v) - F_{E_i}(v) ~\leq~ 2\sqrt{2(1 - F_{D_i}(v)) \cdot 2\Gamma} + 2\Gamma ~=~ 4\sqrt{(1 - F_{D_i}(v)) \cdot \Gamma} + 2\Gamma.
    \end{align}
    Finally, combining \Cref{eq:cdf-bound-bbd1} and \Cref{eq:cdf-bound-bbd2} proves \Cref{clm:cdf-bound-bbd}.
\end{proof}

\subsection{Proof of \Cref{clm:prod-small} and \Cref{clm:diff-small}}
\label{sec:clmprodsmalldiffsmall}

\clmprodsmall*

\begin{proof}
    \Cref{clm:prod-small} follows from the fact that
    \[
    \prod_{j \in [m]} (1 - a_j) ~\leq~ \prod_{j \in [m]} e^{-a_j} ~=~ \exp\left(-\sum_{j \in [m]} a_j\right),
    \]
    where the first inequality follows from the fact that $1 - x \leq e^{-x}$.
\end{proof}

\clmdiffsmall*

\begin{proof}
    We prove this via introduction. The base case $m = 1$ holds from $b_1 - (b_1 - c_1) = c_1$. Assume \Cref{clm:diff-small} holds for $m = k$. For $m = k+1$, we have
    \begin{align*}
        \prod_{j \in [k+1]} b_j - \prod_{j \in [k+1]} (b_j-c_j) ~&=~ c_{k+1} \cdot \prod_{j \in [k]} (b_j-c_j) + b_{k+1} \cdot \left(\prod_{j \in [k]} b_j - \prod_{j \in [k]} (b_j-c_j) \right)  \\
        ~&\leq~ c_{k+1} \cdot 1 1 \cdot \sum_{j \in [k]} c_j  ~=~ \sum_{j \in [k+1]} c_j.
    \end{align*}
    Therefore, \Cref{clm:diff-small} holds for $m = k + 1$, implying that \Cref{clm:diff-small} holds via induction.
\end{proof}

\section{Deferred Proofs for Query Complexity Upper Bound}
\label{sec:appendix-query-upper}

\subsection{Proof of \Cref{lma:Nisan}}
\label{sec:nisan}
\lmaNisan*

\begin{proof}
    Without loss of generality, we assume there is an item in the instance with value $0$ and with price $0$ in both $\bD$ and $\tbD$. Then, we assume the buyer always buys an item for any realization of the values.
    
    \paragraph{First Argument.} For the first argument, we assume without loss of generality that $p_1 \geq p_2 \geq \cdots \geq p_n$. Consider to set $\tilde p_i = (1 - \epsilon) \cdot p_i$. Then, for any realization of values $(v_1, \cdots, v_n) \sim \bD$, let $(\tilde v_1, \cdots, \tilde v_n)$ be the corresponding realization from $\tbD$, i.e., $\tilde v_i$ is the largest multiple of $\epsilon^2$ that satisfies $\tilde v_i \leq v_i$. Assume $i$ is the item that wins the auction for values $(v_1, \cdots, v_n)$ with price vector $\bp$, i.e., $v_i - p_i$ maximizes buyer's utility. Then, when switching to  $(\tilde v_1, \cdots, \tilde v_n)$ with price vector $\tilde \bp$, note that for $j > i$, we have
    \[
    \tilde v_j - \tilde p_j ~\leq~ v_j - (1 - \epsilon) \cdot p_j  ~\leq~ v_i - p_i + \epsilon \cdot p_j ~<~ \tilde v_i - \tilde p_i + \epsilon \cdot (p_j + \epsilon - p_i),
    \]
    where the last inequality uses the fact that $\tilde v_i > v_i - \epsilon^2$, and $\tilde p_i = (1 - \epsilon) \cdot p_i$. 
    
    The above inequality suggests that if the buyer switches the preference from buying $i$ to buying $j$ with a lower price, there must be $p_j \geq p_i - \epsilon$. Then, the collected revenue is at least $(1 - \epsilon) \cdot (p_i - \epsilon) ~\geq~ p_i - \epsilon$, i.e., for any realization of values, switching to $\tbD$ and $\tilde \bp$ only loses $\epsilon$ in the revenue, which implies $\rev_{\tbD} (\tilde \bp) ~\geq~ \rev_{\bD}(\bp) - \epsilon$.

    \paragraph{Second Argument.} Similarly, we assume $\tilde p_1 \geq \tilde p_2 \geq \cdots \geq \tilde p_n$, and consider to set $\bar p_i = (1 - \epsilon) \cdot \tilde p_i$. Let $(v_1, \cdots, v_n) \sim \bD$ be the realization of values from $\bD$, and $(\tilde v_1, \cdots, \tilde v_n)$ be the corresponding realization from $\tbD$. Assume $i$ is the item that wins the auction with values $(\tilde v_1, \cdots, \tilde v_n)$ and price vector $\tilde \bp$. When switching from $\tbD$ to $\bD$ from price vector $\tilde \bp$ to $\bar \bp$, note that for $j > i$, we have
    \[
    v_j - \bar p_j ~<~ \tilde v_j + \epsilon^2 - (1 - \epsilon) \cdot \tilde p_j ~\leq~ \tilde v_i - \tilde p_i + \epsilon^2 + \epsilon \cdot \tilde p_j ~\leq~ v_i - \bar p_i + \epsilon \cdot (\tilde p_j + \epsilon - \tilde p_i).
    \]
    Therefore, if the buyer switches the preference from buying $i$ to buying $j$ with a lower price, there must be $\tilde p_j \geq \tilde p_i - \epsilon$, i.e., the collected revenue is at least $(1 - \epsilon) \cdot (\tilde p_i - \epsilon) \geq p_i - \epsilon$. Then, for any realization of values, switching to $\bD$ and $\bar \bp$ only loses $\epsilon$ in the revenue, which implies $\rev_{\bD} (\bar \bp) ~\geq~ \rev_{\tbD}(\tilde \bp) - \epsilon$.
\end{proof}

\subsection{Proof of \Cref{lma:rev-close-query}}
\label{sec:rev-close-query}

In this section, we prove \Cref{lma:rev-close-query}, which is restated below for convenience:

\revclose*

Our high-level idea is similar to the proof of \Cref{lma:rev-close}. We start from defining the product distribution $\breve \bD$, such that for every $i \in [n]$ and $v \in [0, 1]$, we have
\begin{align}
\label{eq:dhat-def-query}
    F_{\breve D_i}(v) ~:=~ \max\left\{1, F_{\tD_i}(v) + \epsilon \cdot (1 - F_{\tD_i}(v)) + \frac{\epsilon}{n}\right\}
\end{align}
Then, the following lemma guarantees that  $\rev_{\breve \bD}(\bp)$ and $\rev_{\tbD}(\bp)$ are sufficiently close:

\begin{Lemma}
\label{lma:tv-d-dhat-query}
     Let $\tbD$ be a product distribution with support $[0, 1]^n$ and $\bbD$ be the product distribution defined by \eqref{eq:dhat-def-query}. Then, for any price vector $\bp \in [0, 1]^n$, we have $\rev_{\bbD}(\bp) \geq \rev_{\tbD}(\bp) - 2\epsilon$. 
\end{Lemma}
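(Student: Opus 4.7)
The plan is to interpolate between $\tbD$ and $\bbD$ via a single intermediate product distribution and absorb one $\epsilon$ of loss at each of the two steps. Let $\bD^{(1)} := D^{(1)}_1 \times \cdots \times D^{(1)}_n$, where $D^{(1)}_i$ is the mixture that outputs a sample from $\tD_i$ with probability $1-\epsilon$ and outputs $0$ with probability $\epsilon$. Then $F_{D^{(1)}_i}(v) = F_{\tD_i}(v) + \epsilon(1-F_{\tD_i}(v))$, which is exactly the unsaturated form of $F_{\breve D_i}$ minus the extra additive $\epsilon/n$ term, so $\tD_i \succeq D^{(1)}_i \succeq \breve D_i$ in the stochastic order, and the proof splits into $\rev_{\bD^{(1)}}(\bp) \geq \rev_{\tbD}(\bp) - \epsilon$ and $\rev_{\bbD}(\bp) \geq \rev_{\bD^{(1)}}(\bp) - \epsilon$.

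For the first reduction I will use an explicit coupling: draw $\tilde v_i \sim \tD_i$ together with an independent Bernoulli indicator $C_i$ with $\Pr[C_i = 1] = \epsilon$, and set $v^{(1)}_i := \tilde v_i$ when $C_i = 0$ and $v^{(1)}_i := 0$ otherwise. Across items the pairs $(\tilde v_i, C_i)$ are independent, so $v^{(1)} \sim \bD^{(1)}$. Since $v^{(1)}_i \leq \tilde v_i$, every utility shifts down weakly; writing $i^* = i^*(\tilde v)$ for the item the buyer picks under $\tbD$, whenever $C_{i^*} = 0$ the utility of $i^*$ is preserved while all other utilities can only decrease, so $i^*$ remains the buyer's choice under $v^{(1)}$ and the price $p_{i^*}$ is still collected. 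Because $C$ is independent of $\tilde v$, $\Pr[C_{i^*(\tilde v)} = 1] = \epsilon$, and bounding the revenue loss by $1$ on this event gives the first $\epsilon$-loss.

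For the second reduction I will bound the per-coordinate total variation distance. By definition $F_{\breve D_i}(v) - F_{D^{(1)}_i}(v) = \min\{\epsilon/n,\, 1 - F_{D^{(1)}_i}(v)\}$, so the signed measure $D^{(1)}_i - \breve D_i$ has its negative part (of total mass $\epsilon/n$) concentrated at the extra atom of $\breve D_i$ at $0$, and its positive part (also of total mass $\epsilon/n$) supported above the saturation threshold $v^*_i := \inf\{v : F_{D^{(1)}_i}(v) \geq 1 - \epsilon/n\}$, where $F_{\breve D_i}$ has already saturated to $1$. Since these two supports are disjoint, $d_{TV}(\breve D_i, D^{(1)}_i) = \epsilon/n$ and there is a coupling with $\Pr[v^{(1)}_i \neq \breve v_i] = \epsilon/n$. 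Coupling each coordinate independently and union-bounding, $\Pr[\exists i : v^{(1)}_i \neq \breve v_i] \leq n \cdot \epsilon/n = \epsilon$; on the complementary event the valuations match exactly and hence so do the revenues, while on the bad event the revenue difference is bounded by $1$, giving the second $\epsilon$-loss. Chaining the two reductions yields $\rev_{\bbD}(\bp) \geq \rev_{\tbD}(\bp) - 2\epsilon$.

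The only step that needs real care is the TV computation above: one must verify that the excess and deficit regions of the signed measure $D^{(1)}_i - \breve D_i$ are genuinely separated by the saturation point $v^*_i$, so that $d_{TV}$ equals the maximum CDF gap $\epsilon/n$ rather than roughly twice as much as one would get from a generic stochastic-dominance argument (or from blindly summing positive and negative variations). Once that separation is established, the remaining coupling bookkeeping in both reductions is routine.
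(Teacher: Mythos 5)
Your proposal is correct and follows essentially the same route as the paper: the intermediate distribution $D^{(1)}_i$ is exactly the paper's $\bar D_i$ with $F_{\bar D_i}(v) = (1-\epsilon)F_{\tD_i}(v) + \epsilon$, and the two $\epsilon$-losses are charged in the same way (first the $\epsilon$-mixture with the zero value, then a per-coordinate total-variation bound of $\epsilon/n$ between $\bar D_i$ and $\breve D_i$). The only cosmetic difference is that you argue the first step via an explicit coupling showing the winner is preserved when $C_{i^*}=0$, whereas the paper shows each item's winning probability drops by at most a factor $(1-\epsilon)$ via the density identity $f_{\bar D_i} = (1-\epsilon)f_{\tD_i}$; both are valid.
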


Unlike the proofs for sample complexity, we note that it is impossible to show that the total variation distance between $\tbD$ and $\bbD$ defined in \eqref{eq:dhat-def-query} is bounded by $\OTild(\epsilon)$, as the total variation distance between a single $\tD_i$ and $\breve D_i$ is at least $\Omega(\epsilon)$. Therefore, we provide a more direct proof for \Cref{lma:tv-d-dhat-query}.

\begin{proof}[Proof of \Cref{lma:tv-d-dhat-query}]
    Fix the price vector $\bp$. Without loss of generality, we assume $p_1 \geq p_2 \geq \cdots \geq p_n$.  Define product distribution $\bar \bD$, such that
    \[
    F_{\bar D_i}(v) ~:=~ F_{\tD_i}(v) + \epsilon \cdot (1 - F_{\tD_i}(v)) ~=~ (1 - \epsilon) \cdot F_{\tD_i}(v) + \epsilon.
    \]
    We first prove that $\rev_{\bar \bD}(\bp) \geq \rev(\tbD)(\bp) - \epsilon$ by showing the probability that item $i$ wins the auction with $\bar \bD$ as the value distribution is at least $1 - \epsilon$ times the probability that item $i$ wins the auction with $\tbD$ as the value distribution. 
    Recall that we define $F_D(v^-) = \pr_{X\sim D}[X < v]$ for a single-dimensional distribution $D$. We have
    \begin{align*}
        \mathop{\pr}[i \text{ wins auction with } \bar \bD] ~=&~ \int_{0}^{1-p_i} f_{\bar D_i}(\theta + p_i) \cdot \prod_{j < i} F_{\bar D_j}((p_j + \theta)^-) \cdot \prod_{j > i} F_{\bar D_j}(p_j + \theta) \\
        ~\geq&~ \int_{0}^{1-p_i} (1 - \epsilon)f_{\tD_i}(\theta + p_i) \cdot \prod_{j < i} F_{\tD_j}((p_j + \theta)^-) \cdot \prod_{j > i} F_{tD_j}(p_j + \theta) \\
        ~=&~(1 - \epsilon) \cdot \mathop{\pr}[i \text{ wins auction with } \tbD],
    \end{align*}
    where the second line uses the fact that $f_{\bar D_i}(v) = F'_{\bar D_i}(v) = (1 - \epsilon) \cdot F'_{\tD_i}(v) = f_{\tD_i}(v)$ together with the fact that $F_{\bar D_j}(v) \geq F_{\tD_j}(v)$. Therefore, we have
    \[
    \rev_{\bar \bD}(\bp) = \sum_{i \in [n]} p_i \cdot \mathop{\pr}[i \text{ wins auction with } \bar \bD] \geq\sum_{i \in [n]} p_i \cdot \mathop{\pr}[i \text{ wins auction with } \tbD] \geq \rev_{\tbD}(\bp) - \epsilon.
    \]

    Next, we prove $\rev_{\bbD}(\bp) \geq \rev_{\bar \bD}(\bp) - \epsilon$. Observe that $F_{\breve D_i}(v) = \max\{1, F_{\bar D_i}(v) + \epsilon \cdot n^{-1}\}$. Therefore, the total variation distance between distribution $\breve D_i$ and $\bar D_i$ is bounded by $\epsilon \cdot n^{-1}$, and therefore the total variation distance between product distribution $\bbD$ and $\bar \bD$ is bounded by $\epsilon$. As a corollary, we have $\rev_{\bbD}(\bp) \geq \rev_{\bar \bD}(\bp) - \epsilon$. Combining this inequality with the inequality that $\rev_{\bar \bD}(\bp) \geq \rev_{\tbD}(\bp) - \epsilon$ proves \Cref{lma:tv-d-dhat-query}.
\end{proof}

Next, we upper bound $\rev_{\bbD}(\bp) -\rev_{\bE}(\bp)$. We first give the difference bound between $F_{E_i}(v)$ and $F_{\breve D_i}(v)$ as a function of $F_{\breve D_i}(v)$:

\begin{Claim}
\label{clm:cdf-bound-bbd-query}
    For product distributions $\tbD$ and $\bE$ with support $[0, 1]^n$, assume for every $i \in [n]$ and $v \in [0, 1]$ we have $|F_{\tD_i}(v) - F_{E_i}(v)| ~\leq~ \epsilon \cdot (1 - F_{\tD_i}(v)) + \epsilon/n$. Let $\bbD$ be the product distribution defined by \eqref{eq:dhat-def-query}. Then, we have
    \[
    0 ~\leq~ F_{\breve D_i}(v) - F_{E_i}(v) ~\leq~ 4\epsilon \cdot (1 - F_{\breve D_i}(v)) + \frac{4\epsilon}{n}.
    \]
    for every $i \in [n]$ and $v \in [0, 1]$, assuming $\epsilon < 0.1$.
\end{Claim}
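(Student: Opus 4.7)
The plan is to follow the same two-step template used in the proof of \Cref{clm:cdf-bound-bbd}: first establish the trivial lower bound $F_{\breve D_i}(v) - F_{E_i}(v) \geq 0$, and then upper-bound the same difference by bounding $1 - F_{\tD_i}(v)$ in terms of $1 - F_{\breve D_i}(v)$. The lower bound is immediate: the hypothesis $F_{E_i}(v) - F_{\tD_i}(v) \leq \epsilon(1 - F_{\tD_i}(v)) + \epsilon/n$ together with the definition \eqref{eq:dhat-def-query} directly gives $F_{E_i}(v) \leq F_{\breve D_i}(v)$.

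For the upper bound, I would split into two cases based on whether $F_{\breve D_i}(v) < 1$ or $F_{\breve D_i}(v) = 1$ (recall $F_{\breve D_i}(v) = \min\{1,F_{\tD_i}(v)+\epsilon(1-F_{\tD_i}(v))+\epsilon/n\}$ in \eqref{eq:dhat-def-query}, treating the written $\max$ as a typo, since otherwise $F_{\breve D_i}$ would not be a CDF). In the first case, the defining formula holds with equality, so by the triangle inequality
\[
F_{\breve D_i}(v) - F_{E_i}(v) \leq \bigl(F_{\breve D_i}(v) - F_{\tD_i}(v)\bigr) + \bigl|F_{\tD_i}(v) - F_{E_i}(v)\bigr| \leq 2\epsilon(1 - F_{\tD_i}(v)) + \tfrac{2\epsilon}{n}.
\]
Moreover, $1 - F_{\breve D_i}(v) = (1-\epsilon)(1 - F_{\tD_i}(v)) - \epsilon/n$, so $1 - F_{\tD_i}(v) = \tfrac{1 - F_{\breve D_i}(v) + \epsilon/n}{1 - \epsilon}$. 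Plugging this back and using $\epsilon < 0.1$ to bound the $1/(1-\epsilon)$ factor by (say) $10/9$ yields the claimed bound $4\epsilon(1 - F_{\breve D_i}(v)) + 4\epsilon/n$ with plenty of slack.

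In the second case $F_{\breve D_i}(v) = 1$, the right-hand side of the claim collapses to $4\epsilon/n$, so I need an absolute bound on $1 - F_{E_i}(v)$. The equality $F_{\breve D_i}(v)=1$ forces $(1-\epsilon)(1 - F_{\tD_i}(v)) \leq \epsilon/n$, i.e. $1 - F_{\tD_i}(v) \leq \tfrac{\epsilon}{(1-\epsilon)n}$. Combining with the hypothesis,
\[
F_{\breve D_i}(v) - F_{E_i}(v) = 1 - F_{E_i}(v) \leq (1+\epsilon)(1 - F_{\tD_i}(v)) + \tfrac{\epsilon}{n} \leq \tfrac{(1+\epsilon)\epsilon}{(1-\epsilon)n} + \tfrac{\epsilon}{n},
\]
which is at most $4\epsilon/n$ for $\epsilon < 0.1$.

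I do not anticipate any real obstacle here; the proof is essentially a bookkeeping exercise mirroring \Cref{clm:cdf-bound-bbd}, with the only subtlety being that the linear conversion from $1 - F_{\tD_i}(v)$ to $1 - F_{\breve D_i}(v)$ degenerates at the saturation point $F_{\breve D_i}(v) = 1$, which forces the two-case split above. The constant $4$ in the statement is loose enough that chasing the exact factors in each case is straightforward once $\epsilon < 0.1$ is used.
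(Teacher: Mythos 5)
Your proof is correct and follows essentially the same route as the paper's: both first use the triangle inequality to get $F_{\breve D_i}(v) - F_{E_i}(v) \leq 2\epsilon\,(1-F_{\tD_i}(v)) + 2\epsilon/n$ and then convert $1-F_{\tD_i}(v)$ into $1-F_{\breve D_i}(v)$, differing only in the case split (you split on whether the definition saturates at $1$, the paper on whether $1-F_{\tD_i}(v) \leq 1/n$), and you are also right that the $\max$ in \eqref{eq:dhat-def-query} should be read as a $\min$.
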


\begin{proof}
    The first inequality simply holds from the definition of $\bbD$. For the second inequality, combining the condition $|F_{\tD_i}(v) - F_{E_i}(v)| ~\leq~ \epsilon \cdot (1 - F_{\tD_i}(v)) + \frac{\epsilon}{n}$ and \eqref{eq:dhat-def-query} gives
    \begin{align}
    \label{eq:cdf-bound-bbd-query-mid}
        F_{\breve D_i}(v) - F_{E_i}(v) ~\leq~ 2\epsilon \cdot (1 - F_{\tD_i}(v)) + \frac{2\epsilon}{n}.
    \end{align}
    Note that if $1 - F_{\tD_i}(v) \leq \frac{1}{n}$, We immediately have $F_{\breve D_i}(v) - F_{E_i}(v) ~\leq~ 4\epsilon/n$. Otherwise, there must be
    \[
    1 - F_{\breve D_i}(v) ~\geq~ (1 - \epsilon) \cdot (1 - F_{\tD_i}(v)) - \frac{\epsilon}{n} ~\geq~ 0.9 \cdot (1 - F_{\tD_i}(v)) - 0.1 \cdot (1 - F_{\tD_i}(v)) ~\geq~ 0.5 \cdot (1 - F_{\tD_i}(v)),
    \]
    where the second inequality uses the assumption that $\epsilon < 0.1$. Applying the above inequality to \Cref{eq:cdf-bound-bbd-query-mid}, we have
    \[
    F_{\breve D_i}(v) - F_{E_i}(v) ~\leq~ 4\epsilon \cdot (1 - F_{\breve D_i}(v)) + \frac{2\epsilon}{n}.
    \]
    Combining two cases, we get $F_{\breve D_i}(v) - F_{E_i}(v) ~\leq~ 4\epsilon \cdot (1 - F_{\breve D_i}(v)) + 4\epsilon/n$.
\end{proof}

With \Cref{clm:cdf-bound-bbd-query}, the following \Cref{lma:approx-sm-query} bounds the difference between $\rev_{\bE}(\bp)$ and $\rev_{\bbD}(\bp)$:

\begin{Lemma}
    \label{lma:approx-sm-query}
    Let $\bG$ and $\bH$ be two product distributions with support $[0, 1]^n$. Assume for every $i \in [n]$ and $v \in [0, 1]$, we have
    \[
    0 ~\leq~ F_{G_i}(v) - F_{H_i}(v) ~\leq~ \gamma \cdot (1 - F_{G_i}(v)) + \gamma/n.
    \]
     Then, for any price vector $\bp$, we have 
    \[
    \rev_{\bH}(\bp) \geq \rev_{\bG}(\bp) - 10 \gamma \cdot \log^2 (n/\gamma).
    \]
\end{Lemma}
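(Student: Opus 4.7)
} The plan is to follow the template of the proof of \Cref{lma:approx-sm} almost verbatim, with the CDF closeness now being linear in $1-F_{G_i}(v)$ rather than square-root. I would begin exactly as before: reduce to bounding $\sum_i \max\{0, P_{G_i}-P_{H_i}\}$ via \eqref{eq:rev-to-prob}, decompose $P_{G_i}-P_{H_i}$ into the two integrals \eqref{eq:part1} and \eqref{eq:part2}, and observe that \eqref{eq:part1} is $\leq 0$ by integration by parts using the hypothesis $F_{G_i}(v)\geq F_{H_i}(v)$ and the monotonicity of $Q_{H_i}$ (this step does not depend at all on the specific form of the CDF bound).

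The work lies in re-bounding $Q_{G_i}(\theta)-Q_{H_i}(\theta)$ with the new hypothesis, and the same two-case split on $S_{G_i}(\theta)$ should work, but with a different threshold. Set $\beta := 2\log(n/\gamma)$. In the \emph{large-$S_{G_i}$ case} ($S_{G_i}(\theta)\geq \beta$), \Cref{clm:prod-small} gives $Q_{G_i}(\theta)\leq \exp(-S_{G_i}(\theta)+1)\leq e\gamma^2/n^2$, so the contribution of this case to $\sum_i\int f_{G_i}(p_i+\theta)\cdot(Q_{G_i}-Q_{H_i})\,d\theta$ is at most $n\cdot e\gamma^2/n^2 \leq \gamma$. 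In the \emph{small-$S_{G_i}$ case}, I apply \Cref{clm:diff-small} with $b_j,c_j$ as in the original proof; the new linear hypothesis yields
\[
Q_{G_i}(\theta)-Q_{H_i}(\theta) \;\leq\; \sum_{j\neq i}\Bigl(\gamma(1-F_{G_j}(\cdot))+\gamma/n\Bigr) \;\leq\; \gamma\,S_{G_i}(\theta)+\gamma \;\leq\; \gamma\beta+\gamma,
\]
without any Jensen/square-root step. Then \Cref{clm:sum-integral-bound} with this $\beta$ bounds the relevant sum of indicator integrals by $\beta+1$, giving a total small-case contribution of at most $(\gamma\beta+\gamma)(\beta+1) = O(\gamma\log^2(n/\gamma))$. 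Summing the two cases and tracking constants carefully yields $\rev_{\bG}(\bp)-\rev_{\bH}(\bp)\leq 10\gamma\log^2(n/\gamma)$.

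The choice of threshold $\beta=2\log(n/\gamma)$ rather than $2\log\gamma^{-1}$ is the only genuinely new idea: it is forced by the fact that the Case-1 contribution is now a sum of $n$ independent items, so one needs $Q_{G_i}(\theta)\lesssim \gamma/n$ (not just $\gamma$) to obtain a total that is independent of $n$. I expect this bookkeeping to be the main place where care is required, since a naive port of \Cref{lma:approx-sm} would leave an unwanted $\gamma n$ term. Everything else, including the monotonicity of $S_{G_i}(\theta)$ in both arguments and the proof of \Cref{clm:sum-integral-bound}, is structurally unchanged.
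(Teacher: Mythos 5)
Your proposal is correct and follows essentially the same route as the paper: the same decomposition into the two integrals, the same integration-by-parts argument showing the first part is nonpositive, the same two-case split on $S_{G_i}(\theta)$ with the threshold raised to $2\log(n/\gamma)$ so that the large-$S$ case contributes $O(\gamma/n)$ per item, the linear (Jensen-free) bound via \Cref{clm:diff-small} in the small-$S$ case, and \Cref{clm:sum-integral-bound} to sum the indicator integrals. The constant bookkeeping you describe matches what the paper does.
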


We defer the proof of \Cref{lma:approx-sm-query} to \Cref{sec:approx-sm-query}, and first finish the proof of \Cref{lma:rev-close-query}:

\begin{proof}[Proof of \Cref{lma:rev-close-query}]
    Consider to define product distribution $\bbD$ via \eqref{eq:dhat-def-query}. Then, \Cref{lma:rev-close-query} guarantees that $\rev_{\bbD}(\bp) \geq \rev_{\tbD}(\bp) - 2\epsilon$.  Next, since \Cref{clm:cdf-bound-bbd-query} guarantees that $0 \leq F_{\breve D_i}(v) - F_{E_i}(v) \leq 4\epsilon\cdot (1 - F_{\breve D_i}(v)) + 4\epsilon/n$, applying \Cref{lma:approx-sm-query} with $\bG = \bbD$, $\bH = \bE$, and $\gamma = 4\epsilon$ gives
    \[
    \rev_{\bE}(\bp) \geq \rev_{\bbD}(\bp) - 40\epsilon \log^2(n/\epsilon).
    \]
    Combining the above inequality with $\rev_{\bbD}(\bp) \geq \rev_{\tbD}(\bp) - 2\epsilon$, we have
    \[
    \rev_{\bE}(\bp) ~\geq~ \rev_{\tbD}(\bp) - 40\epsilon \log^2(n/\epsilon) - 2\epsilon ~\geq~ \rev_{\tbD}(\bp) - 50\epsilon \log^2(n/\epsilon). \qedhere
    \]
\end{proof}

\subsection{Proof of \Cref{lma:approx-sm-query}}
\label{sec:approx-sm-query}

Without loss of generality, assume $p_1 \leq p_2 \leq \cdots p_n$. To prove \Cref{lma:approx-sm-query}, we follow the approach similar to that used for proving \Cref{lma:approx-sm}, which includes the notations we used for proving \Cref{lma:approx-sm}. Below we restate the notations for convenience: We define 
\[
F_{G_i}(v^-)~:=~ \lim_{x \to v^-} F_{G_i}(x) ~=~ \mathop{\pr}\limits_{X \sim G_i}[X < v]  \quad \text{and} \quad F_{H_i}(v^-)~:=~ \lim_{x \to v^-} F_{H_i}(x) ~=~ \mathop{\pr}\limits_{X \sim H_i}[X < v].
\]
With the notation of $F_{G_i}(v^-)$ and $F_{H_i}(v^-)$, for $\theta \in [0, 1 - p_i]$, we also define
\[
Q_{G_i}(\theta) ~:=~ \prod_{j < i} F_{G_j}(p_j + \theta) \cdot \prod_{j > i} F_{G_j}((p_j + \theta)^-)\quad \text{and} \quad Q_{H_i}(\theta) ~:=~ \prod_{j < i} F_{H_j}(p_j + \theta) \cdot \prod_{j > i} F_{H_j}((p_j + \theta)^-).
\]
to be the probability that item $i$ wins the auction with $\bG$/$\bH$ being the value distribution when the value of item $i$ is $p_i + \theta$, and further define 
\[
P_{G_i} := \int_0^{1 - p_i} f_{G_i}(p_i + \theta) \cdot Q_{G_i}(\theta) d\theta \quad \text{and} \quad P_{H_i} ~:=~ \int_0^{1 - p_i} f_{H_i}(p_i + \theta) \cdot Q_{H_i}(\theta) d\theta
\]
to be the probability that item $i$ wins the auction with $\bG$ or $\bH$ being the value distribution. Finally, we define
\[
S_{G_i}(\theta) ~:=~ \sum_{j \leq i} \left(1 - F_{G_j}(p_j + \theta)\right) + \sum_{j > i} \left(1 - F_{G_j}((p_j + \theta)^-)\right).
\]

To bound the difference between $\rev_{\bG}(\bp)$ and $\rev_{\bH}(\bp)$, note that
\[
\rev_{\bG}(\bp) - \rev_{\bH}(\bp) ~=~ \sum_{i \in [n]} p_i \cdot (P_{G_i} - P_{H_i}) ~\leq~ \sum_{i \in [n]}   \max\{0, P_{G_i} - P_{H_i} \}.
\]
Therefore, it's sufficient to give a non-negative upper bound for each $P_{G_i} - P_{H_i}$. We start from decomposing $P_{G_i} - P_{H_i}$ as
\begin{align}
    P_{G_i} - P_{H_i} ~=&~ \int_0^{1 - p_i} f_{G_i}(p_i + \theta) \cdot Q_{G_i}(\theta) d\theta - \int_0^{1 - p_i} f_{H_i}(p_i + \theta) \cdot Q_{H_i}(\theta) d\theta \notag \\
    ~=&~ \int_0^{1 - p_i} \left(f_{G_i}(p_i + \theta) - f_{H_i}(p_i + \theta)\right) \cdot Q_{H_i}(\theta) d\theta \label{eq:part1-query} \\
    &+ \int_0^{1 - p_i} f_{G_i}(p_i + \theta) \cdot \left( Q_{G_i}(\theta) - Q_{H_i}(\theta)\right)  d\theta \label{eq:part2-query}
\end{align}
Note that we have $\text{\eqref{eq:part1-query}} \leq 0$, where the proof is omitted as it is identical to the proof that shows $\text{\eqref{eq:part1}} \leq 0$. It remains to give an upper bound for \eqref{eq:part2-query}.

To bound \eqref{eq:part2-query}, we first bound $Q_{G_i}(\theta) - Q_{H_i}(\theta)$ via discussing the following two cases. The first case is $S_{G_i}(\theta) \geq 2\log (n/\gamma)$. In this case, we have
\begin{align*}
    Q_{G_i}(\theta) - Q_{H_i}(\theta) ~&\leq~ Q_{G_i}(\theta) \\
    ~&\leq~ \exp \left(-\sum_{j < i} (1 - F_{G_j}(p_j + \theta)) -  \sum_{j > i} (1 - F_{G_j}((p_j + \theta)^-))\right) \\
    ~&=~ \exp\left(- S_{G_i}(\theta) + (1 - F_{G_i}(p_i + \theta))\right) ~\leq~ \exp\left(- S_{G_i}(\theta) + 1\right) ~\leq~ \frac{\gamma}{n},
\end{align*}
where the second inequality follows from \Cref{clm:prod-small}.

The second case is $S_{G_i}(\theta) < 2\log (n/\gamma)$. In this case, we have
\begin{align*}
    Q_{G_i}(\theta) - Q_{H_i}(\theta) ~&\leq~ \sum_{j < i} \left(F_{G_j}(p_j + \theta) - F_{H_j}(p_j + \theta)\right) + \sum_{j > i} \left( F_{G_j}((p_j + \theta)^-) - F_{H_j}((p_j + \theta)^-)\right) \\
    ~&\leq~ \sum_{j < i} \left(\gamma \cdot (1 - F_{G_j}(p_j + \theta)) + \frac{\gamma}{n} \right) + \sum_{j > i} \left(\gamma \cdot (1 - F_{G_j}((p_j + \theta)^-)) + \frac{\gamma}{n} \right) \\
    ~&\leq~ \gamma + \gamma \cdot S_{G_i}(\theta) ~\leq~ \gamma \cdot  3\log (n/\gamma),
\end{align*}
where the first inequality follows from \Cref{clm:diff-small}. Combining with the first case, we have
\[
Q_{G_i}(\theta) - Q_{H_i}(\theta) ~\leq~ \frac{\gamma}{n} + 3\gamma \cdot \log (n/\gamma) \cdot \one[S_{G_i}(\theta) < 2\log (n/\gamma)].
\]
Applying the above inequality to \eqref{eq:part2-query}, we have
\begin{align*}
    \text{\eqref{eq:part2-query}} ~&\leq~ \int_0^{1 - p_i} f_{G_i}(p_i + \theta) \cdot \left(\frac{\gamma}{n} + 3\gamma \cdot \log (n/\gamma) \cdot \one[S_{G_i}(\theta) < 2\log (n/\gamma)]\right) d\theta \\
    ~&\leq~ \frac{\gamma}{n} + 3\gamma \cdot \log (n/\gamma) \cdot \int_0^{1 - p_i} f_{G_i}(p_i + \theta) \cdot \one[S_{G_i}(\theta) < 2\log (n/\gamma)] d\theta.
\end{align*}
Note that the above bound for \eqref{eq:part2-query} is non-negative. Therefore, summing the above bound for every $i \in [n]$ gives an upper bound for $\rev_{\bG}(\bp) - \rev_{\bH}(\bp)$. We have
\begin{align*}
    \rev_{\bG}(\bp) - \rev_{\bH}(\bp) ~&\leq~ n \cdot \frac{\gamma}{n} + 3\gamma \log (n/\gamma) \cdot \sum_{i \in [n]} \int_0^{1 - p_i} f_{G_i}(p_i + \theta) \cdot \one[S_{G_i}(\theta) < 2\log (n/\gamma)] d\theta \\ 
    ~&\leq~ \gamma + 3\gamma \log (n/\gamma) \cdot \left(2\log (n/\gamma) + 1\right) ~\leq~ 10 \gamma \cdot \log^2 (n/\gamma),
\end{align*}
where the second inequality follows from \Cref{clm:sum-integral-bound}.

\subsection{Proof of \Cref{clm:concentration}}
\label{sec:concentration}

\concentration*

\begin{proof}[Proof of \Cref{clm:concentration}]
    We first give a rough bound on the total number of $v$ that we estimated: the support size of $G$ is $\epsilon^{-2}$, so the final index $j$ is at most $\epsilon^{-2}$. In each while loop that starts from Line 3 of \Cref{alg:query-alg}, the binary search algorithm tests at most $\log_2(\epsilon^{-2}) \leq \epsilon^{-1}$ thresholds. Therefore, it's sufficient to show the inequality holds for a single $v$ with probability $1 - \epsilon^3\delta/n$, and applying the union bound over at most $\epsilon^{-3}$ tested $v$ proves \Cref{clm:concentration}.

    Now, we prove $|\hat F_{G}(v) - F_{G}(v)| \leq 0.1 \left(\epsilon \cdot (1 - F_G(v)) + \frac{\epsilon}{n}\right)$ holds with probability $1 - \epsilon^3\delta/n$ for some tested $v$. The proof for $F_H(v)$ is omitted, as the process of getting $F_H(v)$ for some $v = k_j$ is identical to that gets $\hat F_{G}(v)$. Note that $\hat F_{G}(v)$ is an unbiased estimator of $F_{G}(v)$ via $N$ queries. For $t \in [N]$, define mean-zero random variable $Y_t = \one[X_t \leq v] -  F_{G}(v)$, where $X_t \sim G$ is the hidden sample from $t$-th query. Then, we have $E[Y_t^2] = F_{G}(v) \cdot (1 - F_{G}(v)) ~\leq~ 1 - F_{G}(v)$. 
    
    Next, we discuss two cases. The first case is $1 - F_{G}(v) \leq 1/n$. In this case, the concentration  bound $0.1 \left(\epsilon \cdot (1 - F_G(v)) + \frac{\epsilon}{n}\right)$ is upper bounded by $0.2\epsilon/n$. Applying \Cref{thm:Bernstein-bounded} with $\varepsilon = 0.2\frac{\epsilon}{n}\cdot N$, $M = 1$, and $\sigma^2 = \sum_{t \in [N]} \E[Y_t^2] \leq N \cdot (1 - F_{G}(v)) \leq N/n$ gives
    \begin{align*}
        \pr\left[|\hat F_{G}(v) - F_{G}(v)| \geq 0.1 \left(\epsilon \cdot (1 - F_G(v)) + \frac{\epsilon}{n}\right) \right] ~&\leq~ \pr\left[\left|\sum_{t \in [N]} Y_t\right| \geq \varepsilon\right] \\
        ~&\leq~ 2\exp\left(- \frac{\varepsilon^2/2}{\sigma^2 + M\varepsilon/3}\right) \\
        ~&\leq~ 2\exp \left(- \frac{0.02 \cdot \epsilon^2 N^2/n^2}{N/n + 0.2\epsilon N/(3n)}\right) \\
        ~&\leq~ 2\exp \left(- \frac{0.01\cdot  \epsilon^2 N}{n}\right) \\
        ~&=~ 2\exp \left(-0.01 C \cdot \log (n/(\delta \epsilon)) \right) \\
        ~&\leq~ \frac{\delta^3 \epsilon^3}{n^3},
    \end{align*}
    where the last inequality holds when $C \geq 1000$.

    The second case is $1 - F_G(v) > 1/n$. In this case, 
     $0.1 \left(\epsilon \cdot (1 - F_G(v)) + \frac{\epsilon}{n}\right)$ is upper bounded by $0.2\epsilon\cdot (1 - F_G(v))$. Applying \Cref{thm:Bernstein-bounded} with $\varepsilon = 0.2\epsilon\cdot N$, $M = 1$, and $\sigma^2 = \sum_{t \in [N]} \E[Y_t^2] \leq N \cdot (1 - F_{G}(v))$ gives 
    \begin{align*}
        \pr\left[|\hat F_{G}(v) - F_{G}(v)| \geq 0.1 \left(\epsilon \cdot (1 - F_G(v)) + \frac{\epsilon}{n}\right) \right] ~&\leq~ \pr\left[\left|\sum_{t \in [N]} Y_t\right| \geq \varepsilon\right] \\
        ~&\leq~ 2\exp\left(- \frac{\varepsilon^2/2}{\sigma^2 + M\varepsilon/3}\right) \\
        ~&\leq~ 2\exp \left(- \frac{0.02 \cdot \epsilon^2 N^2\cdot (1 - F_G(v))^2}{N\cdot (1 - F_{G}(v)) \cdot (1 + 0.2\epsilon/3)}\right) \\
        ~&\leq~ 2\exp \left(-0.01\epsilon^2 (1 - F_G(v)) \cdot N \right) \\
        ~&=~ 2\exp \left(-0.01 C \log (n/(\delta \epsilon)) \cdot (1 - F_G(v)) n \right) \\
        ~&\leq~ \frac{\delta^3 \epsilon^3}{n^3},
    \end{align*}
    where the last inequality uses $1 - F_G(v) \geq 1/n$, and holds when $C \geq 1000$.

    Combining the two cases, we have 
    \[
    \pr\left[|\hat F_{G}(v) - F_{G}(v)| \leq 0.1 \left(\epsilon \cdot (1 - F_G(v)) + \frac{\epsilon}{n}\right) \right]  ~\geq~ 1 - \frac{\delta^3\epsilon^3}{n^3} ~\geq~ 1 - \frac{\epsilon^3 \delta}{n}. \qedhere
    \]
\end{proof}

\end{document}